\newtheorem{theorem}{Theorem}[section]
\newtheorem{proposition}[theorem]{Proposition}
\newtheorem{assumption}{Assumption}[section]
\newcommand{\expect}{\mathbf{E}}
\newcommand{\vecomega}{\boldsymbol{\omega}}
\newcommand{\argmax}{\operatornamewithlimits{argmax}}
\newenvironment{pfof}[1]{\vspace{1ex}\noindent{\bf Proof of #1}\hspace{0.5em}}
{\qed\vspace{1ex}}
\date{}
\begin{document}
	\title{Optimal Design of Automated Market Makers on Decentralized Exchanges}
	
	\author{Xue Dong He\thanks{Department of Systems Engineering and Engineering Management, The Chinese University of Hong Kong, Shatin, N.T., Hong Kong SAR. Email: xdhe@se.cuhk.edu.hk} \and Chen Yang\thanks{Department of Systems Engineering and Engineering Management, The Chinese University of Hong Kong, Shatin, N.T., Hong Kong SAR. Email: cyang@se.cuhk.edu.hk. Corresponding Author.} \and Yutian Zhou\thanks{Department of Systems Engineering and Engineering Management, The Chinese University of Hong Kong, Shatin, N.T., Hong Kong SAR. Email: YutianZhouSEEM@link.cuhk.edu.hk}}
	
	\maketitle
	
	\begin{abstract}
		Automated market makers are a popular mechanism used on decentralized exchange, through which users trade assets with each other directly and automatically through a liquidity pool and a fixed pricing function. The liquidity provider contributes to the liquidity pool by supplying assets to the pool, and in return, they earn trading fees from investors who trade in the pool. We propose a model of optimal liquidity provision in which a risk-averse liquidity provider decides the amount of wealth she would invest in the decentralized market to provide liquidity in a two-asset pool, trade in a centralized market, and consume in multiple periods. We derive the liquidity provider's optimal strategy and the optimal design of the automated market maker that maximizes the liquidity provider's utility. We find that the optimal unit trading fee increases in the volatility of the fundamental exchange rate of the two assets. We also find that the optimal pricing function is chosen to make the asset allocation in the liquidity pool efficient for the liquidity provider.
	\end{abstract}
	
	{\textbf{\\Keywords: }automated market maker, decentralized exchange, dynamic programming}\\[1em]
	\noindent{\textbf{Conflicts of Interest:} The authors declare that there are no conflicts of interest regarding this research.}\\[1em]
	\noindent{\textbf{Data Availability:} The data supporting the findings of this paper is available on Binance Market Data (\url{https://data.binance.vision/})}\\[1em]
	\noindent{\textbf{Funding Information:} He acknowledges the support from the Hong Kong Research Grant Council GRF (14218022). Yang acknowledges the support from the Hong Kong Research Grant Council ECS (24207621) and GRF (14207723).}
	
	\clearpage
	\section{Introduction}
	In recent years, Decentralized Finance (DeFi) has attracted much practical and academic attention due to its characteristic of not relying on central authority institutions. Playing a critical role in DeFi, decentralized exchanges (DEXs) allow for peer-to-peer transactions of crypto-assets without intermediaries, and its potential advantages are exemplified by the collapse of one of the biggest centralized exchanges (CEXs), FTX, in November 2022. DEXs have been an important part of the cryptocurrency exchange in practice. Indeed, in 2023, the proportion of DEX to CEX spot trade value mostly remains in the range of 10$\%$ to 20$\%$.\footnote{See \url{https://www.theblock.co/data/decentralized-finance/dex-non-custodial}.} \label{footnote:data}
	
	The most dominant mechanism adopted by DEXs is the Automated Market Maker (AMM). In contrast to the traditional trading mechanisms that are based on limit order books and widely used on CEXs, AMMs operate based on specific algorithms implemented via smart contracts, where the asset prices for trades on DEX are determined by the amount of deposits of these assets on the DEX through a pricing function. A group of participants, named liquidity providers (LPs), deposit their assets on the DEX to form a liquidity pool and earn trading fees from trades in the pool in return. LPs play a similar role to market makers on CEX in that both provide liquidity in the market. A higher level of liquidity provided by LPs benefits the investors by reducing the trading cost, and a higher trading volume benefits the LPs with higher trading fees. For details of the AMM mechanism, see \citet{FabiPrat2023:TheEconomicsOfCFMM}, \citet{AngerisEtal2019:AnAnalysisOfUniswap}, \citet{AngerisChitraEvans2020:WhenDoesTheTailWag}, as well as an introduction to AMM in Section \ref{sect:AMMintro}.
	
	There has been a growing literature on the study of AMMs in recent years; see Section \ref{subse:Literature} for a review. However, the following important yet challenging questions have not been fully answered. First, how do we design an AMM that encourages liquidity provision? Second, how should such a design adapt to the return distributions of the assets in the pool? In the context of DEXs with two assets in the liquidity pool, we tackle these questions using a framework with the following three building blocks. As the base block, we model investors' optimal trading decisions based on the liquidity pool on DEX and asset prices on CEX. The investors include arbitrageurs who exploit the difference between the exchange rates on DEX and CEX and liquidity traders who trade to maximize their profits based on their heterogeneous beliefs on the asset values. As the second building block, we model a representative LP's optimal dynamic liquidity provision, where the return from the liquidity provision depends on investors' decisions on DEX and the change of asset values on CEX. The LP can also invest in a risk-free asset and in the same assets as in the liquidity pool, but on CEX. Thus, the LP's decision is the allocation of her wealth to the liquidity provision on DEX, investment on CEX, and the risk-free asset, and her objective is to maximize the discounted sum of the expected utility of consumption in an infinite horizon. As the final building block, we focus on the optimal design of the AMM by choosing the unit trading fee and pricing function to maximize the LP's value.
	
	The main contributions of the paper are threefold. First, we are the first to consider LPs' optimal liquidity provision with a combination of the following features: (i) risk-averse LPs, (ii) outside investment opportunities, and (iii) dynamic settings. Most studies in the literature on AMM assume risk-neutral LPs. However, assets in a liquidity pool are typically cryptocurrencies, whose prices are usually highly volatile, and liquidity provision and withdrawal in the pool need to be validated on blockchain and thus cannot be conducted in a high frequency. Therefore, liquidity provision can entail a large amount of risk, and it is reasonable to take LPs' risk aversion into account.
	
	Most studies of AMM on DEX in the literature either isolate LPs' liquidity provision on DEX from other investment opportunities or implicitly assume that LPs' investment on CEX has to follow the same allocation as in the liquidity pool on DEX. For example, impermanent loss and loss-versus-rebalancing, which capture the LPs' losses to arbitrageurs in the liquidity pool and are commonly discussed in the literature, are defined to be the difference between liquidity pool value and the value of the static portfolio with the same initial holding of assets as in the liquidity pool or the replicating portfolio with the same holding of assets as in the liquidity pool at all time; see, for instance, \citet{EngelHerlihy2022:PresentationAndPublication}, \citet{AngerisChitra2020:ImprovedPriceOracles}, \citet{MilionisMoallemiRoughgardenZhang2024:AutomatedMarketMaking}, and \citet{CarteaDrissiMonga2023:PredictableLosses}. Thus, the implicit assumption behind impermanent loss and loss-versus-rebalancing is that the LP compares the liquidity pool on DEX with the same portfolio of assets on CEX. In practice, LPs can hold any portfolio of assets on CEX, while the allocation of assets in the liquidity pool is determined by the pricing function and the exchange rate of assets in the pool, the latter closely following the exchange rate on CEX. Thus, the allocation of assets in the liquidity pool can be inefficient because it can be different from the optimal allocation of assets if the LPs invest on CEX only, e.g., the Merton's allocation if the LPs maximize the expected utility of their returns. This inefficiency leads to an {\em opportunity cost} when LPs provide liquidity on DEX. We are the first in the literature to consider this opportunity cost, and we will show that it plays a critical role in the optimal design of the AMM.
	
	While most studies of liquidity provision consider a single-period setting, we consider a multi-period setting and thus study the LP's dynamic liquidity provision. For a detailed comparison of our model with the existing literature, see Section \ref{subse:Literature}.
	
	Our second contribution is to provide the optimal design of the unit trading fee and pricing function of AMM by maximizing the LP's value, and show how the optimal design depends on the return distributions of the assets in the pool. Using numerical studies, we find that the optimal unit trading fee is increasing in the volatility of the fundamental exchange rate of the two assets but is not sensitive to the mean and fundamental volatility of each individual asset in the pool, holding the fundamental exchange rate volatility constant. This result can be explained as follows: By providing liquidity in the pool, the LP (i) earns trading fees from the liquidity traders, (ii) suffers a loss to the arbitrageurs, which is the impermanent loss or loss-versus-rebalancing net the trading fee, and (iii) suffers an opportunity cost. The loss to the arbitrageurs occurs when the asset prices on CEX change, making the prices on DEX stale. In this case, the arbitrageurs take advantage of the mispricing and trade against the liquidity pool, thereby leading to a loss borne by the LP. While the trading fee has little impact on the opportunity cost, it affects both the LP's fee revenue and her loss to the arbitrageurs. With a higher unit trading fee, the LP's loss to the arbitrageurs becomes smaller. The trading volume of the liquidity traders decreases in the unit trading fee, so the LP's fee revenue from the liquidity traders can first increase and then decrease in the unit trading fee. Thus, the optimal unit trading fee exists. When the fundamental exchange rate volatility becomes higher, the mispricing due to the asset price change becomes larger, so the LP can potentially suffer a larger loss to arbitrageurs. Thus, in designing the unit trading fee, the LP would focus more on discouraging the arbitrageurs than earning fee revenues from the liquidity traders. This implies that the optimal unit trading fee becomes larger. Because the mean and volatility of each individual asset, when holding the fundamental exchange rate volatility constant, have little impact on the size of mispricing, they do not have a significant impact on the optimal unit trading fee. Using price data from Binance and fee selection data from Uniswap v3, we find that our model implication is consistent with the fee choice in the market.

	We design the optimal pricing function among the class of constant geometric mean market makers (CGMMM), a single-parameter pricing function commonly used in the market. On the one hand, the pricing function determines the allocation of the assets in the liquidity pool. This allocation may be inefficient for the LP, creating an opportunity cost. On the other hand, the pricing function determines price slippage, which stands for the cost of trading due to the price impact resulting from the pricing function. Price slippage discourages trading and thus is detrimental to the LP's fee revenues from the liquidity traders, but it reduces the LP's loss to the arbitrageurs. The single parameter of the CGMMM determines the amount of the LP's fee revenue from the liquidity traders net her loss to the arbitrageurs. Thus, the LP faces a tradeoff between choosing a parameter value to minimize the opportunity cost and choosing a parameter value to maximize her fee revenue net the loss to the arbitrageurs. We find that the opportunity cost is the dominant factor: The LP would choose the pricing function to make the asset allocation in the DEX liquidity pool efficient. This result still holds even if we allow the LP to short-sell assets on CEX. The LP would still choose the pricing function to have an efficient asset allocation in the liquidity pool, and she would short-sell the assets on CEX to fund her investment on DEX.

	Our third contribution is to prove the existence and uniqueness of the solution of the dynamic programming equation for the LP's decision problem. Our problem features a special structure where the LP's investment decision occurs at a frequency higher than the utility she derives from her consumption.  With the infinite horizon setting, the solution of the dynamic programming equation becomes the fixed point of a certain operator. Similar structures also appear in the recent studies \citet{KeppoReppenSoner2021:DiscreteDividendPayments} and \citet{DaiKouSonerYang2023:LeveragedETF}. We prove the existence and uniqueness of the fixed point using the concave operator theory because the dynamic programming operator in our problem is not a contraction mapping.
	
	The remainder of the paper is organized as follows: In Section \ref{subse:Literature}, we review the literature on AMM. In Section \ref{sect:AMMintro}, we give an introduction to AMM. In Section \ref{sect:mainresult}, we propose our model and derive the main theoretical results. In Section \ref{sect:numerics}, we find the optimal design of the liquidity pool by numerical studies. Section \ref{sect:conclusion} concludes. All proofs are in the Appendix.

	\subsection{Related Literature}\label{subse:Literature}
	
	Our modeling of AMM participants in every single period is similar to the settings in \citet{CapponiJia2021:TheAdoptionOfBlockchainBased}, \citet{hasbrouck2026need},  \citet{LeharParlour2023:DecentralizedExchange}, \citet{Aoyagi2022:LiquidityProvision}, and \citet{AoyagiIto2024:CoexistingExchangePlatforms}. They, however, consider single-period settings and have a different focus from us. \citet{CapponiJia2021:TheAdoptionOfBlockchainBased} consider a group of risk-neutral LPs who decide the amount of liquidity provision to maximize the expected profit in one period. They find that the LPs provide liquidity if and only if the fundamental exchange rate volatility is sufficiently high. The authors then find the best mixture of the linear pricing function and the constant product market maker (CPMM) that maximizes the social welfare, defined as the ex-ante expected payoff of LPs, arbitrageurs, and investors, plus the gas fees earned by validators. \citet{hasbrouck2026need} consider a group of risk-neutral investors who can choose between liquidly provision on DEX with AMM and an alternative outside investment opportunity and a group of investors who can choose to trade on DEX and CEX. The authors show that a higher unit trading fee on DEX can increase liquidity provision and thus increase trading volume on DEX due to decreasing price impact. \citet{LeharParlour2023:DecentralizedExchange} consider risk-neutral LPs who maximize their expected net profit by providing liquidity on DEX with AMM. The authors show that the amount of liquidity provision is decreasing in the volatility of the exchange rate of the assets in the pool and increasing in the amount of uninformed trading, and they conduct empirical studies supporting their model implication. \citet{Aoyagi2022:LiquidityProvision} considers the competition of risk-neutral LPs and study both competitive and strategic liquidity provision strategies. \citet{AoyagiIto2024:CoexistingExchangePlatforms} study the coexistence of two exchange platforms: a CEX with a limit order book and a DEX with AMM. Both liquidity traders and arbitrageurs can choose the trading venue. The LP's objective is to maximize a weighted average of the impermanent loss and the fee revenue minus a constant exogenous opportunity cost. The authors show that the equilibrium amount of liquidity provision, which makes the LP break even, is hump-shaped in the exchange rate volatility.

	Almost all studies of liquidity provision do not take into account LPs' opportunity costs. In a simulation study in \citet{AngerisEtal2019:AnAnalysisOfUniswap}, the authors assume that LPs can choose any portfolio in a centralized market. Compared to their work, we formally propose the notion of opportunity cost and solve the LP's liquidity provision strategy exactly using dynamic programming.  \citet{EvansAngerisChitra2021:OptimalFees} consider an LP who wants to track a given target portfolio weight. If we set this target to be the optimal portfolio that the LP would have held if she could invest on CEX only, then the objective of tracking this target portfolio can be understood as the LP's incentive to minimize the opportunity cost.

	There are a few studies of the impact of trading fees on liquidity provision. For example, \citet{hasbrouck2026need} show that a higher unit trading fee increases liquidity provision and that the LP's value is first increasing and then decreasing in the fee level.  There are, however, few studies of optimal trading fees. \citet{BergaultEtal2024:AutomatedMarketMakers} consider an LP who maximizes the expected value of fee revenue plus a mean-variance value of the pool value at the end of a continuous-time horizon, but they do not directly consider the LP's loss to arbitrageurs. They assume that the LP can choose different fee levels for different times, trading sizes, and asset deposits in the pool and derive the optimal fee process using a stochastic control approach. Compared to their work, we consider not only the LP's fee revenue but also her loss to arbitrageurs and opportunity cost, and we consider a constant unit trading fee as in the current market practice. \citet{MilionisEtal2023:AMMFees} study the impact of trading fees on the LP's loss to arbitrageurs on AMM. They find that the fees scale down the loss by a fraction of the times that arbitrageurs find a profitable trade.
	
	The literature has observed that the curvature of the pricing function determines the price slippage and that a higher level of price slippage reduces both LPs' fee revenue from liquidity traders and LPs' losses to arbitrageurs. See, for instance, \citet{AngerisChitraEvans2020:WhenDoesTheTailWag}, \citet{CapponiJia2021:TheAdoptionOfBlockchainBased}, and \citet{FabiPrat2023:TheEconomicsOfCFMM}. Thus, LPs face a trade-off between high price slippage, which implies a smaller amount of loss to arbitrageurs, and lower price slippage, which implies more fee revenues from liquidity traders. There are a few studies of optimally designing the pricing function in the literature.  \citet{CapponiJia2021:TheAdoptionOfBlockchainBased} derive the optimal mixture of the constant sum market maker and the CPMM to maximize the aggregate welfare of LPs, arbitrageurs, investors, and blockchain validators.  \citet{LiptonSepp2021:AutomatedMarketMaking} use simulation to compute the price slippage and the LP's profit and loss defined as the fee revenue net the impermanent loss on AMM with a weighted average of the CPMM and the constant sum market maker. They find that both the price slippage and the LP's profit and loss become higher when the pricing function moves closer to the CPMM. \citet{GoyalEtal2023:FindingTheRightCurve} derive the optimal pricing function that minimizes the expected inefficiency of AMM, defined as the probability of a trade with a fixed size and tolerance of price slippage not being executed on AMM. \citet{CapponiChen2024:OptimalPricingFunction} find the optimal pricing function that maximizes the expected utility of the LP's payoff under the same single-period setting as in \citet{CapponiJia2021:TheAdoptionOfBlockchainBased}. \citet{MilionisMoallemiRoughgarden2023:ComplexityApproximation} and  \citet{MilionisMoallemiRoughgarden2024:AMyersonianFramework} describe exchange mechanisms, including DEX with AMM, by demand curves submitted by LPs to exchanges. The authors then study the optimal design of demand curves. \citet{AdamsEtal2024:amAMM} propose a new AMM design, the auction-managed AMM, that can benefit liquidity providers by reducing losses to arbitrageurs and increasing revenue from liquidity traders. The authors show that under certain assumptions, this AMM has higher liquidity in equilibrium than the standard AMM. There are also some studies of providing axiomatic foundations for various pricing functions; see, for instance, \citet{SchlegelEtal2023:AxiomsForConstantFunction} and \citet{FrongilloEtal2024:AnAxiomatic}.

	There are some existing studies on liquidity provisions on DEX with AMM. \citet{ChenDengFuZou2023:LiquidityProvision} consider static liquidity provision and show that the LPs provide liquidity if and only if the exchange rate volatility falls into an interval. \citet{FanEtal2022:DifferentialLiquidity} and \citet{HeimbachEtal2022:RisksAndReturns} consider static liquidity provision in Uniswap v3 \citep{AdamsEtal2021:UniswapV3}. By contrast, we consider dynamic liquidity provision and the optimal design of the liquidity pool. \citet{FanEtal2023:StrategicLiquidityProvision},  \citet{BarOnMansour2023:UniswapLiquidityProvision}, and \citet{CarteaDrissiMonga2023:DecentralisedFinance} consider dynamic liquidity provision on Uniswap v3, but they do not consider optimal design of the liquidity pool. \citet{BayraktarCohenNellis2024:DEX} study the competition between multiple LPs through mean-field games.

	\citet{AngerisEvansChitra2023:Replicating} study the portfolio value of the liquidity pool on AMM. The authors show that any concave, nonnegative, nondecreasing, and 1-homogeneous portfolio value function can replicated by some pricing function on AMM. \citet{FabiPrat2023:TheEconomicsOfCFMM} use results in the consumer theory to study the properties of CFMMs. In particular, they show that the cost of trading, captured by the slippage and proportional to the curvature of the bonding curve, is inversely proportional to the divergent loss of the LP. \citet{SabateVidalesSiska2022:TheCaseForVariableFees} study the optimal trading strategy of an investor on DEX using reinforcement learning algorithms. There are also studies of trading in multiple pools and pool competition; see, for instance, \citet{Fritsch2021:ANoteOnOptimalFees}, \citet{FritschKaserWattenhofer2022:TheEconomicsOfAMM}, and \citet{ZhangFanFangCai2022:EconomicAnalysis}. \citet{KrishnamachariEtal2021:DynamicCurves} and \citet{PortTiruviluamala2022:MixingConstantSum} discuss dynamic pricing functions. \citet{CarteaDrissiMonga2024:DecentralizedFinanceAMMExecutionSpeculation} study an optimal liquidation problem for an investor who wants to exchange a large position of an asset for another asset on DEX with AMM in a fixed period of time. The authors consider a continuous-time setting and formulate the problem as a stochastic control problem similar to the optimal liquidation problem on a centralized market \citep{AlmgrenChriss2001:OptimalExecution}. \citet{CarteaDrissiMonga2023:ExecutionAndStatisticalArbitrage} extend \citet{CarteaDrissiMonga2024:DecentralizedFinanceAMMExecutionSpeculation} to the case of liquidating multiple assets. \citet{CohenEtal2023:InefficiencyOfCFMs} consider the trading fees gained by the LPs on DEX in a continuous-time model and derive an upper bound for the fees. This upper bound resembles the loss-versus-rebalance proposed by \citet{MilionisMoallemiRoughgardenZhang2024:AutomatedMarketMaking}. \citet{AngerisEtal2022:ConstantFunctionMarketMakers} discuss trading and liquidity provision in multiple-asset pools in a single-period setting. \citet{CarteaEtal2024:AutomatedMarketMakers} propose the so-called arithmetic liquidity pool and study impermanent loss and optimal action of the liquidity provider in this pool. On decentralized lending, leasing, and renting protocol Olympus, by controlling the allocation to bonds and the inflation rate, \citet{ChitraEtal2022:DeFiLiquidityManagment} design a liquidity provision framework that improves protocol efficiency.

	\section{An Introduction to the Automated Market Maker}\label{sect:AMMintro}
	The AMM has become the dominant trading mechanism on DEXs in recent years. Implementable via smart contracts, AMMs provide continuous liquidity over the exchange and conduct trading directly and automatically between participants. One key functionality of AMMs is to determine the marginal exchange rate over the DEX. To achieve this, AMMs adopt the pricing function, typically a function $F$ of the amounts of asset deposits in the DEX liquidity pool.
	
	To illustrate, in the following, we consider a DEX with two assets, $A$ and $B$, and denote the current deposits of these two assets in the liquidity pool as $y^A$ and $y^B$, respectively. Furthermore, denote the amount of asset $A$ and $B$ \emph{withdrawn} from the liquidity pool as $D^A$ and $D^B$, respectively. Consequently, $-D^i$ means the amount of asset $i$ deposited to the liquidity pool.
	
	\subsection{Trading on DEX with AMM}
	We first consider trading on the DEX with AMM. Suppose an investor wants to exchange asset $A$ for an amount $D^B>0$ of asset $B$. In order to acquire this amount of asset $B$, the investor needs to deposit an amount $-D^A>0$ of asset $A$ (exclusive of fee) to the liquidity pool in exchange, where $D^A$ is determined by the pricing function $F$ through
	\begin{align}\label{AMMrule}
		F(y^A-D^A,y^B-D^B)= F(y^A,y^B).
	\end{align}
	
	There are many types of AMMs with different pricing functions in practice. The most popular one is the CPMM, whose pricing function is
	\begin{align*}
		F(y^A,y^B) = y^Ay^B.
	\end{align*}
	CPMM is used in Uniswap (\url{uniswap.org}), which accounts for over half of the total volume in the DEX market\footnote{See \url{https://www.theblock.co/data/decentralized-finance/dex-non-custodial}.}. Another type of AMM is the Constant Geometric Mean Market Maker (CGMMM) in Balancer (\url{balancer.fi}), with the pricing function
	\begin{align}\label{eq:CGMMM}
		F(y^A,y^B) = (y^A)^{\eta}(y^B)^{1-\eta}
	\end{align}
	for some $\eta\in (0,1)$. Here, $\eta$ and $1-\eta$ represent the weights of assets $A$ and $B$, respectively, and the special case of $\eta=1/2$ makes the CGMMM equivalent to the CPMM. A comprehensive overviews of AMMs can be found in, e.g., \citet{XuEtal2023:SoKDeX}, \citet{Mohan2022:AMM}, and \citet{AngerisChitra2020:ImprovedPriceOracles}.
	
	Trading on the DEX incurs trading fees. As a result, following the previous example of acquiring $D^B$ amount of asset $B$, the actual amount of asset $A$ that the investor needs to deposit is $-(1+f)D^A$, which is higher than $-D^A$ as calculated from \eqref{AMMrule}. Here, $f>0$ represents the unit trading fee. As a result, the amounts of the two assets in the pool after this trade become $y^A-(1+f)D^A$ and $y^B - D^B$, respectively. In general, with $D^i$ representing the amount of asset $i$ withdrawn from the liquidity pool and $-D^i$ representing the amount deposited into the pool, the post-trade deposits of the two assets in the liquidity pool are given by
	\begin{align}\label{change_deposit}
		y^A-(1+f\mathbf{1}_{D^A<0})D^A,~~y^B-(1+f\mathbf{1}_{D^B<0})D^B.
	\end{align}
	Note that the value of the pricing function $F$ will increase after the trade due to the trading fee; that is,
		$
		F(y^A-(1+f\mathbf{1}_{D^A<0})D^A,y^B - (1+f\mathbf{1}_{D^B<0})D^B)> F(y^A-D^A,y^B-D^B)=F(y^A,y^B).
		$
	It is worth mentioning that we follow the protocol in Uniswap V2 and assume that the trading fees are deposited in the pool. By contrast, in some existing studies of AMM in the literature, e.g., in \citet{LeharParlour2023:DecentralizedExchange} and \citet{MilionisMoallemiRoughgardenZhang2024:AutomatedMarketMaking}, consistent with the protocol in Uniswap v3, the trading fees are assumed to be paid to the LPs directly and thus do not change the value of the pricing function.

	\subsection{Liquidity Provision on DEX with AMM}
	Besides investors exchanging one asset for the other, another type of participants on AMM is the LP. The LPs provide liquidity to the DEX by depositing both assets into the liquidity pool and remove liquidity by withdrawing both assets from the pool. The LPs' investment return on DEX is determined by the change of deposits in the liquidity pool due to the trading activities conducted by investors and change of fundamental values of the assets.
	
	Denote the LP withdrawal (deposit if negative) amounts of asset $A$ and $B$ as $D^A$ and $D^B$, respectively, such that the subsequent amounts of asset deposits in the pool become $y^A-D^A$ and $ y^B-D^B$. Consequently, the value of pricing function $F$ also changes along with deposit or withdrawal:
		$
		F(y^A-D^A,y^B-D^B)\neq F(y^A,y^B).
		$
	In practice, the amount $D^A$ and $D^B$ may be subject to constraints in various exchanges. A common requirement used in the most popular DEX Uniswap is that the deposit or withdrawal has to keep the ratio between the asset amounts in the pool unchanged, i.e.,
		$
		D^A/D^B = y^A/y^B.
		$
	
	Both LPs and investors are essential to the health of the DEX with AMM. LPs inject and maintain the liquidity on DEX, and a high level of liquidity benefits and attracts investors. On the other hand, a high level of activity from the investors increases the trading fees, which in turn contributes to the pool and encourages LPs' participation. Therefore, how to design a good DEX with AMM, including the choice of unit trading fee and pricing function, is an important yet challenging question, which we will subsequently explore.

	\section{The Model and Main Results}\label{sect:mainresult}
	In this section, we describe LP's optimal investment and consumption model and our main theoretical results.
	
	\subsection{The Market}\label{sect:sub:market}
	We consider risky assets $A$ and $B$, and a risk-free asset. The risky assets are traded both on a DEX where trading follows a rule dictated by an AMM, and on CEX. Following \citet{hasbrouck2026need}, we assume that CEX prices the risky assets at their fundamental values.
	
	The market participants include a representative liquidity provider and two types of investors: liquidity traders and arbitrageurs. The LP allocates her wealth in asset $A$ and asset $B$ on both the CEX and the DEX, thereby providing liquidity to the DEX, and gains intertemporal utility over consumption. Both the liquidity traders and arbitrageurs maximize their profits on the DEX by exchanging one risky asset for the other, subject to the trading fees. The arbitrageurs and the LP trade based on perfect information about the true exchange rate between the risky assets on CEX, while the liquidity traders trade based on heterogeneous beliefs on the exchange rate, which is a noisy version of the true exchange rate. The detailed objectives of the LP and investors will be described subsequently. Note that liquidity traders are commonly assumed in recent studies of AMM; see, for instance, \citet{MilionisMoallemiRoughgardenZhang2024:AutomatedMarketMaking}, \citet{CapponiJia2021:TheAdoptionOfBlockchainBased},  \citet{hasbrouck2026need}, and \citet{LeharParlour2023:DecentralizedExchange}.

	\begin{figure}[htbp!]
		\centering
		\includegraphics[width = 1\textwidth]{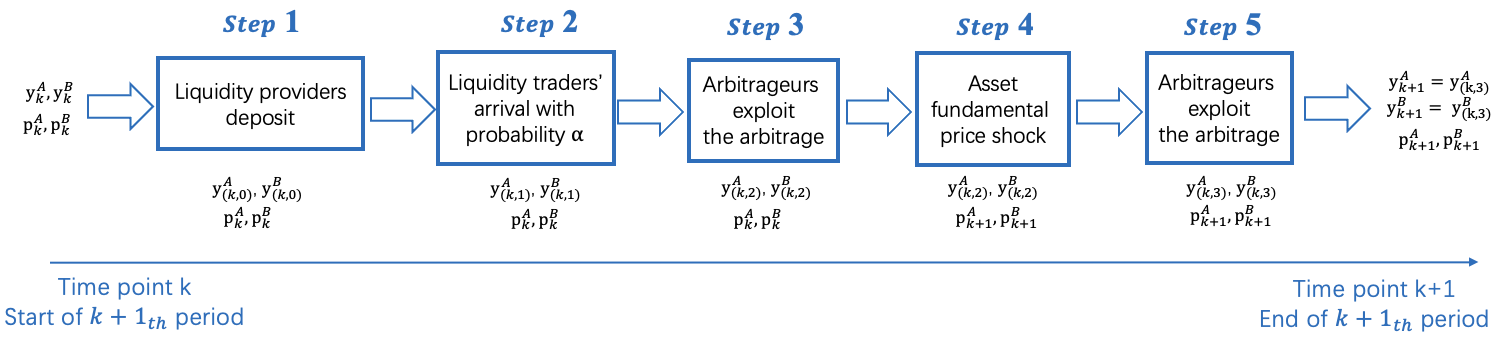}
		\caption {Timeline of trading activities in the $(k+1)$-th period. Here, $y^i$ and $p^i$ denote the deposit amount on DEX and fundamental value on CEX, respectively, of asset $i\in \{A,B\}$. The subscript represents the timeline: time $k$ is the start of the $(k+1)$-th period, and $k+1$ is the end of the $(k+1)$-th period and also the start of the $(k+2)$-th period; $(k,j)$ for $j=0,1,2,3$ represent the time points right after deposit or withdrawal by the LP (Step 1), liquidity traders' trade (Step 2), arbitrageurs' pre-shock trade (Step 3), and arbitrageurs' post-shock trade (Step 5), respectively.}\label{fi:timeline}
	\end{figure}
	
	The trading occurs in discrete time over an infinite horizon. The timeline in each period contains five steps as summarized in Figure \ref{fi:timeline}. Upon observing the deposit amounts $y_k^i$ and fundamental asset values $p_k^i$, $i\in\{A,B\}$ at the start of the $(k+1)$-th period, the LP adjusts her deposit on DEX in Step 1, resulting in an updated deposit $y^i_{(k,0)}$. In Step 2, liquidity traders may arrive and trade on the DEX, and their arrival is captured via the Bernoulli variable $\xi_{k+1}$ with arrival probability $\alpha$. The liquidity traders' trading then leads to an updated deposit amount $y^i_{(k,1)}$. In Step 3, arbitrageurs trade if there is any arbitrage opportunity between the DEX and CEX, resulting in an updated deposit amount $y^i_{(k,2)}$. In Step 4, shocks to the fundamental values of both assets arrive on CEX, changing the values to $p^i_{k+1}$. We denote by 	
	\begin{align}\label{fundaShock}
		R^i_{k+1}:=p^i_{k+1}/p^i_k,\quad i=A,B,
	\end{align}
	the random gross returns of assets $A$ and $B$ due to the shock. The fundamental values change triggers another round of potential trades from the arbitrageurs in Step 5, updating the deposit to $y^i_{(k,3)}$. The deposit amounts $y^i_{(k,3)}$ and the post-shock fundamental values $p^i_{k+1}$ are carried forward as $y^i_{k+1}$ and $p^i_{k+1}$ at the end of the $(k+1)$-th period, which is also the beginning of the $(k+2)$-th period. Steps 3 and 5 in our model are, respectively, the reversion arbitrage and the rebalancing arbitrage in \citet{MilionisMoallemiRoughgardenZhang2024:AutomatedMarketMaking}.
	
	For model tractability, we impose the following assumptions on AMM, which are similar to those imposed in \citet{CapponiJia2021:TheAdoptionOfBlockchainBased} and \citet{FabiPrat2023:TheEconomicsOfCFMM}.
	
	\begin{assumption}[Pricing Formula]\label{as:PricingFunction} The pricing formula $F:(0,\infty)\times (0,\infty)\to \mathbb{R}$ satisfies
		\leavevmode
		\begin{enumerate}
			\item[(i)] $F$ is continuously differentiable with $F_{x}(x,y)>0, F_{y}(x,y)>0$ for all $x,y>0$.
			\item[(ii)] $F_{x}(x,y)/F_{y}(x,y) =G(x/y)$ for some continuously differentiable univariate function $G$ on $(0,\infty)$ satisfying
				$
				G'(z)<0,\; \forall z>0,~~ \lim_{z\downarrow 0}G(z) = \infty,~~ \lim_{z\uparrow \infty}G(z) = 0.
				$
			\item[(iii)] If $F(x,y)=F(x',y')$, then $F(cx,cy)=F(cx',cy')$ for all $c>0$.
		\end{enumerate}
	\end{assumption}
	
	Assumption \ref{as:PricingFunction}-(i) ensures that starting from any deposit amount $y^A>0$ and $y^B>0$, for any sufficiently small, nonnegative $D^A$, there exists a unique $D^B\le 0$, denoted as $h(D^A)$, such that \eqref{AMMrule} holds, and $h(D^A)$ is continuous and strictly decreasing in $D^A$. Therefore, we can consider the maximal interval $[0,K)$ on which $h(D^A)$ is defined. We must have $K\in (0,y^A]$ and that $\lim_{D^A\uparrow K} h(D^A)=-\infty$ if $K<y^A$. Therefore, $\lim_{D^A\uparrow K} (y^A-D^A)/(y^B-h(D^A))=0$, showing that an investor can acquire any amount for asset $A$ from the pool as long as the acquisition does not deplete asset $A$ in the pool. Similarly, an investor can acquire any amount for asset $B$ from the pool as long as the acquisition does not deplete asset $B$. The more an investor wants to exchange for one asset in the pool, the larger the amount of the other asset she needs to deposit in the pool.
	
	With Assumption \ref{as:PricingFunction}-(i), for any fixed deposits in the pool $y^A>0$ and $y^B>0$, the {\em average exchange rate} of acquiring asset $A$ for a small amount $D_A>0$ is well defined, which is $-D_B/D_A$ with $D_A$ and $D_B$ satisfying \eqref{AMMrule}. We define the {\em marginal exchange rate} of acquiring asset $A$ to be the average exchange rate when the amount $D_A$ of asset $A$ to be acquired is infinitesimally small. Applying the implicit function theorem, we can derive from \eqref{AMMrule} that the marginal exchange rate of acquiring asset $A$ is
		$
		F_x(y^A,y^B)/F_y(y^A,y^B).
		$
	Similarly, we can define the average exchange rate of acquiring asset $B$ for a small amount and the marginal exchange rate of acquiring asset $B$, and the latter is $F_y(y^A,y^B)/F_x(y^A,y^B)$, which is the reciprocal of the marginal exchange rate of acquiring asset $A$.
	
	Assumption \ref{as:PricingFunction}-(ii) imposes that the marginal exchange rate of acquiring asset $A$ depends on $y^A$ and $y^B$ only through their ratio $y^A/y^B$. Moreover, the marginal exchange rate is strictly decreasing in $y^A/y^B$ and converges to 0 and $\infty$ when $y^A/y^B$ goes to $\infty$ and 0, respectively. In other words, when the amount of asset $A$ in the liquidity pool becomes larger relative to the amount of asset $B$, it becomes cheaper to acquire asset $A$ from the pool, and the exchange rate ranges from $\infty$ to 0 as the percentage of the amount of asset $A$ in the pool varies from 0 to 100\%. Consequently, it is prohibitively costly for an investor to acquire the whole amount of an asset in the pool.
	
	Assumption \ref{as:PricingFunction}-(iii) can be reformulated as follows: For any $y^A>0$, $y^B>0$, $D^A$ and $D^B$ such that $F(y^A-D^A,y^B-D^B) = F(y^A,y^B)$, we have $F(cy^A-cD^A,cy^B-cD^B) = F(cy^A,cy^B)$. Therefore, Assumption \ref{as:PricingFunction}-(iii) stipulates that if we split a liquidity pool into several identical mini-pools, then any trade in the original pool can be split into trades in the mini-pools, and the same holds if we aggregate several identical pools. This assumption ensures that the optimal amount of assets to trade in the pool is proportional to the asset deposits in the pool; see Proposition \ref{le:TradersProblem} below.
	
	It is straightforward to verify that the CGMMM pricing function \eqref{eq:CGMMM} satisfies Assumption \ref{as:PricingFunction} with
	\begin{align}
		G(z) = \frac{\eta}{(1-\eta)z},\quad z>0.\label{eq:CGMMMMarginalExchangeRateFun}
	\end{align}

	Note that a monotone transformation of the pricing function does not change the exchange rate on AMM. Indeed, consider any strictly increasing, continuously differentiable function $g$. Define the new pricing function $\tilde F(x,y):=g(F(x,y))$. Then, the pricing equation \eqref{AMMrule} holds if and only if
		$
		\tilde F(y^A-D^A,y^B-D^B)= \tilde F(y^A,y^B).
		$
	It is straightforward to verify that Assumption \ref{as:PricingFunction} holds for $F$ if and only if it holds of $\tilde F$.
	
	\subsubsection{Discussions on the Market Model Setup}\label{subsubse:DiscussionModelSetup}
	
	As previously mentioned, we follow \citet{hasbrouck2026need} and \citet{CapponiJia2021:TheAdoptionOfBlockchainBased} to assume that the asset prices on the CEX are the same as their fundamental values. In other words, these fundamental values are fully discovered on CEX. In consequence, the LP has no incentive to do speculative trading based on private information or the discrepancy between the CEX prices and fundamental values of the assets. We are thus able to focus on the trade-off of trading fees and impermanent loss from the LP's perspective and on the AMM design. It would be an interesting future research direction to consider the case in which the LP can also do speculative trading. Nevertheless, we expect that the main insights of our model on the trade-off of trading fees and impermanent loss and the optimal AMM design would still hold.

	The distinction between arbitrageurs and liquidity traders is meant to capture different trading motives, not a literal difference in access to CEX price data. Arbitrageurs trade on DEX--CEX price discrepancies and use the CEX price as the reference price. Liquidity traders represent non-arbitrage order flow on the DEX. Such order flow may come from liquidity needs, portfolio rebalancing, wallet or settlement constraints, heterogeneous valuations, or a preference for on-chain execution. This separation follows the standard market-microstructure distinction between informed or arbitrage trading and liquidity or noise trading, as in \citet{kyle1985continuous}, and it is also consistent with related AMM models such as \citet{CapponiJia2021:TheAdoptionOfBlockchainBased}.
		
	The role of liquidity traders is also economically important for our AMM-design problem. Price shocks alone would generate arbitrage opportunities and hence arbitrage losses for LPs, but they would not generate the non-arbitrage trading volume from which LPs earn fee income. Removing liquidity traders from our model would therefore remove one side of the key tradeoff studied in the paper: the tradeoff between fee revenue from liquidity demand and losses to arbitrageurs. In  consequence, providing liquidity on DEXs would only incur losses compared to investing in DEXs, so LPs would not be willing to provide liquidity on DEXs.

	In practice, the LP needs to pay transaction fees when trading on the CEX and pay gas fees when providing or withdrawing liquidity on the DEX. Our model abstracts from all these frictions, as in many classical portfolio selection problems. Transaction fees on CEXs are often small. For instance, Binance's regular spot trading fee is around 10 basis points and can be lower for high volume users, and gas fees have been decreasing in recent years and do not scale with dollar transaction size.\footnote{For example, see \url{https://www.binance.com/en/fee/trading} for spot maker and taker fees; see also \cite{barbon2026quality} for a discussion on the gas fees.} Therefore, it is not restrictive to ignore CEX trading fees and gas fees in our model if the LP does not trade frequently, i.e., if the length of each period in our model is not too short.

	On the other hand, the trading fee on the DEX is not a friction to the LP, because the LP does not trade against the pool. Instead, it is a fee paid by investors who trade against the pool to the LP. It is necessary to model the trading fee in our problem, because as previously mentioned, without the fee revenue from the liquidity traders, the LP would not provide liquidity on the DEX.

	Finally, Assumption \ref{as:PricingFunction} is imposed mainly for the purpose of tractability, as will be seen in Sections \ref{sect:sub:tradersProblem} and \ref{subse:LPProblem}. As aforementioned, this assumption is satisfied by the most commonly used class of pricing functions in the market, namely, CGMMM. It is also satisfied by some other function classes, for instance, the following class of homothetic pricing functions:
	\begin{align*}
			F(y^A,y^B)
			=
			\left(\theta (y^A)^\rho+(1-\theta)(y^B)^\rho\right)^{1/\rho},
			\qquad 
			\theta\in(0,1),\quad \rho<1,\quad \rho\neq 0.
		\end{align*}
		On the other hand, it is worth mentioning that pricing functions that do not satisfy Assumption \ref{as:PricingFunction} are also studied in some recent AMM literature; see for instance \citet{CarteaEtal2024:AutomatedMarketMakers}, \citet{SchlegelEtal2023:AxiomsForConstantFunction}, and \citet{FrongilloEtal2024:AnAxiomatic}.

	\subsection{Investors' Problem}\label{sect:sub:tradersProblem}
	We first study the optimal investment strategy of the investors (the liquidity traders and arbitrageurs) corresponding to Steps 2, 3, and 5 of Figure \ref{fi:timeline}. We assume that the investors aim to maximize their profits by trading in the DEX. Following \citet{EvansAngerisChitra2021:OptimalFees} and \citet{CapponiJia2021:TheAdoptionOfBlockchainBased}, we assume that the investors have the trading objectives based on their beliefs of the values of assets $A$ and $B$, denoted as $a$ and $b$, respectively.
	
	Denote by $y^i$ the deposit of asset $i\in \{A,B\}$ on DEX immediately before the trade (i.e. pre-trade deposit). Then, the investors' investment problem is formulated as
	\begin{align}\label{eq:TraderProblem}
		\begin{array}{rl}
			\underset{{D^{A}, D^{B}}}{\max} & a (1+f\mathbf 1_{D^{A}<0})D^{A}+ b (1+f\mathbf 1_{D^{B}<0}) D^{B} \\ 			\text{subject to} & F\left(y^A, y^B\right)=F\left(y^A-D^{A}, y^B-D^{B}\right),\\
			& y^A-D^{A} > 0,~~y^B-D^{B} > 0.
		\end{array}
	\end{align}
	Here, the investors' strategy is represented by the trading amount of both risky assets $D^{A}$ and $D^{B}$, subject to the constraints of the AMM trading rule described in Section \ref{sect:AMMintro}. The objective function $a(1+f\mathbf 1_{D^{A}<0})D^{A}+ b (1+f\mathbf 1_{D^{B}<0}) D^{B}$, based on \eqref{change_deposit}, describes the post-fee profit of the strategy measured by the multiplications of asset values and traded asset amounts.
	
	We provide the solution to the problem \eqref{eq:TraderProblem} in the following proposition, and the proof is relegated to Appendix \ref{proof}. Note that this problem has been solved in the literature, e.g., in \citet{EvansAngerisChitra2021:OptimalFees} and \citet{CapponiJia2021:TheAdoptionOfBlockchainBased}, albeit under slightly different assumptions. Here, we not only derive the solution but also prove some properties of the solution that will be used in solving the LP's decision problem.
	
	\begin{proposition}\label{le:TradersProblem}
		Let Assumption \ref{as:PricingFunction} hold.
		\begin{enumerate}
			\item[(i)] Denote by $G^{-1}$ the inverse function of $G$ and fix any $\alpha>0$ and $\beta>0$. If $G(\beta)/\alpha<1/(1+f)$, then the following system of equations
			\begin{align}
				\frac{1- d^A}{1-d^B} = \frac{1}{\beta} G^{-1}\big(\alpha/(1+f)\big),\quad  F\big(\beta(1-d^A),1-d^B\big)=F(\beta,1)\label{eq:EquationsOptimalTradingForAssetA}
			\end{align}
			admits a unique solution $(d^A,d^B)$ with $d^A\in (0,1)$ and $d^B<0$. If $G(\beta)/\alpha>(1+f)$, then the following system of equations
			\begin{align}
				\frac{1- d^A}{1-d^B} = \frac{1}{\beta} G^{-1}\big(\alpha(1+f)\big),\quad  F\big(\beta(1-d^A),1-d^B\big)=F(\beta,1)\label{eq:EquationsOptimalTradingForAssetB}
			\end{align}
			admits a unique solution $(d^A,d^B)$ with $d^A<0$ and $d^B\in (0,1)$. Denote by $(\varphi^A(\alpha,\beta),\varphi^B(\alpha,\beta))$ the solution in the above two cases and set $(\varphi^A(\alpha,\beta),\varphi^B(\alpha,\beta))=(0,0)$ when $G(\beta)/\alpha \in [1/(1+f),(1+f)]$. Then, $\varphi^i(\alpha,\beta)$ is continuous in $(\alpha,\beta)\in (0,\infty)\times (0,\infty)$ and takes value 0 in the region with $G(\beta)/\alpha\in  [1/(1+f),(1+f)]$, $i\in\{A,B\}$, and $\varphi^A(\alpha,\beta)$ is strictly increasing and $\varphi^B(\alpha,\beta)$ is strictly decreasing in both $\alpha$ and $\beta$ in the region with $G(\beta)/\alpha\notin [1/(1+f),(1+f)]$.
			\item[(ii)] For any $y^A>0$, $y^B>0$, $a>0$, and $b>0$, \eqref{eq:TraderProblem} admits a unique solution $(\hat D^A,\hat D^B)$ where $\hat D^i = \varphi^i(a/b,y^A/y^B) y^i$, $i\in \{A,B\}$.
		\end{enumerate}
	\end{proposition}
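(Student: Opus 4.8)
My plan is to solve the normalized problem of part (i) by studying the trader's profit as a function of a single variable along the feasible curve. Since a strictly increasing transformation of $F$ changes neither Assumption \ref{as:PricingFunction} nor any exchange rate, and since Assumption \ref{as:PricingFunction}-(iii) makes every level set a radial dilation of every other, I may assume without loss of generality that $F$ is positively homogeneous of degree one; this lets me later invoke Euler's identity $xF_x+yF_y=F$. By Assumption \ref{as:PricingFunction}-(i) the constraint set $\{F(u,v)=F(\beta,1)\}$ in \eqref{eq:TraderProblem} (with $y^B=1$, $y^A=\beta$) is the graph of a strictly decreasing $C^1$ map $v=v(u)$ with $v'(u)=-F_x(u,v)/F_y(u,v)=-G(u/v)$. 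I parametrize feasible trades by the post-trade holding $u$ of asset $A$, so that $u=\beta$ is the no-trade point, $u<\beta$ is the acquire-$A$ branch (where $D^A>0$, $D^B<0$), and $u>\beta$ is the acquire-$B$ branch.

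\textbf{Global concavity and the trichotomy.} Writing the objective as a function $J(u)$, it is linear in $(u,v)$ on each branch, with $J'(u)=-a+b(1+f)G(u/v)$ on the acquire-$A$ branch and $J'(u)=-a(1+f)+bG(u/v)$ on the acquire-$B$ branch. Since $u/v$ increases and $G$ strictly decreases along the curve, $J'$ is strictly decreasing on each branch, so $J$ is strictly concave on each. The key observation is that the fee makes the slope jump \emph{downward} at the no-trade point, $J'(\beta^-)-J'(\beta^+)=f\bigl(a+bG(\beta)\bigr)>0$, so $J$ is in fact globally concave on the whole feasible curve. Hence the maximizer is pinned down by the signs of the two one-sided derivatives at $\beta$: the optimum lies strictly in the acquire-$A$ branch iff $J'(\beta^-)<0\Leftrightarrow G(\beta)/\alpha<1/(1+f)$; strictly in the acquire-$B$ branch iff $J'(\beta^+)>0\Leftrightarrow G(\beta)/\alpha>1+f$; and at $u=\beta$ exactly when $G(\beta)/\alpha\in[1/(1+f),1+f]$. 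This is precisely the claimed trichotomy, with $\varphi^i=0$ on the middle region.

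\textbf{Characterization, existence, uniqueness, and comparative statics.} On the acquire-$A$ branch the interior condition $J'(u)=0$ reads $u/v=G^{-1}(\alpha/(1+f))=:r$; substituting $u=\beta(1-d^A)$, $v=1-d^B$ and adjoining the level-set equation yields exactly \eqref{eq:EquationsOptimalTradingForAssetA}, and symmetrically \eqref{eq:EquationsOptimalTradingForAssetB}. For existence and uniqueness I solve along the ray $(cr,c)$: the map $c\mapsto F(cr,c)$ is continuous and strictly increasing, and since $G(\beta)/\alpha<1/(1+f)$ forces $r<\beta$, bracketing at $c=1$ (where $F(r,1)<F(\beta,1)$) and at $c=\beta/r$ (where $F(\beta,\beta/r)>F(\beta,1)$) gives a unique $c^*\in(1,\beta/r)$; then $u^*=c^*r\in(r,\beta)$ and $v^*=c^*>1$ deliver $d^A\in(0,1)$ and $d^B<0$ as asserted. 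Continuity of $\varphi^i$ follows from the implicit function theorem at interior solutions (the Jacobian is nonsingular by $F_x,F_y>0$, $G'<0$) and from $r\to\beta$, $c^*\to1$, $d^A\to0$ at the region boundary, matching the value $0$. For monotonicity in $\alpha$, raising $\alpha$ lowers $r=G^{-1}(\alpha/(1+f))$, sliding the optimum along the fixed level curve toward smaller $u$, so $\varphi^A$ rises and $\varphi^B$ falls. For monotonicity in $\beta$, differentiating $F(c^*r,c^*)=F(\beta,1)$ gives $dc^*/d\beta=F_x(\beta,1)/(rF_x^*+F_y^*)$, and $\tfrac{d}{d\beta}\varphi^A>0$ reduces to $\beta F_x(\beta,1)<u^*F_x^*+v^*F_y^*$; by Euler's identity the right side equals $F(u^*,v^*)=F(\beta,1)=\beta F_x(\beta,1)+F_y(\beta,1)$, and $F_y(\beta,1)>0$ closes the strict inequality.

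\textbf{Part (ii) and the main obstacle.} Part (ii) then follows by scaling: passing to post-trade variables $(u,v)$ and dividing by $y^B$, Assumption \ref{as:PricingFunction}-(iii) sends $F(u,v)=F(y^A,y^B)$ to $F(u/y^B,v/y^B)=F(y^A/y^B,1)$, while the objective merely rescales by the positive constant $y^B$ and depends on $(a,b)$ only through $\alpha=a/b$; matching with part (i) gives $\hat D^i=\varphi^i(a/b,y^A/y^B)\,y^i$. I expect the main obstacle to be handling the fee-induced kink cleanly, namely verifying that the downward slope jump at the no-trade point yields global concavity and extracts exactly the thresholds $1/(1+f)$ and $1+f$, together with the $\beta$-monotonicity, which is the one step that genuinely requires the homogeneity furnished by Assumption \ref{as:PricingFunction}-(iii) rather than only parts (i)--(ii).
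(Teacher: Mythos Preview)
Your proof is correct and follows essentially the same route as the paper: reduce to the normalized problem via Assumption~\ref{as:PricingFunction}-(iii), analyze the objective by first-order conditions along the level curve, obtain the trichotomy and the characterizing equations, and read off continuity and monotonicity from those equations. The one technical difference worth flagging is your monotonicity-in-$\beta$ step: you pass without loss of generality to a degree-one homogeneous $F$ and invoke Euler's identity, whereas the paper applies Assumption~\ref{as:PricingFunction}-(iii) directly, dividing the constraint $F(\beta(1-\hat d^A),1-\hat d^B)=F(\beta,1)$ through by $\beta$ to obtain $F\bigl(1-\hat d^A,(1-\hat d^A)/r\bigr)=F(1,1/\beta)$, from which monotonicity of $\hat d^A$ follows by inspection; this avoids having to justify (as can be done, but you leave implicit) that a $C^1$ degree-one homogeneous representative of $F$ exists.
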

	
	Proposition \ref{le:TradersProblem} shows that the optimal trading amount is proportional to the pre-trade deposit. Moreover, this proportion depends on the investor's belief of the asset values $a$ and $b$ and the amount of deposit $y^A$ and $y^B$ only through their respective ratios $a/b$ and $y^A/y^B$. Furthermore, there is no trading when $G(y^A/y^B)/(a/b)$ is in the region $[1/(1+f),(1+f)]$, which means that the difference between the marginal exchange rate on AMM and the exchange rate believed by the investor is small compared with the unit trading fee. When $G(y^A/y^B)/(a/b)<1/(1+f)<1$ (resp. $>1+f>1$), which means that the marginal exchange rate for asset $A$ on AMM is lower (resp. higher) than the exchange rate believed by the investor by a sufficiently large margin, the investor chooses to exchange for a positive amount of asset $A$ (resp. asset B) on AMM.
	
	By Proposition \ref{le:TradersProblem}, the post-trade deposit of asset $i$, $i\in\{A,B\}$ in the liquidity pool is
	\begin{align}\label{eq:AfterTradeDeposit}
		y^i-(1+f\mathbf 1_{\hat D^i<0})\hat D^i &= \phi^i(a/b,y^A/y^B) y^i,\quad i\in\{A,B\},\\ \label{eq:PostTradeDeposit}
		\text{where }\phi^i(\alpha,\beta)&:=1-(1+f\mathbf 1_{\varphi^i(\alpha,\beta)<0}) \varphi^i(\alpha,\beta).
	\end{align}
	Thus, the post-trade deposit ratio is determined by the pre-trade deposit ratio $y^A/y^B$ and the investor's belief of the exchange rate $a/b$. By Proposition \ref{le:TradersProblem}, for $i\in\{A,B\}$, $\varphi^i(\alpha,\beta)$ is continuous in $(\alpha,\beta)\in (0,\infty)\times (0,\infty)$ and takes value 0 in the region with $G(\beta)/\alpha\in  [1/(1+f),(1+f)]$, and $\varphi^A(\alpha,\beta)$ is strictly increasing and $\varphi^B(\alpha,\beta)$ is strictly decreasing in both $\alpha$ and $\beta$ in the region with $G(\beta)/\alpha\notin [1/(1+f),(1+f)]$. It is then straightforward to see that $\phi^i(\alpha,\beta)$ is continuous in $(\alpha,\beta)\in (0,\infty)\times (0,\infty)$ and takes value 1 in the region with $G(\beta)/\alpha\in  [1/(1+f),(1+f)]$, and $\phi^A(\alpha,\beta)$ is strictly decreasing and $\phi^B(\alpha,\beta)$ is strictly increasing in both $\alpha$ and $\beta$ in the region with $G(\beta)/\alpha\notin [1/(1+f),(1+f)]$. Moreover, because $\varphi(\alpha,\beta)<1$, $\phi^i$ takes values in $(0,+\infty)$.
	
	The ratio of the marginal exchange rate on AMM and the investor's belief of the exchange rate after the trade (post-trade ratio, left hand side of \eqref{eq:ExchangeRateRatioTransition}) depends on the ratio before the trade (pre-trade ratio, first argument of $H$ on right hand side of \eqref{eq:ExchangeRateRatioTransition}) as follows
	\begin{align}
		G\left(\frac{y^A-(1+f\mathbf 1_{\hat D^A<0})\hat D^A}{y^B-(1+f\mathbf 1_{\hat D^B<0})\hat D^B}\right)/(a/b)= H\left(G\left(\frac{y^A}{y^B}\right)/(a/b),a/b\right),\label{eq:ExchangeRateRatioTransition}
	\end{align}
	where
	\begin{align}
		H(s,z):=\begin{cases}
			G\left(\frac{1-\varphi^B\left(z,G^{-1}\big(sz\big)\right)}{1-(1+f)\varphi^B\left(z,G^{-1}\big(sz\big)\right)} G^{-1}\left(z/(1+f)\right)\right)/z, &\text{if } s<1/(1+f)\\
			s,& \text{if }s\in[1/(1+f),1+f],\\
			G\left(\frac{1-(1+f)\varphi^A\left(z,G^{-1}\big(sz\big)\right)}{1-\varphi^A\left(z,G^{-1}\big(sz\big)\right)} G^{-1}\left(z(1+f)\right)\right)/z, &\text{if }s>(1+f).
		\end{cases}\label{eq:ExchangeRateRatioTransitionFunction}
	\end{align}
	Note that $H(s,z)\in[1/(1+f),1+f]$ for all $s>0$ and $z>0$, i.e., the post-trade ratio lies in $[1/(1+f),1+f]$; otherwise the trading will continue. If the pre-trade ratio is in the region $[1/(1+f),1+f]$, there is no trade, so the post-trade ratio remains the same.
	For $s<1/(1+f)$, by Proposition \ref{le:TradersProblem}-(i), $\varphi^B\left(z,G^{-1}\big(sz\big)\right)<0$, so $H(s,z)>1/(1+f)$. Thus, when the pre-trade ratio is strictly less than $1/(1+f)$, the investor exchanges for asset $A$ and the trading brings the ratio not onto the boundary of the no-trading region $[1/(1+f),1+f]$ but into the interior of the no-trading region. This is because the trading cost is deposited into the liquidity pool and thus changes the exchange rate on AMM. Moreover, by Proposition \ref{le:TradersProblem}-(i), $\varphi^B\left(z,G^{-1}\big(sz\big)\right)$ is strictly increasing in $s$ for $s<1/(1+f)$ and thus $H(s,z)$ is strictly decreasing in $s$ for $s<1/(1+f)$. Consequently, the post-trade ratio is strictly decreasing in the pre-trade ratio on the left side of the no-trading region. This is because the further the pre-trade ratio is away from the no-trading region from the left side, the more valuable asset $A$ is from the investor's perspective and thus a larger amount of asset $A$ is exchanged by the investor, which incurs a larger trading cost deposited in the liquidity pool and pushes the post-trade ratio further into the no-trading region. Similarly, when the pre-trade ratio is strictly larger than $1+f$, trading occurs and brings the post-trade ratio into the interior of the no-trading region, and the latter is strictly decreasing in the former.	
	
	Recall the timeline of trading in the $(k+1)$-th period as described in Figure \ref{fi:timeline}. In Step 2, the liquidity trader arrives based on the Bernoulli random variable $\xi_{k+1}$. We assume that her belief of the exchange rate of the two assets is
	\begin{align}
		a/b = (p^{A}_k/p^B_k) I_{k+1},\label{eq:LiquidityTradersBelief}
	\end{align}
	where $I_{k+1}$ is a random variable representing heterogeneity of investors' beliefs. As a result, by \eqref{eq:AfterTradeDeposit},
	\begin{align}
		y^i_{(k,1)}=\left[(1-\xi_{k+1}) + \xi_{k+1}\phi^i\left((p^{A}_k/p^B_k) I_{k+1},y^A_{(k,0)}/y^B_{(k,0)}\right)\right] y^i_{(k,0)},\quad i\in \{A,B\}.\label{eq:PostTradeDepositStep2}
	\end{align}
	Consequently,
	\begin{align}
		\frac{ y^A_{(k,1)}}{ y^B_{(k,1)}} = \left[(1-\xi_{k+1}) + \xi_{k+1}\frac{\phi^A\left((p^{A}_k/p^B_k) I_{k+1},y^A_{(k,0)}/y^B_{(k,0)}\right)}{\phi^B\left((p^{A}_k/p^B_k) I_{k+1},y^A_{(k,0)}/y^B_{(k,0)}\right)}\right]\frac{ y^A_{(k,0)}}{ y^B_{(k,0)}}.\label{eq:PostTradeDepositRatioStep2}
	\end{align}

	We assume that the arbitrageurs have correct beliefs of the exchange rate. Therefore, in Step 3, we set $a/b=p_k^A/p_k^B$ and in Step 5 (after the price shock), we set $a/b=p_{k+1}^A/p_{k+1}^B$. Consequently, by \eqref{eq:AfterTradeDeposit}, we have
	\begin{align}
		&y^i_{(k,2)} = \phi^i\left(p^{A}_k/p^B_k,y^A_{(k,1)}/y^B_{(k,1)}\right)y^i_{(k,1)},\quad i\in\{A,B\},\label{eq:PostTradeDepositStep3}\\
		&\frac{ y^A_{(k,2)}}{ y^B_{(k,2)}} = \frac{\phi^A\left(p^{A}_k/p^B_k ,y^A_{(k,1)}/y^B_{(k,1)}\right)}{\phi^B\left(p^{A}_k/p^B_k ,y^A_{(k,1)}/y^B_{(k,1)}\right)}\times \frac{ y^A_{(k,1)}}{ y^B_{(k,1)}},\label{eq:PostTradeDepositRatioStep3}\\
		&y^i_{(k,3)} = \phi^i\left(p^{A}_{k+1}/p^B_{k+1},y^A_{(k,2)}/y^B_{(k,2)}\right)y^i_{(k,2)},\quad i\in\{A,B\},\label{eq:PostTradeDepositStep5}\\
		&\frac{ y^A_{(k,3)}}{ y^B_{(k,3)}} = \frac{\phi^A\left(p^{A}_{k+1}/p^B_{k+1} ,y^A_{(k,2)}/y^B_{(k,2)}\right)}{\phi^B\left(p^{A}_{k+1}/p^B_{k+1} ,y^A_{(k,2)}/y^B_{(k,2)}\right)}\times \frac{ y^A_{(k,2)}}{ y^B_{(k,2)}}.\label{eq:PostTradeDepositRatioStep5}
	\end{align}

	\subsection{LP's Problem}\label{subse:LPProblem}
	
	\subsubsection{LP's Investment Return on DEX}
	Recall the timeline of trading in each period as described in Figure \ref{fi:timeline}.
	The gross return of the LP's investment on DEX in the $(k+1)$-th period, denoted as $R^M_{k+1}$, is then
	\begin{align}
		&R^M_{k+1} 
		= \frac{p_{k+1}^Ay_{(k,3)}^A+p_{k+1}^By_{(k,3)}^B}{p_{k}^Ay_{(k,0)}^A+p_{k}^By_{(k,0)}^B}\notag\\
		=& \frac{p_{k}^Ay_{(k,1)}^A+p_{k}^By_{(k,1)}^B}{p_{k}^Ay_{(k,0)}^A+p_{k}^By_{(k,0)}^B}\cdot  \frac{p_{k}^Ay_{(k,2)}^A+p_{k}^By_{(k,2)}^B}{p_{k}^Ay_{(k,1)}^A+p_{k}^By_{(k,1)}^B}\cdot \frac{p_{k+1}^Ay_{(k,2)}^A+p_{k+1}^By_{(k,2)}^B}{p_{k}^Ay_{(k,2)}^A+p_{k}^By_{(k,2)}^B}\cdot
		\frac{p_{k+1}^Ay_{(k,3)}^A+p_{k+1}^By_{(k,3)}^B}{p_{k+1}^Ay_{(k,2)}^A+p_{k+1}^By_{(k,2)}^B}, 
		\label{eq:AMMGrossReturn}
	\end{align}
	where the four terms on the right hand side are denoted as $R^M_{(k+1,i)},~i=1,2,3,4$, respectively. Here, $R^M_{(k+1,1)}$, $R^M_{(k+1,2)}$, and $R^M_{(k+1,4)}$ stand for the change of the liquidity pool value due to trading by investors and arbitrageurs in Steps 2, 3, and 5, respectively, and  $R^M_{(k+1,3)}$ stands for the change of the liquidity pool value due to the price shock in Step 4. By \eqref{fundaShock}, \eqref{eq:PostTradeDepositStep2}, \eqref{eq:PostTradeDepositStep3}, and \eqref{eq:PostTradeDepositStep5}, we derive
	\begin{align}
		R^M_{(k+1,1)} 
		&= 1-\xi_{k+1} + \xi_{k+1}\frac{\frac{p_{k}^A}{p_{k}^B}\frac{y_{(k,0)}^A}{y_{(k,0)}^B}\phi^A\left(\frac{p_{k}^A}{p_{k}^B} I_{k+1},\frac{y_{(k,0)}^A}{y_{(k,0)}^B}\right)+\phi^B\left(\frac{p_{k}^A}{p_{k}^B} I_{k+1},\frac{y_{(k,0)}^A}{y_{(k,0)}^B}\right)}{\frac{p_{k}^A}{p_{k}^B}\frac{y_{(k,0)}^A}{y_{(k,0)}^B}+1},\label{eq:AMMReturnStep2}\\
		R^M_{(k+1,2)}
		&= \frac{\frac{p_{k}^A}{p_{k}^B}\frac{y_{(k,1)}^A}{y_{(k,1)}^B}\phi^A\left(\frac{p_{k}^A}{p_{k}^B},\frac{y_{(k,1)}^A}{y_{(k,1)}^B}\right)+ \phi^B\left(\frac{p_{k}^A}{p_{k}^B},\frac{y_{(k,1)}^A}{y_{(k,1)}^B}\right)}{\frac{p_{k}^A}{p_{k}^B}\frac{y_{(k,1)}^A}{y_{(k,1)}^B}+1},\label{eq:AMMReturnStep3}\\
		R^M_{(k+1,3)}&=R^{B}_{k+1}\frac{\frac{R^{A}_{k+1}}{R^{B}_{k+1}}\frac{p_{k}^A}{p_{k}^B}\frac{y_{(k,2)}^A}{y_{(k,2)}^B}+1}{\frac{p_{k}^A}{p_{k}^B}\frac{y_{(k,2)}^A}{y_{(k,2)}^B}+1}\label{eq:AMMReturnStep4}\\
		R^M_{(k+1,4)} 
		&= \frac{\frac{R^{A}_{k+1}}{R^{B}_{k+1}}\frac{p_{k}^A}{p_{k}^B}\frac{y_{(k,2)}^A}{y_{(k,2)}^B}\phi^A\left(\frac{R^{A}_{k+1}}{R^{B}_{k+1}}\frac{p_{k}^A}{p_{k}^B},\frac{y_{(k,2)}^A}{y_{(k,2)}^B}\right)+ \phi^B\left(\frac{R^{A}_{k+1}}{R^{B}_{k+1}}\frac{p_{k}^A}{p_{k}^B},\frac{y_{(k,2)}^A}{y_{(k,2)}^B}\right)}{\frac{R^{A}_{k+1}}{R^{B}_{k+1}}\frac{p_{k}^A}{p_{k}^B}\frac{y_{(k,2)}^A}{y_{(k,2)}^B}+1}.\label{eq:AMMReturnStep5}
	\end{align}

	\subsubsection{LP's Actions}
	We follow the classical framework of \citet{MertonR:69ps,MertonR:71oc} to model LP's decision problem.
	The LP's actions consist of four components: consumption, investment in assets $A$ and $B$ on DEX (i.e., liquidity provision), investment in assets $A$ and $B$ on CEX, and investment in the risk-free asset. The LP rebalances the investment portfolio every period but consumes every $N\ge 1$ periods: At the beginning of $(jN+1)$-th period, i.e., at time $jN$, the LP chooses a percentage $c_{jN}$ of her wealth to consume, for every $j=0,1,\dots$. To ease exposition, for $k\notin \{jN:j=0,1,\dots\}$, we assume that the percentage of wealth consumed to be $c_k=0$. 
	
	At time $k$, $k=0,1,\cdots$, denote by $X_k$ the LP's {\em pre-consumption} wealth, and therefore the LP's post-consumption wealth is $(1-c_k)X_k$. The LP then decides the amount to be invested on DEX and CEX. For the investment on DEX, we impose the following protocol of LP's deposit into and withdrawal from DEX, which is used in Uniswap.
	\begin{assumption}[Deposit and Withdrawal Protocol]\label{as:LPDeposit}
		The LP deposits and withdraws assets from the liquidity pool by keeping the marginal exchange rate unchanged.
	\end{assumption}

	Assumption \ref{as:LPDeposit} is a protocol assumption on liquidity provision and withdrawal, not a restriction to a particular pricing function.
	With this assumption and Assumption \ref{as:PricingFunction}-(ii) in place, we have
		\begin{align}
			y_{k}^A/y_{k}^B=y_{(k,0)}^A/y_{(k,0)}^B.\label{eq:PostTradeDepositRatioStep1}
		\end{align}
	As a result, for the investment on DEX, the LP only needs to decide the total value invested on DEX or, equivalently, the percentage $\omega^M_k$ of the post-consumption wealth to be invested on DEX. Subsequently, the deposits in the liquidity pool after the LP's investment become
	\begin{align}\label{initialDepo}
		y^i_{(k,0)}=\omega^M_k (1-c_k) X_k \times \frac{y^i_k}{y^A_kp^A_k+y^B_k p^B_k},\quad i\in\{A,B\}.
	\end{align}
	Therefore, the LP deposits into (resp. withdraws from) the liquidity pool if and only if $\omega^M_k (1-c_k) X_k - (y^A_k p^A_k+y^B_k p^B_k)>0$ (resp. $<0$).

	Besides DEX, the LP can also invest in assets $A$ and $B$ on CEX, represented by the percentages $\omega^A_k$ and $\omega^B_k$ of the post-consumption wealth allocated to these two assets, respectively. The remaining wealth is invested in the risk-free asset, which generates a constant gross return $R_f$. Recalling the investment return on DEX as in \eqref{eq:AMMGrossReturn} and the returns of investment in assets $A$ and $B$ on CEX as in \eqref{fundaShock}, we derive the LP's portfolio gross return as
	\begin{align}\label{eq:PortfolioReturn}
		R_{k+1}^p:=\omega^M_k R^M_{k+1} + \omega^A_k R^A_{k+1} + \omega^B_k R^B_{k+1}+ \big(1-\omega^M_k-\omega^A_k-\omega^B_k)R_f,
	\end{align}
	and the dynamics for the LP's wealth as
	\begin{align}
		X_{k+1} = (1-c_k) X_k R^p_{k+1}.\label{eq:WealthEquation}
	\end{align}
	The LP's actions are summarized in the 4-tuple $(c_k,\omega_{k}^{M},\omega_k^A,\omega_k^B)$. The feasible set ${\cal C}_k$ for the consumption action $c_k$ is $(0,1)$ for $k\in \{jN:j=1,2,\dots\}$ and $\{0\}$ otherwise. We assume no short sale and no borrowing, so the feasible set $\Omega$ for the investment action $\vecomega_k=(\omega_{k}^{M},\omega_k^A,\omega_k^B)$ is \
	\begin{align}
		\Omega=\{\vecomega=(\omega^M,\omega^A,\omega^B)\mid \omega^i\ge 0,i\in\{M,A,B\},\omega^{M}+\omega^A+\omega^B\le 1\}.\label{eq:PortfolioConstraints}
	\end{align}

	\subsubsection{LP's objective}
	We assume that the LP is {\em risk-averse} with respect to the consumption flow and her preferences for the consumption flow are modeled by an additive expected utility with utility function
	\begin{align}\label{utility}
		U(C) := \begin{cases}
			\frac{C^{1-\gamma}-1}{1-\gamma}, &0<\gamma\neq 1,\\
			\log (C) ,& \gamma =1,
		\end{cases}
	\end{align}
	where the parameter $\gamma>0$ measures the relative risk aversion degree of the LP.

	Denote by $\delta\in(0,1)$ the one-period discounting factor. The LP's objective is to maximize her expected utility of intertemporal consumption over an infinite horizon:
	\begin{align}\label{controlObjective}
		\sup_{\{(c_k,\vecomega_k)\}_{k=0}^\infty} \mathbb{E}\left[\sum_{j=0}^{\infty} \delta^{jN} U(c_{jN}X_{jN})\right],
	\end{align}
	subject to the wealth equation \eqref{eq:WealthEquation} and constraints on $(c_k,\vecomega_k)$.

	\subsection{Main Results}
	\subsubsection{State Variables}\label{subsubse:StateVariable}
	In the following, we solve the LP's problem \eqref{controlObjective}. To this end, we need to identify the state variables for this problem.
	
	In view of \eqref{eq:AMMGrossReturn}--\eqref{eq:AMMReturnStep5}, \eqref{eq:PostTradeDepositRatioStep1}, \eqref{eq:PostTradeDepositRatioStep2}, and \eqref{eq:PostTradeDepositRatioStep3}, we immediately conclude that the return on AMM $R^M_{k+1}$ is independent of the LP's action $(c_k,\vecomega_k)$ and determined by (i) the deposit ratio $y^A_{k}/y^B_{k}$ at the beginning of the $(k+1)$-th period, (ii) the fundamental exchange rate $p^A_k/p^B_k$ at the beginning of this period, and (iii) random disturbances in this period, which include $\xi_{k+1}$ representing the arrival of investors in Step 2, $I_{k+1}$ representing the heterogeneity of the investors' belief of the fundamental exchange rate, and $R^i_{k+1}$ representing the growth of the fundamental value of asset $i$ and hence the gross return of investing in asset $i$ on CEX, $i\in \{A,B\}$.

	We make the following assumption on the random disturbances.
	\begin{assumption}[Random Disturbances]\label{as:RandomShock}
		The random vector $(\xi_{k+1},I_{k+1},R^A_{k+1},R^B_{k+1})$ is independent and identically distributed across $k$.
	\end{assumption}
	
	With Assumption \ref{as:RandomShock} in place, given the information at the beginning of $(k+1)$-th period, the distributions of the investment returns $R^M_{k+1}$, $R^A_{k+1}$, and $R^B_{k+1}$ are completely determined by $y^A_{k}/y^B_{k}$ and $p^A_k/p^B_k$. Moreover, $p^A_{k+1}/p^B_{k+1} = (p^A_k/p^B_k)(R^A_{k+1}/R^B_{k+1})$ and by \eqref{eq:PostTradeDepositRatioStep1}, \eqref{eq:PostTradeDepositRatioStep2}, \eqref{eq:PostTradeDepositRatioStep3}, and \eqref{eq:PostTradeDepositRatioStep5}, $y^A_{k+1}/y^B_{k+1}$ is a function of $y^A_{k}/y^B_k$, $p^A_k/p^B_k$, and $(\xi_{k+1},I_{k+1},R^A_{k+1}/R^B_{k+1})$. Therefore, the LP makes decision based on her wealth $X_k$ and the information summarized by $y^A_{k}/y^B_{k}$ and $p^A_k/p^B_k$.
	
	Note that knowing $y^A_{k}/y^B_k$ and $p^A_k/p^B_k$ is equivalent to knowing $G(y^A_{k}/y^B_k)/(p^A_k/p^B_k)$, the ratio of the marginal exchange rate on AMM and the fundamental exchange rate, and $p^A_k/p^B_k$. Thus, we can identify the state variable as $(X_k,S_k,Z_k)$, where
	\begin{align*}
		S_k:=G(y^A_{k}/y^B_k)/(p^A_k/p^B_k) \text{ and } Z_k:=p^A_k/p^B_k.
	\end{align*}
	The reason of using $S_k$ instead of $y^A_{k}/y^B_k$ is that the former always takes value in a bounded interval $[1/(1+f),1+f]$ as discussed in Section \ref{sect:sub:tradersProblem}. Moreover, for the CGMMM pricing function, $S_k$ is a natural choice of the state variable, as we will show later.
	
	The state $Z_k$ evolves according to
	\begin{align}
		Z_{k+1}=  (R^A_{k+1}/R^B_{k+1})Z_k.\label{eq:StateZDynamics}
	\end{align}
	Recalling \eqref{eq:AMMGrossReturn}--\eqref{eq:AMMReturnStep5}, \eqref{eq:PostTradeDepositRatioStep1}, \eqref{eq:PostTradeDepositRatioStep2}, and \eqref{eq:PostTradeDepositRatioStep3}, we can represent
	\begin{align}
		R^M_{k+1} = R^B_{k+1}L^M(S_k,Z_k,\xi_{k+1},I_{k+1},R^A_{k+1}/R^B_{k+1})\label{eq:AMMGrossReturnFunction}
	\end{align}
	for some function $L^M$.
	Thus, the dynamics of the state $X_k$ is given by \eqref{eq:WealthEquation} with $R^M_{k+1}$ given by \eqref{eq:AMMGrossReturnFunction}. Because $y^A_{k+1}/y^B_{k+1}$ is a function of $y^A_{k}/y^B_k$, $p^A_k/p^B_k$, and $(\xi_{k+1},I_{k+1},R^A_{k+1}/R^B_{k+1})$, we can represent
	\begin{align}
		S_{k+1} = L^S(S_k,Z_k,\xi_{k+1},I_{k+1},R^A_{k+1}/R^B_{k+1})\label{eq:StateSDynamics}
	\end{align}
	for some function $L^S$ that can be determined by \eqref{eq:PostTradeDepositRatioStep1}, \eqref{eq:PostTradeDepositRatioStep2}, \eqref{eq:PostTradeDepositRatioStep3}, and \eqref{eq:PostTradeDepositRatioStep5}. Note that only the state $X_{k+1}$ depends on the LP's action $(c_k,\vecomega_k)$, but not the states $S_{k+1}$ and $Z_{k+1}$.
	
	Because $\varphi^i$ is continuous and takes values in $(0,+\infty)$, we conclude that both $L^M$ and $L^S$ are continuous in $(S_k,Z_k,\xi_{k+1},I_{k+1},R^A_{k+1}/R^B_{k+1})\in (0,\infty)\times (0,\infty)\times \{0,1\}\times (0,\infty)\times (0,\infty)$. Moreover, $L^M$ takes values in $(0,\infty)$ and $L^S$ takes values in $[1/(1+f),1+f]$.

	\subsubsection{Dynamic Programming}
	We solve the LP's problem using dynamic programming. Denote by $V_k$ the value function for the LP's problem starting from time $k$, $k=0,1,\dots, N-1$. In view of the discussion in Section \ref{subsubse:StateVariable}, the LP's problem is a controlled Markov processes with states $(X_k,S_k,Z_k)$. Thus, we only need to focus on Markov policies, i.e., to consider $(c_k,\vecomega_k)$ to be functions of $(X_k,S_k,Z_k)$ for each $k$; see Chapters 8--9 in \citet{BertsekasDShreveS:78soc}. Moreover, $V_k$ is a function of $(X_k,S_k,Z_k)$, i.e.,
	\begin{align}
		V_k(x,s,z):= \sup_{\{(c_i,\vecomega_i)\}_{i=k}^\infty}\mathbb{E}\left[U(c_0X_0)\mathbf 1_{k=0}+\sum_{j=1}^\infty \delta^{jN-k} U(c_{jN}X_{jN})\Bigg| X_k=x,S_k=s,Z_k=z\right],\label{eq:LPProblem}
	\end{align}
	subject to the wealth equation \eqref{eq:WealthEquation} with $R^M_{k+1}$ given by \eqref{eq:AMMGrossReturnFunction} and state dynamics \eqref{eq:StateZDynamics} and \eqref{eq:StateSDynamics}. Here and hereafter, with slight abuse of notations, $(c_k,\vecomega_k)$ denotes both the Markov policy at time $k$, which is a function of the state random vector $(X_k,S_k,Z_k)$, and the actions taken by the LP under this policy, i.e., the function evaluated at the observation of $(X_k,S_k,Z_k)$.

	We expect the following dynamic programming equation (DPE) to hold:
	\begin{align*}
		V_0(x,s,z) &= \sup_{c_0\in (0,1),\vecomega_0\in \Omega}\mathbb{E}\left[U(c_0X_0)+\delta V_1\big((1-c_0)X_0R^{p}_1,S_1,Z_1\big) \Bigg| X_0=x,S_0=s,Z_0=z\right],\\
		V_k(x,s,z) &= \sup_{\vecomega_k\in \Omega }\mathbb{E}\left[\delta V_{k+1}\big(X_kR^{p}_{k+1},S_{k+1},Z_{k+1}\big) \Bigg| X_k=x,S_k=s,Z_k=z\right],\; k=1,\dots, N-1,\\
		V_N(x,s,z)& = V_0(x,s,z),
	\end{align*}
	where for every portfolio weight vector $\vecomega_k=(\omega^M_k,\omega^A_k,\omega^B_k)\in \Omega$, the gross return of the portfolio $R^p_{k+1}$ is as defined in \eqref{eq:PortfolioReturn}.
	It is straightforward to see that
	\begin{align}
		V_k(x,s,z) = x^{1-\gamma} V_k(1,s,z) + \left(\mathbf 1_{k=0}+\frac{\delta^{N-k}}{1-\delta^N}\right) U(x),\; k=0,1,\dots, N.
	\end{align}
	Thus, recalling that $U(1)=0$ and denoting
	\begin{align*}
		v_k(s,z):=V_k(1,s,z),\quad k=0,1,\dots, N,
	\end{align*}
	we can derive the following DPE for $v_k$ from the DPE for $V_{k}$:
	\begin{align}
		&v_0(s,z) = (\hat{\mathbb{T}}_0 v_1)(s,z),\quad v_k(s,z) = (\mathbb{T}_k v_{k+1})(s,z),\; k=1,\dots, N-1,\label{eq:DPE}\\
		&v_N(s,z) = v_0(s,z),\label{eq:DPETerminal}
	\end{align}
	where the dynamic programming operators $\mathbb{T}_k$ and $\hat{\mathbb{T}}_0$ are defined to be
	\begin{align}
		&(\mathbb{T}_k J)(s,z):=\delta \sup_{\vecomega_k\in \Omega} \mathbb{E}\Big[ \big(R^p_{k+1}\big)^{1-\gamma}J(S_{k+1},Z_{k+1})+\frac{\delta^{N-k-1}}{1-\delta^N} U\big(R^p_{k+1}\big) \mid S_k=s,Z_k=z\Big],\label{eq:PortfolioOperator}\\
		&(\hat{\mathbb{T}}_0 J)(s,z):= \sup_{c_0\in (0,1)}\left\{U(c_0) + (1-c_0)^{1-\gamma} (\mathbb{T}_0 J)(s,z) + \frac{\delta^N}{1-\delta^N} U(1-c_0)\right\}.\label{eq:ConsumptionOperator}
	\end{align}
	In the DPE \eqref{eq:DPE}--\eqref{eq:DPETerminal}, the terminal condition at time $N$ is the solution to the equation at time 0. Thus, the solution to the DPE is a fixed point of the operator $\hat{\mathbb{T}}_0 \mathbb{T}_1\cdots \mathbb{T}_{N-1}$.
	
	Recall that starting from any exchange rate ratio $S_k$ at time $k$, $S_{k+1}$ always falls in $[1/(1+f),1+f]$. Thus, we assume without loss of generality that the state variable $S_k$ takes values in $[1/(1+f),1+f]$. Denote by
	$
	{\cal S}:=[1/(1+f),1+f]\times (0,\infty)
	$
	the state space. Denote by ${\cal X}$ the set of all bounded, measurable functions on ${\cal S}$, and denote by ${\cal X}_+$ the subset of all nonnegative functions in ${\cal X}$. Denote by $\|\cdot\|$ the maximum norm on ${\cal X}$.
	
	For the case with $\gamma=1$, we will work on $\hat{\mathbb{T}}_0 \mathbb{T}_1\cdots \mathbb{T}_{N-1}$ to show that this operator has a fixed point. For other cases, we need to do some transform. In the case when $\gamma\in (0,1)$, we have $U(C)>-\frac{1}{1-\gamma}$ for every $C>0$, so we have
	\begin{align}\label{eq:DomainValueFunction}
		(1-\gamma)v_0(s,z)+\frac{1}{1-\delta^N}>0,\quad (1-\gamma) v_k(s,z) + \frac{\delta^{N-k}}{1-\delta^N}>0,\; k=1,\dots, N.
	\end{align}
	In the case with $\gamma>1$, we have $U(C)<-\frac{1}{1-\gamma}$ for every $C>0$, so \eqref{eq:DomainValueFunction} also holds. Thus, we denote
	\begin{align*}
		\tilde v_0:=(1-\gamma)v_0+\frac{1}{1-\delta^N},\quad \tilde v_k:=(1-\gamma) v_k + \frac{\delta^{N-k}}{1-\delta^N},\; k=1,\dots, N.
	\end{align*}
	Then, the DPE \eqref{eq:PortfolioOperator}--\eqref{eq:ConsumptionOperator} can be written as
	\begin{align}
		&\tilde v_0(s,z) = (\hat{\mathbb{S}}_0 \tilde v_1)(s,z),\quad \tilde v_k(s,z) = (\mathbb{S}_k \tilde v_{k+1})(s,z),\; k=1,\dots, N-1,\label{eq:DPEPlus}\\
		&\tilde v_N(s,z) = \tilde v_0(s,z),\label{eq:DPEPlusTerminal}
	\end{align}
	where
	\begin{align}
		&(\mathbb{S}_k J)(s,z):=\delta \sup_{\vecomega_k\in \Omega} \mathbb{E}\Big[ \big(R^p_{k+1}\big)^{1-\gamma}J(S_{k+1},Z_{k+1}) \mid S_k=s,Z_k=z\Big],\label{eq:PortfolioOperatorPlus}\\
		&(\hat{\mathbb{S}}_0 J)(s,z):= \sup_{c_0\in (0,1)}\left\{c_0^{1-\gamma} + (1-c_0)^{1-\gamma} (\mathbb{S}_0 J)(s,z)\right\}\label{eq:ConsumptionOperatorPlus}
	\end{align}
	for the case with $\gamma\in(0,1)$ and
	\begin{align}
		&(\mathbb{S}_k J)(s,z):=\delta \inf_{\vecomega_k\in \Omega} \mathbb{E}\Big[ \big(R^p_{k+1}\big)^{1-\gamma}J(S_{k+1},Z_{k+1}) \mid S_k=s,Z_k=z\Big],\label{eq:PortfolioOperatorNegative}\\
		&(\hat{\mathbb{S}}_0 J)(s,z):= \inf_{c_0\in (0,1)}\left\{c_0^{1-\gamma} + (1-c_0)^{1-\gamma} (\mathbb{S}_0 J)(s,z)\right\}\label{eq:ConsumptionOperatorNegative}
	\end{align}
	for the case with $\gamma>1$.
	
	The following is a standard assumption imposing growth conditions.
	\begin{assumption}[Growth Condition]\label{as:GrowthCondition}
		\begin{enumerate}
			\item[(i)] The portfolio gross return has finite expected utility:
			\begin{align}\label{eq:AssBoundedPortfolioReturnUtility}
				\sup_{(s,z)\in {\cal S}}\sup_{\vecomega_k\in \Omega}\mathbb{E}\left[\big|U(R^p_{k+1})\big| \mid S_k=s,Z_k=z\right]<\infty.
			\end{align}
			\item[(ii)] For $\gamma\neq 1$, the following growth condition holds:
			\begin{align}
				\delta \bar R^{1-\gamma}<1,
				~~\text{where }  \bar R:= \begin{cases}
					\sup_{(s,z)\in {\cal S}}\sup_{\vecomega_k\in \Omega}U^{-1}\left(\mathbb{E}\Big[ U\big(R^p_{k+1}\big)\big]\mid S_k=s,Z_k=z\Big]\right), ~\gamma\in (0,1),\\
					\inf_{(s,z)\in {\cal S}}\sup_{\vecomega_k\in \Omega}U^{-1}\left(\mathbb{E}\Big[ U\big(R^p_{k+1}\big)\big]\mid S_k=s,Z_k=z\Big]\right), ~\gamma>1.
				\end{cases}
			\end{align}
		\end{enumerate}
	\end{assumption}
	
	\begin{theorem}\label{th:OptimalSolution}
		Suppose Assumptions \ref{as:PricingFunction}-\ref{as:GrowthCondition} hold. Then, the following are true:
		\begin{enumerate}
			\item[(i)] When $\gamma=1$, $\hat{\mathbb{T}}_0 \mathbb{T}_1\cdots \mathbb{T}_{N-1}$ is a contraction mapping on ${\cal X}$ and thus admits a unique fixed point $v_0^*$. Moreover, the unique optimal percentage of wealth consumed and invested is
			\begin{align}
				c_0^* = 1-\delta^{N},\quad \vecomega_k^* \in \argmax_{\vecomega_k \in \Omega}\mathbb{E}\Big[ U\big(R^p_{k+1}\big) \mid S_k=s,Z_k=z\Big],\label{eq:OptimalActionGammaEq1}
			\end{align}
			and for every $k=0,1,\dots, N-1$, $v^*_k$ is the LP's optimal utility starting from unit wealth at time $k$, where $v^*_k:=\mathbb{T}_k\cdots \mathbb{T}_{N-1}v^*_0$, $k=1,\dots, N-1$.
			\item[(ii)] When $\gamma\neq 1$, $\hat{\mathbb{S}}_0 \mathbb{S}_1\cdots \mathbb{S}_{N-1}$ admits a unique fixed point $\tilde v_0^*$ on ${\cal X}$ and for any $\tilde v_0\in {\cal X}$, the iterative sequence $\tilde v_0^n:= \left(\hat{\mathbb{S}}_0 \mathbb{S}_1\cdots \mathbb{S}_{N-1}\right)^n \tilde v_0$, $n\ge 1$ converges to $\tilde v_0^*$ exponentially, i.e., there exists $M>0$ and $r\in (0,1)$ such that $\|\tilde v_0^n-\tilde v^*_0\|\le Mr^n$ for all $n\ge 1$. Moreover, $\tilde v_0^*\ge 1$ and the unique optimal percentage of wealth consumed and invested is
			\begin{align}
				c_0^* = (1+\left(\mathbb{S}_0 \tilde v_1^*(s,z)\right)^{1/\gamma})^{-1},\quad \vecomega_k^* \in  \argmax_{\vecomega_k \in \Omega}\mathbb{E}\Big[ U\big(R^p_{k+1}\big)\tilde v^*_{k+1}(S_{k+1},Z_{k+1}) \mid S_k=s,Z_k=z\Big],\label{eq:OptimalActionGammaNeq1}
			\end{align}
			where $\tilde v_k^*:=\mathbb{S}_k\cdots \mathbb{S}_{N-1} \tilde v_0^*$, $k=1,\dots, N-1$, and
			\begin{align}
				v_k^*:=\frac{1}{1-\gamma}\left( \tilde v_k^*-\left(\mathbf 1_{k=0}+\frac{\delta^{N-k}}{1-\delta^N}\right)\right)\label{eq:OptimalValueGammaNeq1}
			\end{align}
			is the LP's optimal utility starting from unit wealth at time $k$, $k=0,1,\dots, N-1$.
		\end{enumerate}
	\end{theorem}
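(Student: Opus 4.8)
The plan is to split along the two regimes, which have genuinely different operator structure, and to handle the verification (fixed point $=$ value function) at the end.

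\textbf{Case $\gamma=1$.} Since $(R^p_{k+1})^{1-\gamma}\equiv 1$ and the increments $S_{k+1},Z_{k+1}$ do not depend on $\vecomega_k$, the supremum in \eqref{eq:PortfolioOperator} acts only on the additive term $U(R^p_{k+1})$, so $\mathbb{T}_k$ is affine in $J$ with linear part $J\mapsto\delta\,\mathbb{E}[J(S_{k+1},Z_{k+1})\mid S_k,Z_k]$. Likewise, because $(1-c_0)^{1-\gamma}\equiv 1$, the consumption supremum in $\hat{\mathbb{T}}_0$ decouples from $J$, leaving $\hat{\mathbb{T}}_0$ affine with the same type of linear part through $\mathbb{T}_0$. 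Each such linear part has operator norm at most $\delta$ on $(\cX,\|\cdot\|)$, so the composition $\hat{\mathbb{T}}_0\mathbb{T}_1\cdots\mathbb{T}_{N-1}$ is affine with contraction modulus $\delta^N<1$; the Banach fixed point theorem yields the unique $v_0^*$. The maximizer of the decoupled scalar consumption problem comes from its first-order condition, and since $\vecomega_k$ enters only through $\mathbb{E}[U(R^p_{k+1})]$, the stated $\vecomega_k^*$ is optimal.

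\textbf{Case $\gamma\neq 1$.} First I would evaluate the consumption extremum in $\hat{\mathbb{S}}_0$ explicitly: its first-order condition gives $c_0^*=(1+(\mathbb{S}_0 J)^{1/\gamma})^{-1}$ and, after substitution, the closed form $\hat{\mathbb{S}}_0 J=(1+(\mathbb{S}_0 J)^{1/\gamma})^{\gamma}$. Writing $\Phi:=\mathbb{S}_0\mathbb{S}_1\cdots\mathbb{S}_{N-1}$, the full operator becomes $\mathcal{T}J=(1+(\Phi J)^{1/\gamma})^{\gamma}$. The structural facts I rely on are that $\Phi$ is monotone and positively homogeneous of degree one, since each $\mathbb{S}_k$ is a supremum (for $\gamma<1$) or infimum (for $\gamma>1$) of the linear maps $J\mapsto\delta\,\mathbb{E}[(R^p)^{1-\gamma}J]$ and $(R^p)^{1-\gamma}>0$. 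From this $\mathcal{T}$ inherits monotonicity and, via $(1+t^{1/\gamma}u)/(1+u)\ge t^{1/\gamma}$ followed by raising to the power $\gamma$, the concavity inequality $\mathcal{T}(tJ)\ge t\,\mathcal{T}(J)$ for $t\in(0,1)$, strict whenever $\Phi J$ is finite. I would also record $\mathcal{T}J\ge 1$ pointwise, which gives $\tilde v_0^*\ge 1$.

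Next I would build an invariant order interval from Assumption \ref{as:GrowthCondition}. Converting $\mathbb{E}[(R^p)^{1-\gamma}]$ through the identity $C^{1-\gamma}=1+(1-\gamma)U(C)$ and the definition of $\bar R$ shows $\mathbb{S}_k(\mathbf 1)\le\rho$ with $\rho:=\delta\bar R^{1-\gamma}<1$, so by homogeneity and monotonicity $\Phi(c\mathbf 1)\le c\rho^N\mathbf 1$; choosing $\Gamma:=(1-\rho^{N/\gamma})^{-\gamma}$ makes $\mathcal{T}$ map the order interval $[\mathbf 1,\Gamma\mathbf 1]$ into itself, and the scalar envelope $c\mapsto(1+(c\rho^N)^{1/\gamma})^{\gamma}$ has $\Gamma$ as a globally attracting fixed point, so every orbit of $\mathcal{T}$ enters $[\mathbf 1,\Gamma\mathbf 1]$ after finitely many steps. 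On this interval I would equip $\cX$ with the Thompson part metric $d(u,v)=\inf\{\ln\lambda:\lambda^{-1}u\le v\le\lambda u\}$: monotonicity and degree-one homogeneity make $\Phi$ nonexpansive, while the additive constant $1$ in $\mathcal{T}$ upgrades this to a strict contraction of some modulus $\kappa<1$, because the contraction ratio, bounded by $1+\tfrac{\bar a}{1+\bar a}(\lambda^{1/\gamma}-1)$ with $\bar a:=(\Gamma\rho^N)^{1/\gamma}$, stays strictly below $\lambda^{1/\gamma}$ uniformly on the bounded interval. A contraction in $d$, which dominates $\|\cdot\|$ on $[\mathbf 1,\Gamma\mathbf 1]$, delivers the unique fixed point $\tilde v_0^*$ and the claimed exponential rate $\|\tilde v_0^n-\tilde v_0^*\|\le Mr^n$; the optimal controls follow from the explicit $c_0^*$ and from the maximizer/minimizer of the inner program defining $\mathbb{S}_k$. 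Finally, a standard verification argument identifies these fixed points with the value function \eqref{eq:LPProblem}: iterating the DPE and letting the horizon tend to infinity, Assumption \ref{as:GrowthCondition} forces the remainder terms to vanish, giving both $v_k^*\ge$ the value of any admissible policy and equality along the candidate policy. The main obstacle is precisely this $\gamma\neq1$ case: since $\mathcal{T}$ is non-contractive in $\|\cdot\|$, the crux is to extract the invariant bounded order interval from the growth condition and to verify that the $+1$ term turns Thompson-nonexpansiveness into a uniform contraction, with the added bookkeeping that for $\gamma>1$ the operators use infima and $1-\gamma<0$, so the direction of every monotonicity and growth estimate must be tracked.
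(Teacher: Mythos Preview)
Your treatment of $\gamma=1$ is exactly the paper's: the multiplicative factor $(R^p)^{1-\gamma}$ collapses to $1$, the consumption problem decouples, and each $\mathbb{T}_k$ is a $\delta$-contraction in sup norm.

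For $\gamma\neq 1$ your route is correct but genuinely different from the paper's. The paper splits the two sub-cases. For $\gamma\in(0,1)$ it observes directly that each $\mathbb{S}_k$ is a sup-norm contraction with modulus $\delta\bar R^{1-\gamma}$ (since $\sup_{\vecomega}\mathbb{E}[(R^p)^{1-\gamma}]\le\bar R^{1-\gamma}$) and that $h(a)=(1+a^{1/\gamma})^{\gamma}$ has $h'\le 1$, so the whole composition is already a $\|\cdot\|$-contraction; no metric change is needed. For $\gamma>1$, where the sup-norm argument genuinely fails, the paper invokes the concave-operator fixed point theorem of Du (1990): it checks that $\hat{\mathbb{S}}_0\mathbb{S}_1\cdots\mathbb{S}_{N-1}$ is monotone and concave on ${\cal X}_+$, maps $0$ to $1>0$, and maps a large constant $L$ strictly below $L$ (via the same bound $\mathbb{S}_kL\le\delta\bar R^{1-\gamma}L$ you derived), and Du's theorem then delivers existence, uniqueness, and exponential convergence in one stroke. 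Your Thompson-metric argument is a valid alternative and has the merit of treating both sub-cases uniformly; the price is the extra bookkeeping of extracting an invariant order interval and verifying a uniform contraction ratio there. One small imprecision: orbits need not enter $[\mathbf 1,\Gamma\mathbf 1]$ in finitely many steps, but after one iteration they do land in some invariant interval $[\mathbf 1,c_1\mathbf 1]$ with $c_1$ depending on $\|\tilde v_0\|$, and your contraction estimate applies equally well on that larger interval, which is enough for the (non-uniform) exponential rate the theorem asserts.
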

	
	Theorem \ref{th:OptimalSolution} shows the existence and uniqueness of the fixed point of the dynamic programming operators $\hat{\mathbb{T}}_0 \mathbb{T}_1\cdots \mathbb{T}_{N-1}$ for the case with $\gamma=1$ and $\hat{\mathbb{S}}_0 \mathbb{S}_1\cdots \mathbb{S}_{N-1}$ for the case with $\gamma\neq 1$. Moreover, the fixed point can be computed by the standard value iteration algorithm with exponential convergence rate. The optimal percentage of wealth consumed and invested by the LP is the optimizer of the optimization problem in the dynamic programming operators and thus can be solved in the course of the fixed point computation. This optimization problem is easy to solve because it is either a concave maximization problem or a convex minimization problem.\footnote{\label{ft:NoClosedForm}We, however, cannot derive closed-form formulae for the optimal consumption and investment. Indeed, even in the Merton's problem in the discrete-time setting, which can be regarded as a simplified version of the LP's consumption and investment problem in the present paper, the optimal solution cannot be derived in closed form and needs to be solved from an optimization problem; see for instance Section 3.3 in \citet{bertsekas1976dynamic}. This is in contrast to the continuous-time setting, in which closed-form solutions are available for some consumption and investment problems.}
	
	Note that when $\gamma=1$, the LP is myopic in that her optimal investment strategy is to maximize the one-period expected utility of return, as if her investment horizon were one period; see \eqref{eq:OptimalActionGammaEq1}. When $\gamma\neq 1$, the LP is not myopic and her optimal investment strategy depends on the reward-to-go from the next period; see \eqref{eq:OptimalActionGammaNeq1}.

	\subsubsection{The Case of CGMMM Pricing Function}\label{subsect:modelCGMMM}
	
	For the CGMMM pricing function \eqref{eq:CGMMM}, we recall the marginal exchange rate function \eqref{eq:CGMMMMarginalExchangeRateFun}. Thus, with $\alpha$ and $\beta$ standing for the deposit ratio on AMM and the investor's belief of the exchange rate, the ratio of the marginal exchange rate on AMM and the investor's belief of the exchange rate is given by $\eta/\big((1-\eta)\alpha \beta\big)$. Straightforward calculation yields the following form of $\varphi^A$ and $\varphi^B$ defined in Proposition \ref{le:TradersProblem}:
	\begin{align}
		(\varphi^A(\alpha,\beta),\varphi^B(\alpha,\beta)) = \left(\overline{\varphi}^A\left(\frac{\eta}{(1-\eta)\alpha \beta}\right),\overline{\varphi}^B\left(\frac{\eta}{(1-\eta)\alpha \beta}\right)\right),
		\label{eq:CGMMMOptimalTrading}
	\end{align}
	where
	\begin{align}
		\left(\overline{\varphi}^A(s),\overline{\varphi}^B(s)\right) := \begin{cases}
			\left(1-\left(s(1+f)\right)^{1-\eta},1-\left(s(1+f)\right)^{-\eta}\right), & \text{if }s<1/(1+f),\\
			(0,0), & \text{if }s \in \left[\frac{1}{1+f},(1+f)\right],\\
			\left(1-\left(s/(1+f)\right)^{1-\eta},1-\left(s/(1+f)\right)^{-\eta}\right), & \text{if }s>1+f.
		\end{cases}\label{eq:CGMMMOptimalTradingFunction}
	\end{align}
	
	Note that for the CGMMM pricing function, the optimal percentage trading amount $(\varphi^A(\alpha,\beta),\varphi^B(\alpha,\beta))$ depends on $\alpha$ and $\beta$ only through the pre-trade exchange rate ratio $\eta/\big((1-\eta)\alpha \beta\big)$. Straightforward calculation also yields the following form of $\phi^A$ and $\phi^B$ defined in \eqref{eq:PostTradeDeposit}:
	\begin{align}
		&\big(\phi^A(\alpha,\beta),\phi^B(\alpha,\beta)\big)=\left(\overline{\phi}^A\left(\frac{\eta}{(1-\eta)\alpha \beta}\right),\overline{\phi}^B\left(\frac{\eta}{(1-\eta)\alpha \beta}\right)\right),\label{eq:CGMMMDeposit}
	\end{align}
	where
	\begin{align}
		&\left(\overline{\phi}^A\left(s\right),\overline{\phi}^B\left(s\right)\right):= \begin{cases}
			\left(\left(s(1+f)\right)^{1-\eta}, (1+f)\left(s(1+f)\right)^{-\eta}-f\right), & \text{if }s<1/(1+f),\\
			(1,1), & \text{if }s \in \left[\frac{1}{1+f},(1+f)\right],\\
			\left((1+f)\left(s/(1+f)\right)^{1-\eta}-f,\left(s/(1+f)\right)^{-\eta}\right), & \text{if }s>1+f.
		\end{cases}\label{eq:CGMMMDepositFunction}
	\end{align}
	
	Note that with CGMMM, the transition function $H$ of the exchange rate ratio as defined in \eqref{eq:ExchangeRateRatioTransition} does not depend on $z$, and we denote it as $\bar{H}$, which takes the following form:
	\begin{align}
		\bar{H}(s) = \begin{cases}
			\Big(1+f\big(1-(s(1+f))^\eta\big)\Big)\times \frac{1}{1+f}, &\text{if } s<1/(1+f)\\
			s,& \text{if }s\in[1/(1+f),1+f],\\
			\frac{1}{1+f\big(1-(s/(1+f))^{\eta-1}\big)}\times (1+f), &\text{if }s>(1+f).
		\end{cases}\label{eq:CGMMMExchangeRateRatioTransitionFunction}
	\end{align}
	
	Recall the timeline of the $(k+1)$-th period in Figure \ref{fi:timeline}. Denote by $S_{(k,0)}$, $S_{(k,1)}$, $S_{(k,2)}$, and $S_{(k,3)}$ the ratios of the marginal exchange rate on AMM and the fundamental exchange rate at the end of Step 1, Step 2, Step 3, and Step 5, respectively; i.e.,
	\begin{align*}
		&S_{(k,i)}:=G(y^A_{(k,i)}/y^B_{(k,i)})/(p^A_k/p^B_k) = \frac{\eta }{(1-\eta)(y^A_{(k,i)}/y^B_{(k,i)})(p^A_k/p^B_k)},\quad i=0,1,2,\\
		&S_{(k,3)}:=G(y^A_{(k,3)}/y^B_{(k,3)})/(p^A_{k+1}/p^B_{k+1}) = \frac{\eta }{(1-\eta)(y^A_{(k,3)}/y^B_{(k,3)})(p^A_{k+1}/p^B_{k+1})}.
	\end{align*}
	Then, by \eqref{eq:PostTradeDepositRatioStep2}, \eqref{eq:PostTradeDepositRatioStep3}, \eqref{eq:PostTradeDepositRatioStep5}, and \eqref{eq:CGMMMDeposit}, we derive
	\begin{align}
		& S_{(k,1)} = \left[ (1-\xi_{k+1})+\xi_{k+1}\frac{\overline{\phi}^B\left(S_{(k,0)}/I_{k+1}\right)}{\overline{\phi}^A\left(S_{(k,0)}/I_{k+1}\right)}\right]S_{(k,0)},\label{eq:CGMMMExchangeRatioTransitStep2}\\
		&S_{(k,2)} = \frac{\overline{\phi}^B\left(S_{(k,1)}\right)}{\overline{\phi}^A\left(S_{(k,1)}\right)} S_{(k,1)} ,\label{eq:CGMMMExchangeRatioTransitStep3}\\
		& S_{(k,3)} = \frac{\overline{\phi}^B\left(S_{(k,2)}(R^B_{k+1}/R^A_{k+1})\right)}{\overline{\phi}^A\left(S_{(k,2)}(R^B_{k+1}/R^A_{k+1})\right)}(R^B_{k+1}/R^A_{k+1}) S_{(k,2)},\label{eq:CGMMMExchangeRatioTransitStep5}\\
		& S_{(k,0)}=S_{k},\quad S_{k+1} = S_{(k,3)}, \label{eq:CGMMMExchangeRatioTransitStep1}
	\end{align}
	where \eqref{eq:CGMMMExchangeRatioTransitStep1} is due to \eqref{eq:PostTradeDepositRatioStep1} implied by Assumption \ref{as:LPDeposit} and because the end of Step 5 of the $(k+1)$-th period is the beginning of the $(k+2)$-th period. Thus, \eqref{eq:CGMMMExchangeRatioTransitStep2}--\eqref{eq:CGMMMExchangeRatioTransitStep1} establish the transition of the exchange rate ratio: $S_{k+1}$ is a function of $S_k$ and random disturbances $(\xi_{t+1},I_{t+1},R^A_{k+1}/R^B_{k+1})$, but it is independent of the fundamental exchange rate $Z_k$ at the beginning of the $(k+1)$-th period.
	
	Similarly, by \eqref{eq:CGMMMDeposit} and \eqref{eq:AMMReturnStep2}--\eqref{eq:AMMReturnStep5}, we derive
	\begin{align}
		R^M_{(k+1,1)} 
		&= 1-\xi_{k+1} + \xi_{k+1}\frac{\frac{\eta}{1-\eta}\overline{\phi}^A\left(S_{(k,0)}/I_{k+1}\right)+S_{(k,0)}\overline{\phi}^B\left(S_{(k,0)}/I_{k+1}\right)}{ \frac{\eta}{1-\eta} + S_{(k,0)}},\label{eq:CGMMMAMMReturnStep2}\\
		R^M_{(k+1,2)} 
		&= \frac{\frac{\eta}{1-\eta}\overline{\phi}^A\left(S_{(k,1)}\right)+S_{(k,1)}\overline{\phi}^B\left(S_{(k,1)}\right)}{ \frac{\eta}{1-\eta} + S_{(k,1)}},\label{eq:CGMMMAMMReturnStep3}\\
		R^M_{(k+1,3)}&=R^{B}_{k+1}\frac{\frac{\eta}{1-\eta}  +S_{(k,2)}(R^{B}_{k+1}/R^{A}_{k+1})}{\frac{\eta}{1-\eta}(R^{B}_{k+1}/R^{A}_{k+1}) + S_{(k,2)}(R^{B}_{k+1}/R^{A}_{k+1})},\label{eq:CGMMMAMMReturnStep4}\\
		R^M_{(k+1,4)} 
		&= \frac{\frac{\eta}{1-\eta}\overline{\phi}^A\left(S_{(k,2)}(R^B_{k+1}/R^A_{k+1})\right)+ S_{(k,2)}(R^B_{k+1}/R^A_{k+1})\overline{\phi}^B\left(S_{(k,2)}(R^B_{k+1}/R^A_{k+1})\right)}{ \frac{\eta}{1-\eta}+ S_{(k,2)}(R^B_{k+1}/R^A_{k+1})}.\label{eq:CGMMMAMMReturnStep5}
	\end{align}
	Combining \eqref{eq:CGMMMAMMReturnStep2}-\eqref{eq:CGMMMAMMReturnStep5} with \eqref{eq:AMMGrossReturn} and \eqref{eq:CGMMMExchangeRatioTransitStep2}--\eqref{eq:CGMMMExchangeRatioTransitStep1}, we derive the return on AMM: $R^M_{k+1}/R^B_{k+1}$ is a function of $S_k$ and random disturbances $(\xi_{k+1},I_{k+1},R^A_{k+1}/R^B_{k+1})$, but it is independent of  $Z_k$.
	Therefore, for the CGMMM pricing function, given the exchange ratio $S_k$, the fundamental exchange rate $Z_k$ is irrelevant for the LP's decision. Thus, we can reduce the dimension of the state space.
	
	It is also worth mentioning that for the CGMMM pricing function, the exchange rate ratio takes the form
		$
		S_k = \frac{\eta}{(1-\eta)}\times \frac{y_k^B p_k^B}{y_k^A p_k^A},
		$
	where $(y_k^B p_k^B)/(y_k^A p_k^A)$ is the {\em relative portfolio weight of asset $B$ to asset $A$} in the liquidity pool. In view of Proposition \ref{le:TradersProblem}, the no-trading region for arbitrageurs can then be written as an interval for this relative portfolio weight. This explains the no-fee interval assumed in \citet{EvansAngerisChitra2021:OptimalFees}.
	
	\section{Optimal Design of AMM for CGMMM}\label{sect:numerics}
	
	In this section, we conduct numerical experiments to study the impact of asset return parameters on the LP's optimal action and the optimal design of AMM in the class of CGMMM pricing function \eqref{eq:CGMMM}.
	
	\subsection{Data Description and Parameter Estimation}\label{sec:para}
	We assume a bivariate log-normal distribution for $(R^A_{k+1},R^B_{k+1})$, the gross returns of assets $A$ and $B$, with the following parametric structure:
	\begin{align}\label{log_return}
		\left(\log R^{A}_{k+1},\log R^{B}_{k+1}\right)\sim \mathrm{\texttt{Normal}}\left(\begin{pmatrix}
			\mu_A\\ \mu_B
		\end{pmatrix},\begin{pmatrix}
			\sigma_A^2, & \rho\sigma_A\sigma_B\\
			\rho\sigma_A\sigma_B, & \sigma_B^2
		\end{pmatrix}\right).
	\end{align}
	Therefore, $\sigma_i$ and $\mu_i$ can be understood as the {\em volatility} and mean log return rate of asset $i$, respectively. The bivariate lognormal specification is a standard benchmark in portfolio choice models and allows the two crypto asset returns to be correlated. The logarithmic transformation of the fundamental exchange rate change in one period, i.e., $\log\big( (p^A_{k+1}/p^B_{k+1})/(p^A_{k}/p^B_{k})\big) = \log (R^{A}_{k+1}/R^{B}_{k+1})$, follows the normal distribution:
	\begin{align}
		\log (R^{A}_{k+1}/R^{B}_{k+1}) \sim \mathrm{\texttt{Normal}}\big(\mu_A-\mu_B, \sigma_A^2 + \sigma_B^2-2\rho \sigma_A\sigma_B \big). \label{eq:logExRateGrowth}
	\end{align}
	Therefore, $\sigma:=\sqrt{\sigma_A^2 + \sigma_B^2-2\rho \sigma_A\sigma_B}$ is the {\em volatility of the exchange rate}.

	Recall that the arrival of investors, $\xi_{k+1}$, follows the Bernoulli distribution with arrival probability $\alpha$. We assume that $I_{k+1}$, which represents the heterogeneity of investors' beliefs of the exchange rate, follows a log-normal distribution; i.e., $\log I_{k+1}\sim \mathrm{\texttt{Normal}}\left(-\frac{1}{2}\sigma_I^2,\sigma_I^2\right)$.
	Assume that $I_{k+1}$, $(R^A_{k+1},R^B_{k+1})$, and $\xi_{k+1}$ are independent.
	
	We set the length of one period to be 8 hours and assume the LP consumes every 24 hours, so that $N=3$ in our setting. We set the LP's risk aversion $\gamma=2$.  We take ETH and BTC as assets $A$ and $B$, respectively, and estimate the market parameters $\mu_A$, $\mu_B$, $\sigma_A$, $\sigma_B$, and $\rho$ using the 8-hour trading data of ETH/USDT and BTC/USDT from July 1st, 2022 to June 30, 2023 obtained from Binance Market Data (\url{https://data.binance.vision}). We set $\eta=0.5$ corresponding to the CPMM used in Uniswap and $f=0.5\%$ in the middle of the range of the fee levels on Uniswap v3 (0.05\%, 0.3\%, and 1\%). The parameters $\sigma_I$ and $\alpha$ govern the non-arbitrage order flow: $\sigma_I$ measures the dispersion of liquidity traders' beliefs around the CEX-implied exchange rate, and $\alpha$ is their arrival probability in one period. We use $\sigma_I=0.02$ and $\alpha=0.5$ as moderate benchmark values that generate liquidity-trader demand without making this channel dominate the arbitrage channel. In Section \ref{subsubse:sigmaIAlpha}, we will show that our results are robust with respect to reasonable values of $\sigma_I$ and $\alpha$. We summarize the parameter values in Table \ref{tbl:parameters}.
	\begin{table}[htp!]
		\caption{Default parameter values.}\label{tbl:parameters}
		\centering
		\footnotesize
		\begin{tabular}{@{}ccccccccccccc@{}}
			\toprule
			\multicolumn{9}{c}{Market Parameters} &\multicolumn{2}{c}{LP's Attributes} & \multicolumn{2}{c}{AMM Parameters} \\
			\cmidrule(lr){1-9} \cmidrule(lr){10-11} \cmidrule(lr){12-13}
			$\delta$  & $R_f$ & $\mu_A$ & $\mu_B$ & $\sigma_A$ & $\sigma_B$ & $\rho$ & $\alpha$ & $\sigma_I$ & $N$ & $\gamma$ & $\eta$ & $f$ \\
			\midrule
			0.998   & 1+0.002\% &   0.05\%   & 0.038\%&   1.99\% & 1.52\%  &  0.8642 & 0.5  & 2\% &   3 &  2 & 0.5 & 0.5\%  \\
			\bottomrule
		\end{tabular}%
	\end{table}

	\subsection{Opportunity Cost and Price Slippage}\label{subse:OpportunityCostSlippage}
	To understand the subsequent numerical results, we first quantify the opportunity cost and price slippage in our model with the CGMMM.
	
	\subsubsection{Opportunity Cost}
	Recall that the opportunity cost arises from the inefficient allocation of the two assets by the LP in the liquidity pool on DEX compared to the efficient allocation that the LP could have achieved by investing on CEX instead. According to \eqref{eq:CGMMMMarginalExchangeRateFun}, the marginal exchange rate on DEX is $\eta y^B/((1-\eta) y^A)$, which has to stay close to the fundamental exchange rate $p^A/p^B$ at the beginning of each period due to the presence of arbitrageurs. As a result, the dollar-amount ratio of asset $A$ over asset $B$ held by the LP in the liquidity pool satisfies
	\begin{align}
		\frac{y^Ap^A}{y^Bp^B}\approx \frac{\eta}{1-\eta}.\label{eq:RatioOnDEX}
	\end{align}
	This ratio can be different from the optimal ratio that the LP could have achieved by investing on CEX, where this optimal ratio can be computed by maximizing LP's expected utility assuming that she invests on CEX only. Given the assumption of i.i.d. asset returns and CRRA utility function, it is well known that the optimal portfolio on CEX is the one that maximizes the LP's expected utility of one-period portfolio return:
	\begin{align}
		\mathbb{E}[U( \omega^A_k R^A_{k+1} + \omega^B_k R^B_{k+1}+ \big(1-\omega^A_k-\omega^B_k)R_f)].\label{eq:EUPortfolioReturnCEX}
	\end{align}
	Denote by $(\hat \omega^A,\hat\omega^B)$ the optimal portfolio maximizing \eqref{eq:EUPortfolioReturnCEX}. Then, the optimal ratio, which is referred to be the {\em efficient allocation ratio} of asset $A$ over asset $B$, is $\hat \omega^A/\hat\omega^B$. The opportunity cost arises when the allocation ratio in the liquidity pool, $\frac{\eta}{1-\eta}$, differs from the efficient allocation ratio, $\hat \omega^A/\hat\omega^B$.
	
	\subsubsection{Price Slippage}
	
	The price slippage is the difference between the marginal exchange rate on AMM and the average exchange rate of trading a positive amount of an asset on AMM. Under CGMMM, it is straightforward to derive that the {\em relative price slippage} for acquiring $D_A>0$ amount of asset $A$ at marginal exchange rate $E_0$ is
	$
	\frac{-D^B/D^A - E_0}{E_0} = \frac{1-\eta}{\eta}\frac{\big(1-D^A/y^A\big)^{\eta/(\eta-1)}-1}{D^A/y^A}-1,
	$
	where $D^B$ is determined by the pricing equation \eqref{AMMrule}, and therefore $-D^B/D^A$ stands for the average exchange rate. It is straightforward to verify that the above relative price slippage is strictly increasing in the trading size $D^A/y^A$ and in $\eta$. Similarly, the relative price slippage for acquiring $D_B>0$ amount of asset $B$ at marginal exchange rate $E_0'$ is
	$
	\frac{-D^A/D^B - E_0'}{E_0'} = \frac{\eta}{1-\eta}\frac{\big(1-D^B/y^B\big)^{(1-\eta)/\eta}-1}{D^B/y^B}-1,
	$
	which is strictly increasing in the trading size $D^B/y^B$ and strictly decreasing in $\eta$.
	
	When the price slippage becomes higher, it becomes more costly to trade for both the liquidity traders and arbitrageurs. Thus, fixing the asset deposit amount in the liquidity pool, higher price slippage reduces both the LP's fee revenue from the liquidity traders and her loss to the arbitrageurs. Note that, in the CGMMM, the parameter $\eta$ determines the price slippage. On the other hand, $\eta$ determines the asset deposit amount in the liquidity pool. Indeed, as discussed above, the LP takes the allocation as specified in \eqref{eq:RatioOnDEX}; i.e., $\eta$ of the liquidity pool value is in asset $A$, and the remaining is in asset $B$. The deposit amount of an asset in the liquidity pool also affects the LP's fee revenue from the liquidity traders and her loss to the arbitrageurs: with a larger amount of asset $A$ in the pool, the price slippage for acquiring asset $A$ has a larger impact than that for acquiring asset $B$ does. In the following, we compute the aggregate effect of $\eta$ on the LP's fee revenue from the liquidity traders and her loss to the arbitrageurs.
	
	\subsubsection{LP's Fee Revenue from Liquidity Traders and Loss to Arbitrageurs}
	
	The LP's loss to the arbitrageurs is the arbitrageurs's profit, i.e., the optimal value of problem \eqref{eq:TraderProblem} with $(a,b)$ to be the updated fundamental asset prices $(\tilde p^A,\tilde p^B)$ after the price shock.\footnote{\label{ft:LossZeroSumGame}Here, we are referring to the LP's loss in Step 3 or Step 5 of the timeline of trading activities as illustrated in Figure \ref{fi:timeline}, with the value of the LP's assets evaluated at the CEX price. This is exactly the definition of impermanent loss \citep{EngelHerlihy2022:PresentationAndPublication,AngerisChitra2020:ImprovedPriceOracles}. As a result, the game between the LP and arbitrageurs is zero sum. However, the LP's total gain or loss in the whole period, namely, from Step 1 throughout Step 5 of the trading timeline, which is evaluated at the CEX prices at the beginning and end of the period, is not the opposite of the arbitrageur's profit, because the former also includes the fee revenues from the liquidity traders as well as the asset appreciation or depreciation (i.e., the change of CEX prices) in this period. Note that the asset appreciation or depreciation can be hedged by trading on the CEX, so the LP's total gain or loss in the whole period, after hedging, is the LP's fee revenues from the liquidity traders net her loss to the arbitrageurs.} 
	Recall Proposition \ref{le:TradersProblem} and the discussion in Section \ref{subsect:modelCGMMM}, and recall that before the arbitrageurs trade, the marginal exchange rate on DEX, $\frac{\eta}{(1-\eta)(y^A/y^B)}$, is approximately equal to the fundamental exchange rate before the trade, namely $p^A/p^B$, where $p^i$ and $y^i$ stand for the fundamental price and deposit amount of asset $i$ in the pool before the trade. Then, the LP's loss to the arbitrageurs, denoted as IL (aka impermanent loss, net the trading fee), is approximately equal to
	\begin{align*}
		&\mathrm{IL}\approx \tilde p^A(1+f\mathbf 1_{\tilde R^{-1} >(1+f)})\bar \varphi^{A}(\tilde R^{-1}) y^A+ \tilde p^B (1+f\mathbf 1_{\tilde R^{-1} <1/(1+f)})\bar \varphi^{B}(\tilde R^{-1})y^B\\
		& \approx  \left[ \eta R^A (1+f\mathbf 1_{\tilde R^{-1} >(1+f)})\bar \varphi^{A}(\tilde R^{-1}) + (1-\eta)R^B(1+f\mathbf 1_{\tilde R^{-1} <1/(1+f)})\bar \varphi^{B}(\tilde R^{-1})\right]\cdot (y^Ap^A+y^Bp^B),
	\end{align*}
	where $R^i:=\tilde p^i/p^i$ denotes the gross return of asset $i$ due to the price shock, $\tilde R:=(\tilde p^A/\tilde p^B)/(p^A/p^B)=R^A/R^B$ denotes the fundamental exchange rate change due to the price shock, and the second approximate equality is because before the trade, the marginal exchange rate on AMM, $\frac{\eta}{(1-\eta)(y^A/y^B)}$, is approximately equal to the fundamental exchange rate $p^A/p^B$ and thus $p^Ay^A/(p^Ay^A+p^By^B)\approx \eta$. With a sufficiently small $f$ and recalling \eqref{eq:CGMMMOptimalTradingFunction} and $R^A/R^B = \tilde R$, we derive
	\begin{align*}
		\frac{\mathrm{IL}}{y^Ap^A+y^Bp^B}&\approx  R^B\left[\eta \tilde R +(1-\eta)-\tilde R^\eta\right]\left(\mathbf 1_{\tilde R>(1+f)}+\mathbf 1_{\tilde R<1/(1+f)}\right).
	\end{align*}
	Note that $h(\eta,\tilde R):=\eta \tilde R +(1-\eta)-\tilde R^\eta$ is convex in $\tilde R$ with $h(\eta,1)=0$, showing that ex post the price shock, the LP's loss to arbitrageurs becomes larger when $\tilde R$ is further away from 1, i.e., when the fundamental exchange rate change is larger. On the other hand, $h(\eta,\tilde R)$ is concave in $\eta$ with $h(0,\tilde R)=h(1,\tilde R)=0$. For any fixed $\tilde R\neq 1$, the maximizer of $h(\eta,\tilde R)$ in $\eta$ is attained at $\ln [(\tilde R-1)/(\ln \tilde R)]/\ln \tilde R$, which converges to $1/2$ when $\tilde R$ goes to $1$. Thus, ex ante the price shock, the LP's expected loss to arbitrageurs vanishes when $\eta=0$ or 1, and when the price shock has a small magnitude, i.e., when $\tilde R$ takes values around 1, the expected loss is maximized when $\eta$ is around 1/2.
	
	With $\tilde R$ being close to 1, which is the case when the length of one period is short (see the parameter values in Table \ref{tbl:parameters}), we have the following approximation based on Taylor expansion:
	\begin{align}
		\frac{\mathrm{IL}}{y^Ap^A+y^Bp^B}&\approx  \frac{1}{2}R^B\eta(1-\eta)(\tilde R-1)^2\left(\mathbf 1_{\tilde R>(1+f)}+\mathbf 1_{\tilde R<1/(1+f)}\right).\label{eq:ILApprox}
	\end{align}
	\citet{MilionisMoallemiRoughgardenZhang2024:AutomatedMarketMaking} consider a continuous-time setting with asset $B$ as the num\'eraire and thus $R^B\equiv 1$. In this setting, $\tilde R$ represents the fundamental exchange rate change in an infinitesimally small period of time and thus has a mean of 1, and the expectation of $(\tilde R-1)^2$ is proportional the fundamental exchange rate volatility. Thus, \citet{MilionisMoallemiRoughgardenZhang2024:AutomatedMarketMaking} derive that the LP's expected loss to arbitrageurs in their setting is proportional to $\eta(1-\eta)$ and the volatility of the fundamental exchange rate; see Example 2 therein.

	On the other hand, the LP's fee revenue from the liquidity traders amounts to
	\begin{align*}
		\mathrm{FEE}:=\left[-p^AD^A\mathbf 1_{D^{A}<0}-p^BD^{B}\mathbf 1_{D^{B}<0}\right]f \xi ,
	\end{align*}
	where $\xi$ is the Bernoulli random variable denoting the arrival of the liquidity traders, $p^A$ and $p^B$ stand for the fundamental asset prices, and
	\begin{align*}
		(D^A,D^B) =  \left(\overline{\varphi}^A\left(\frac{\eta}{(1-\eta)(a/b) (y^A/y^B)}\right)y^A,\overline{\varphi}^B\left(\frac{\eta}{(1-\eta)(a/b) (y^A/y^B)}\right)y^B\right)
	\end{align*}
	is the optimal trading amount of the liquidity traders with $a$ and $b$ representing their valuation of assets $A$ and $B$, respectively, $y^A$ and $y^B$ standing for the pre-trade deposit amount of assets $A$ and $B$ in the pool, and $\overline{\varphi}^A$ and $\overline{\varphi}^B$ as given by \eqref{eq:CGMMMOptimalTradingFunction}. Before the trade, the marginal exchange rate on DEX, $\frac{\eta}{(1-\eta)(y^A/y^B)}$, is aligned with the fundamental exchange rate $p^A/p^B$. Thus, by \eqref{eq:LiquidityTradersBelief}, the LP's fee revenue from the liquidity traders is approximately equal to
	\begin{align*}
		&\mathrm{FEE}:\approx\left[-p^Ay^A\overline{\varphi}^A\left(I^{-1}\right)\mathbf 1_{D^{A}<0}-p^By^B\overline{\varphi}^B\left(I^{-1}\right)\mathbf 1_{D^{B}<0}\right]f\xi\\
		&\approx
		\left[\eta\big(\left(I^{-1}/(1+f)\right)^{1-\eta}-1\big)\mathbf 1_{I^{-1}>1+f}+(1-\eta)\big(\left(I^{-1}(1+f)\right)^{-\eta}-1\big)\mathbf 1_{I^{-1}<1/(1+f)}\right]\cdot (p^Ay^A+p^By^B) f\xi,
	\end{align*}
	where $I$ denotes the ratio of the liquidity traders' belief of the exchange rate and the fundamental exchange rate and the approximate equality holds due to \eqref{eq:CGMMMOptimalTradingFunction} and because the marginal exchange rate on DEX, $\frac{\eta}{(1-\eta)(y^A/y^B)}$, is approximately equal to the fundamental exchange rate $p^A/p^B$. We observe that the fee revenue vanishes when $\eta$ approaches to 0 or 1. The value of $\eta$ that maximizes the LP's expected fee revenue ex ante the arrival of the liquidity traders depends on the distribution of $I$. With further approximation, we have
	\begin{align}
		\frac{\mathrm{FEE}}{p^Ay^A+p^By^B}:&\approx  \left[\eta\big(I^{\eta-1}-1\big)\mathbf 1_{I<(1+f)^{-1}}+(1-\eta)\big(I^{\eta}-1\big)\mathbf 1_{I>(1+f)}\right] f\xi\notag\\
		&\approx \eta(1-\eta)|I-1|\big(1_{I<(1+f)^{-1}}+\mathbf 1_{I>(1+f)}\big) f\xi,\label{eq:FeeApprox}
	\end{align}
	where the first approximation is because $f$ is small and the second approximation is due to Taylor expansion. Therefore, the approximated fee is proportional to $\eta(1-\eta)$.

	To summarize, both the LP's loss to the arbitrageurs and her fee revenue from the liquidity traders vanish when $\eta$ approaches to 0 or 1 and become larger when $\eta$ take intermediate values. Thus, there is a tradeoff between a larger fee revenue and a lower loss. The fee revenue depends on the arrival rate of the liquidity traders and the heterogeneity of their belief but does not depend on the size of fundamental exchange rate shock. The loss to the arbitrageurs, on the other hand, is determined by the size of fundamental exchange rate shock but is independent of the arrival rate and belief of the liquidity traders. Combining \eqref{eq:ILApprox} and \eqref{eq:FeeApprox}, the LP's expected fee revenue from the liquidity traders net her expected loss to the arbitrageurs, per unit liquidity pool value, is approximately equal to
	\begin{align}
	\eta(1-\eta)\Big[ & \alpha f\expect\big[ |I-1|\big(\mathbf 1_{I<(1+f)^{-1}}+\mathbf 1_{I>(1+f)}\big) \big]\notag \\
	&	-\frac{1}{2}\expect\left[R^B(\tilde R-1)^2\left(\mathbf 1_{\tilde R>(1+f)}+\mathbf 1_{\tilde R<1/(1+f)}\right)\right] \Big],\label{eq:ILNetFeeApprox}
	\end{align}
	where $\alpha$ is the probability of the arrival of the liquidity traders in one period.
	From this approximation, we conclude that the LP's net profit of liquidity provision on DEX is positive and, consequently, she is willing to invest on DEX when
	\begin{align}
		\alpha f\expect\big[ |I-1|\big(\mathbf 1_{I<(1+f)^{-1}}+\mathbf 1_{I>(1+f)}\big) \big]>\frac{1}{2}\expect\left[R^B(\tilde R-1)^2\left(\mathbf 1_{\tilde R>(1+f)}+\mathbf 1_{\tilde R<1/(1+f)}\right)\right].\label{eq:FeeDominateIL}
	\end{align}

	\subsection{LP's Optimal Investment Strategy}\label{sect:sub:LPallocation}
	We use value iteration to solve the DPE for the LP's optimal value and strategy. Recall that the LP cannot borrow or short-sell; see the portfolio constraint $\Omega$ in \eqref{eq:PortfolioConstraints}. In the following, we report the LP's optimal value and strategy at time 0, when she needs to choose both the investment and consumption strategies. Recall that with the CGMMM pricing function, the state variable of the LP's problem reduces to the ratio of the marginal exchange rate on DEX and the fundamental exchange rate, which always lies in the interval $[1/(1+f),1+f]$.
	
	Figure \ref{state} plots LP's optimal percentage investment on DEX and CEX with respect to the state variable. We observe that the optimal percentage investment on DEX is higher when the state variable, namely, the exchange rate ratio, is closer to 1. Indeed, in this case, there are fewer arbitrage opportunities taken by the arbitrageurs after a price shock. Therefore, the LP suffers a smaller amount of loss from the arbitrageurs and is more willing to invest on DEX. We also observe that the CEX investment is concentrated in asset $A$. This is a parameter-dependent corner solution, not a general prediction of the model. The LP jointly chooses her DEX liquidity position, CEX investments, and risk-free investment. In the baseline calibration as in Table \ref{tbl:parameters}, asset $A$ has a higher estimated mean return than asset $B$, and the two risky assets are highly correlated. Since the DEX liquidity position already gives exposure to both assets through the pool composition, an additional CEX position in asset $B$ provides little diversification benefit relative to the available alternatives. Therefore, the LP prefers asset $A$ to asset $B$ on the CEX, and under the no-short sale constraint, the LP does not investment in asset $B$. Under different return parameters, the LP may invest in asset $B$ on the CEX, as shown in the comparative statics in Figure \ref{fi:InvestmentVSReturnParameters}.
	
	\begin{figure}
		\centering
		\includegraphics[width = 0.55\textwidth]{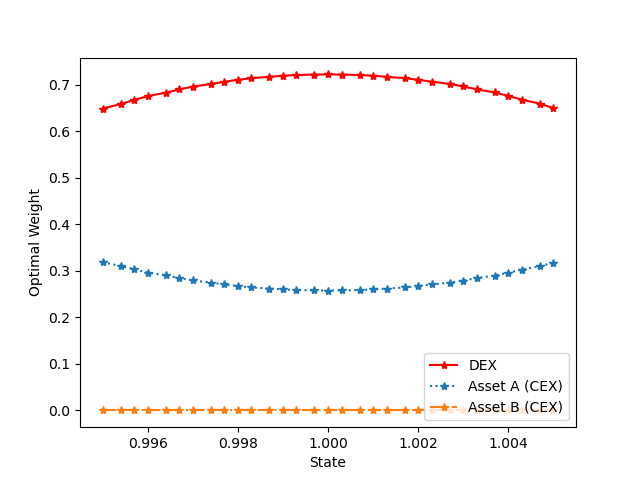}
		\caption {LP's optimal percentage investment on DEX (solid line) and in asset $A$ on CEX (dotted line) and in asset $B$ on CEX (dash-dotted line) with respect to the ratio of the marginal exchange rate on DEX and the fundamental exchange rate. The pricing function is CGMMM with $\eta=1/2$. The portfolio constraint is given by \eqref{eq:PortfolioConstraints}. Other model parameter values are given in Table \ref{tbl:parameters}.}\label{state}
	\end{figure}  
	
	Next, we vary the values of the asset return parameters $\mu_i$ and $\sigma_i$, $i\in \{A,B\}$ and study their impact on the LP's optimal strategy. We can show that there exists a unique stationary distribution for the state variable, namely, the ratio of the marginal exchange rate on DEX and the fundamental exchange rate; see, for instance,  \citet{Harris1960:TheExistenceOfStationary}. We then compute the LP's expected percentage investment on DEX and CEX, where the expectation is taken under the stationary distribution of the state variable, and plot it with respect to $\mu_A$, $\mu_B$, $\sigma_A$, and $\sigma_B$ in the top-left, top-right, bottom-left, and bottom-right panels, respectively, of Figure \ref{fi:InvestmentVSReturnParameters}. We consider 110 AMM pools on Uniswap v3 and estimate the return parameters of the assets in each pool using price date from Binance. We then consider the value of each parameter in Figure \ref{fi:InvestmentVSReturnParameters} ranging from the minimum value to the maximum value of the estimated parameters over these 110 AMM pools. Thus, the plot in Figure \ref{fi:InvestmentVSReturnParameters} covers a wide range of reasonable return parameter values.

	The LP's investment on DEX generates a fee revenue from the liquidity traders, incurs a loss to the arbitrageurs, and incurs an opportunity cost of holding assets in the liquidity pool. As discussed in Section \ref{subse:OpportunityCostSlippage}, with $\eta$ fixed at $1/2$, both a sufficiently small value and a sufficiently large value of $\mu_A$ lead to a large opportunity cost because the efficient allocation ratio becomes very different from the allocation ratio $\eta/(1-\eta)$ in the liquidity pool. On the other hand, a sufficiently small or large value of $\mu_A$ also leads to a large loss to the arbitrageurs because $R^B(\tilde R-1)^2$ becomes large; see \eqref{eq:ILApprox}. Therefore, it becomes unfavourable for the LP to invest on DEX when $\mu_A$ is very large or very small. When $\mu_A$ takes intermediate values and the other parameter values are given as in Table \ref{tbl:parameters}, the LP's fee revenue from the liquidity traders net the loss to the arbitrageurs is positive and her opportunity cost is low. Therefore, the LP would like to invest on DEX. This explains the top left panel of Figure \ref{fi:InvestmentVSReturnParameters}. The top right panel can be explained similarly.
	
	From \eqref{eq:ILNetFeeApprox}, we can see that when the fundamental exchange rate volatility, which is approximately equal to $\sqrt{\expect[(\tilde R-1)^2]}$ given that $\expect[\tilde R]\approx 1$, becomes larger, the LP's loss to the arbitrageurs also becomes larger. According to \eqref{eq:logExRateGrowth}, the fundamental exchange rate volatility is decreasing in $\sigma_A$ for $\sigma_A\le \rho\sigma_B=1.314\%$ and increasing in $\sigma_A$ for $\sigma_A\ge \rho\sigma_B=1.314\%$ with the values of $\sigma_B$ and $\rho$ in Table \ref{tbl:parameters}. Therefore, the LP's loss to the arbitrageurs is first decreasing and then increasing in $\sigma_A$. On the other hand, the LP's investment in asset $A$ in the efficient allocation is decreasing in $\sigma_A$. Thus, with $\eta=1/2$, the efficient allocation ratio deviates more from the allocation ratio $\eta/(1-\eta)$ in the liquidity pool when $\sigma_A$ becomes very small or very large. Consequently, the opportunity cost is first decreasing and then increasing in $\sigma_A$. Combining the effect of the LP's loss to the arbitrageurs and her opportunity cost, we conclude that she is willing to invest on DEX if and only if $\sigma_A$ takes intermediate values. This conclusion is confirmed in the bottom left panel of Figure \ref{fi:InvestmentVSReturnParameters}. The bottom right panel can be explained similarly.

	\begin{figure}
		\centering
		\begin{minipage}{0.35\textwidth}
			\includegraphics[width=1.15\textwidth]{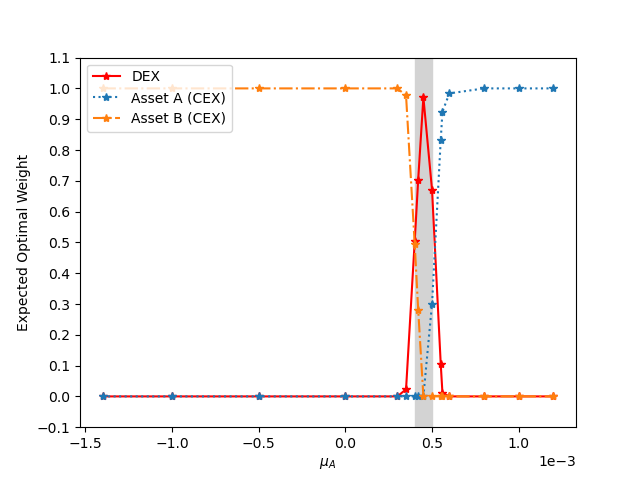}
		\end{minipage}
		\hspace{1em}
		\begin{minipage}{0.35\textwidth}
			\includegraphics[width=1.15\textwidth]{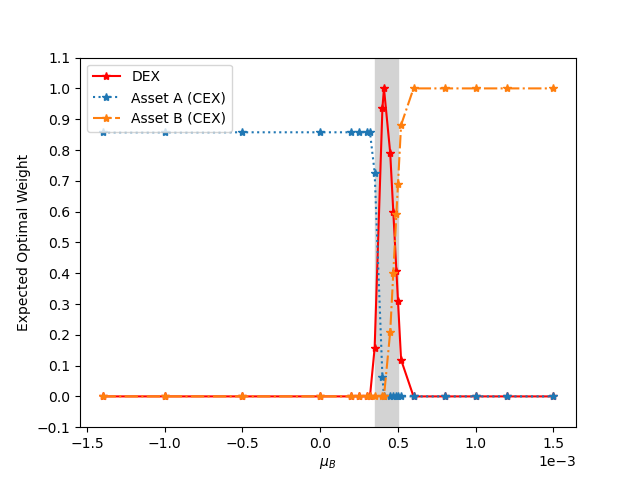}
		\end{minipage}
		\\
		\begin{minipage}{0.35\textwidth}
			\includegraphics[width=1.15\textwidth]{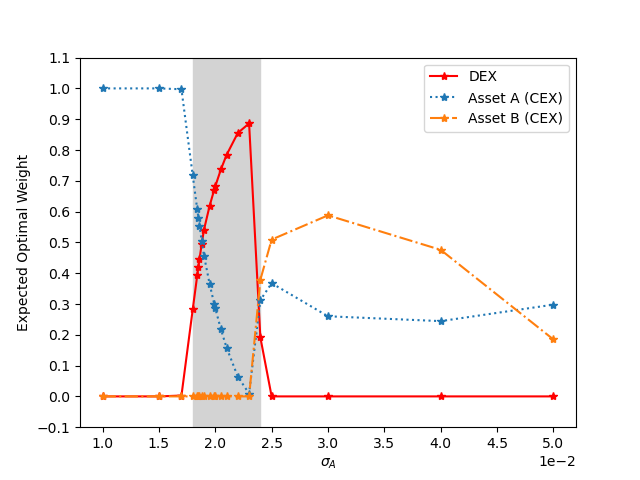}
		\end{minipage}
		\hspace{1em}
		\begin{minipage}{0.35\textwidth}
			\includegraphics[width=1.15\textwidth]{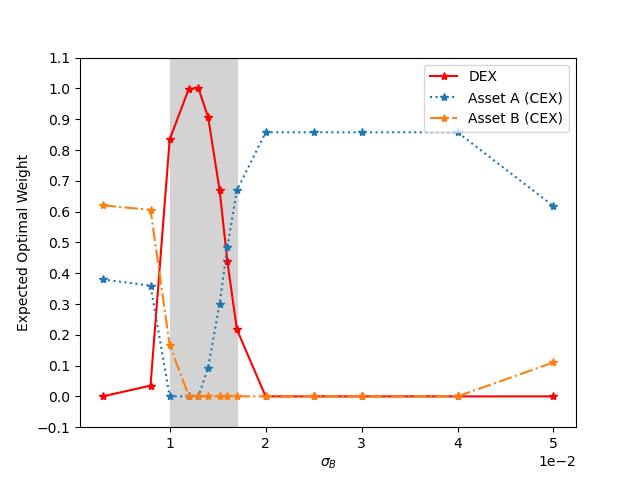}
		\end{minipage}
		\caption{LP's expected optimal percentage investment on DEX (solid line), in asset $A$ on CEX (dotted line), and in asset $B$ on CEX (dash-dotted line) with respect to $\mu_A$ (top left panel), $\mu_B$ (top right panel), $\sigma_A$ (bottom left panel), and $\sigma_B$ (bottom right panel). The pricing function is CGMMM, the portfolio constraint is given by \eqref{eq:PortfolioConstraints}, and the model parameter values are given in Table \ref{tbl:parameters}.}\label{fi:InvestmentVSReturnParameters}
	\end{figure}

	\subsection{Optimal Design of Trading Fee and Pricing Function}\label{subse:OptimalTradingFeePricingFunction}
	In the following, we study the optimal choice of the unit trading fee $f$ and the pricing function parameter $\eta$ and how the choice depends on the asset return parameters. We first compute the LP's optimal value with respect to $f$ and $\eta$, averaged under the stationary distribution of the state variable. We use the parameter values in Table \ref{tbl:parameters} and plot in Figure \ref{fi:LPValueVSfeta} the LP's optimal value with respect to $f$ (left panel) and to $\eta$ (right panel). With these parameter values, the LP invests on DEX. As discussed in Section \ref{subse:OpportunityCostSlippage}, a higher unit trading fee $f$ leads to a higher fee revenue of the LP from the liquidity traders per trade and a lower net loss to the arbitrageurs, and at the same time it leads to a smaller number of trades from the liquidity traders. Thus, the total fee revenue from the liquidity traders net the loss to the arbitrageurs is first increasing and then decreasing in $f$, explaining the bell shape of the LP's optimal value in the left panel of Figure \ref{fi:LPValueVSfeta}. On the other hand, as long as \eqref{eq:FeeDominateIL} holds, the LP's fee revenue from the liquidity traders net the loss to the arbitrageurs is positive and becomes larger when $\eta$ is closer to $1/2$; see \eqref{eq:ILNetFeeApprox}. The opportunity cost becomes higher when $\eta$ is very close 0 or 1 because, in this case, the allocation ratio in the liquidity pool, $\eta/(1-\eta)$, is further away from the efficient allocation ratio. Therefore, the LP's total value should peak when $\eta$ is in the middle range of $(0,1)$. This is confirmed by the plot in the right panel of Figure \ref{fi:LPValueVSfeta}.
	
	\begin{figure}
		\centering
		\includegraphics[width=0.35\textwidth]{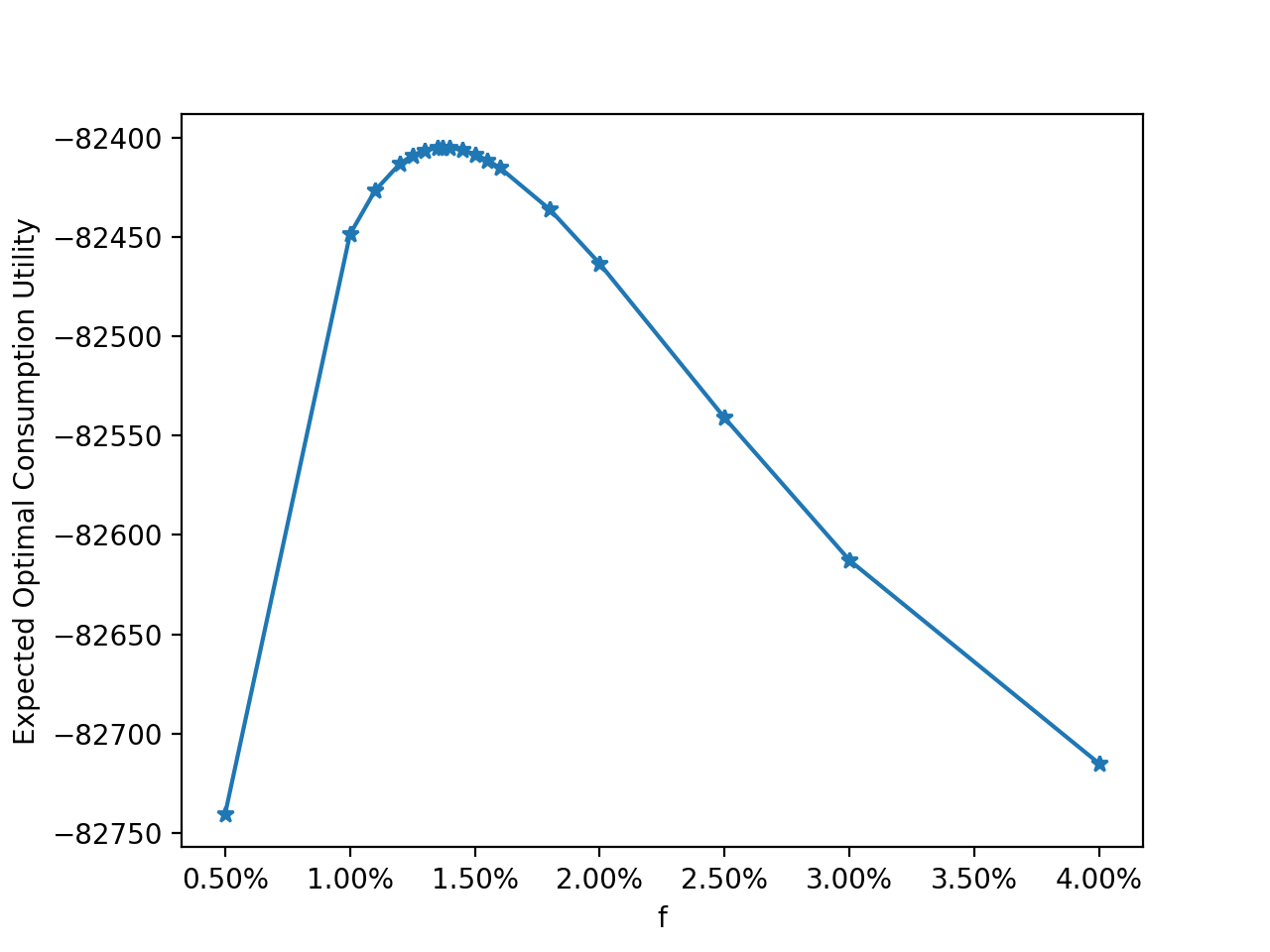}
		\includegraphics[width=0.35\textwidth]{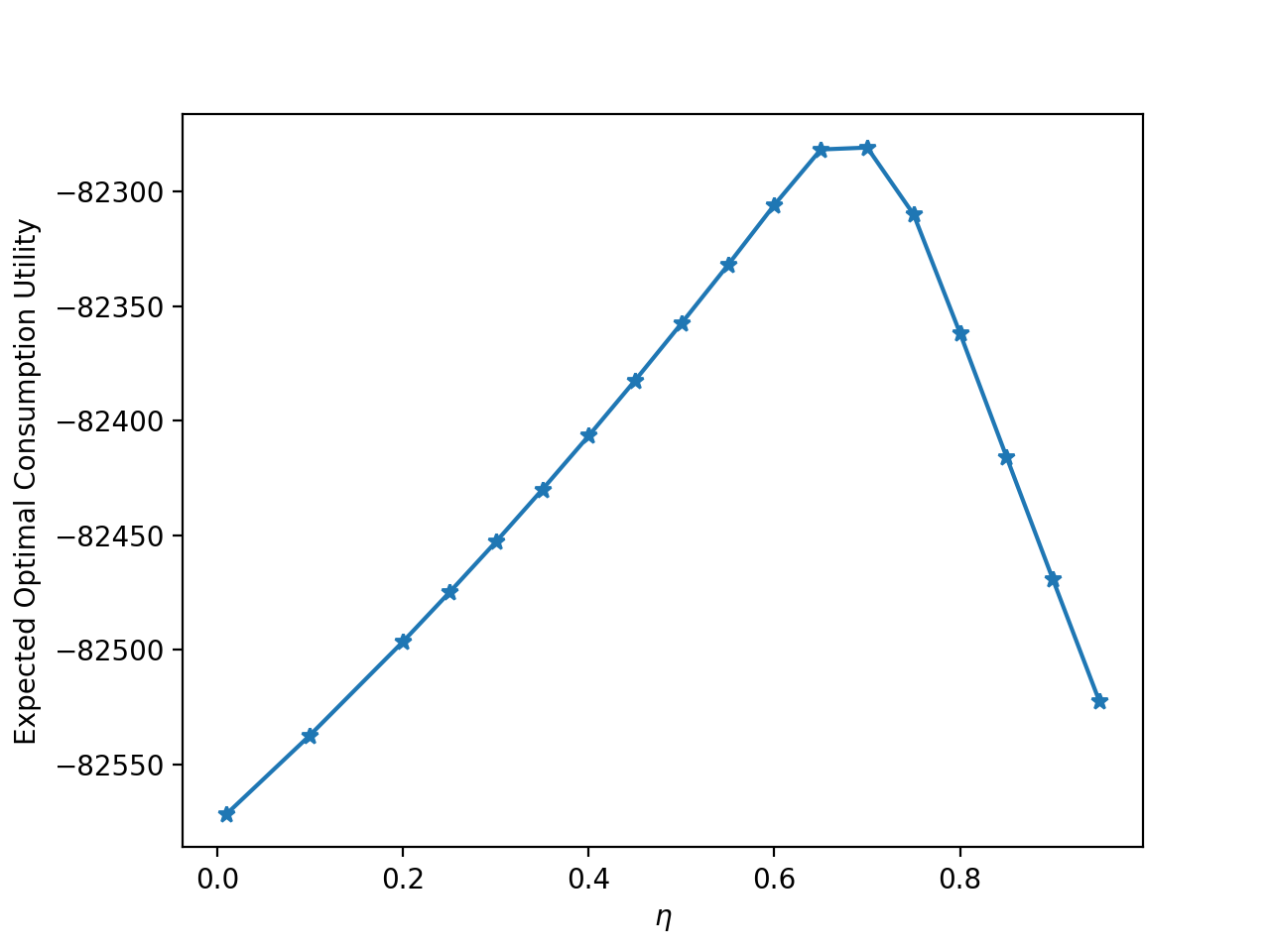}
		\caption{LP's expected optimal utility with respect to the unit trading fee $f$ (left panel) and pricing function parameter $\eta$ (right panel). The pricing function is CGMMM. The portfolio constraint is given by \eqref{eq:PortfolioConstraints} so borrowing or short-selling is not allowed. Other model parameter values are given in Table \ref{tbl:parameters}.}
		\label{fi:LPValueVSfeta}
	\end{figure}

	\subsubsection{Optimal Unit Trading Fee}\label{subsect:fee}
	
	We first plot the optimal $f$ with respect to $\mu_A$ (top left panel), $\mu_B$ (top right panel), $\sigma_A$ (middle left panel), and $\sigma_B$ (middle right panel) in Figure \ref{fi:FeeVSReturn}. Because the value of $f$ is relevant only when the LP invests on DEX, we only plot $f$ for the return parameter values with which the LP invests on DEX.
	
	We can observe from the top panels of Figure \ref{fi:FeeVSReturn} that the optimal fee is insensitive with respect to $\mu_A$ and $\mu_B$. This is because the fee revenue from the liquidity traders does not depend on the asset returns and the mean asset returns have little impact on the LP's losses to the arbitrageurs; see \eqref{eq:ILNetFeeApprox}. On the other hand, the optimal fee $f$ is increasing in $\sigma_A$. This is because the fundamental exchange rate volatility $\sqrt{\sigma_A^2+\sigma_B^2-2\rho \sigma_A\sigma_B}$ is increasing in $\sigma_A$ for $\sigma_A\ge \rho\sigma_B=1.314\%$ with the values of $\sigma_B$ and $\rho$ in Table \ref{tbl:parameters}. A higher volatility of the exchange rate leads to a larger loss to the arbitrageurs; see \eqref{eq:ILNetFeeApprox}. Thus, the LP would choose a higher unit trading fee to mitigate the loss to the arbitrageurs. Similarly, the fundamental exchange rate volatility is decreasing in $\sigma_B$ for $\sigma_B\le \rho\sigma_A= 1.712\%$ with the values of $\sigma_A$ and $\rho$ in Table \ref{tbl:parameters}. Therefore, a smaller value of $\sigma_B$ leads to a higher fundamental exchange rate volatility and the LP prefers a higher unit trading fee. In the bottom left panel, we plot the optimal $f$ with respect to $\sigma_A$ with the fundamental exchange rate volatility $\sigma$ fixed; i.e., we vary $\sigma_A$ and $\sigma_B$ together by keeping the fundamental exchange rate volatility unchanged. We find that the optimal $f$ is insensitive with respect to $\sigma_A$. Similarly, the bottom right panel shows that the optimal $f$ is insensitive with respect to $\sigma_B$ if we fix $\sigma$. Therefore, the optimal unit trading fee depends on the asset return parameters only through the fundamental exchange rate volatility $\sigma$.
	
	\begin{figure}
		\centering
		\begin{minipage}{0.35\textwidth}
			\includegraphics[width=1.15\textwidth]{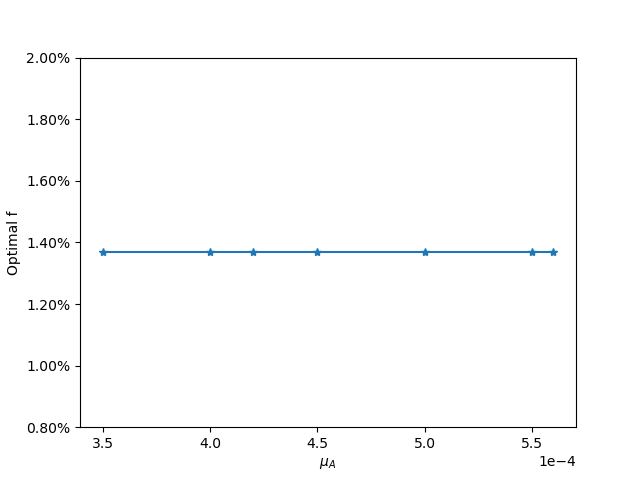}
		\end{minipage}
		\hspace{1em}
		\begin{minipage}{0.35\textwidth}
			\includegraphics[width=1.15\textwidth]{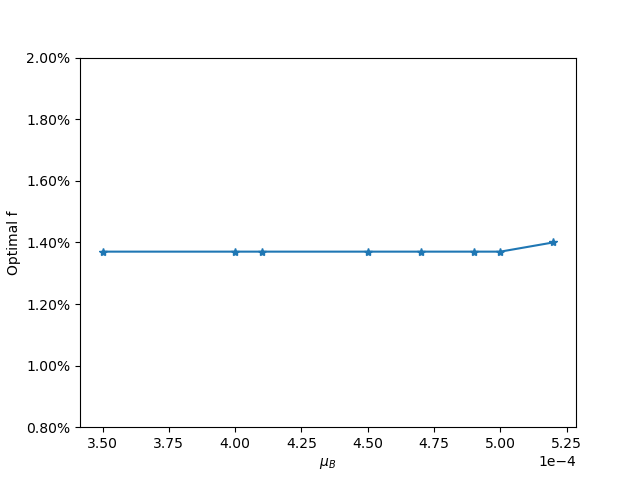}
		\end{minipage}
		\\
		\begin{minipage}{0.35\textwidth}
			\includegraphics[width=1.15\textwidth]{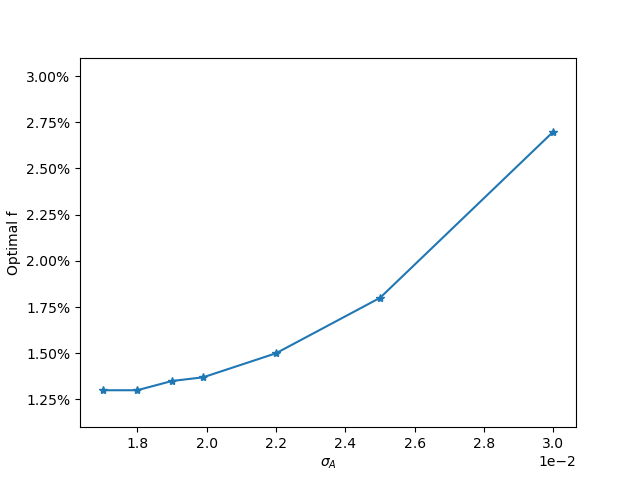}
		\end{minipage}
		\hspace{1em}
		\begin{minipage}{0.35\textwidth}
			\includegraphics[width=1.15\textwidth]{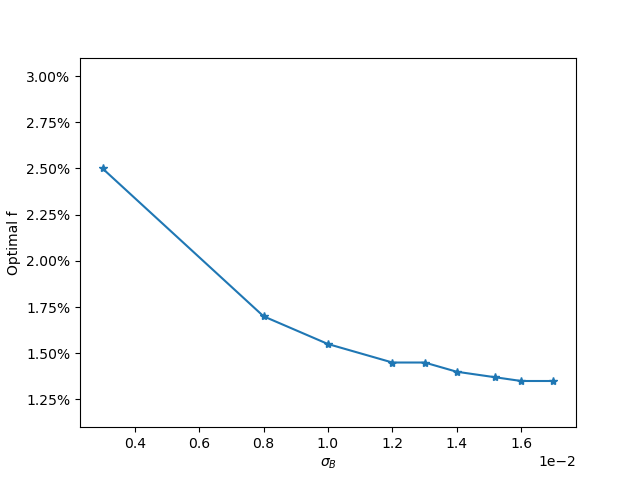}
		\end{minipage}
		\\
		\begin{minipage}{0.35\textwidth}
			\includegraphics[width=1.15\textwidth]{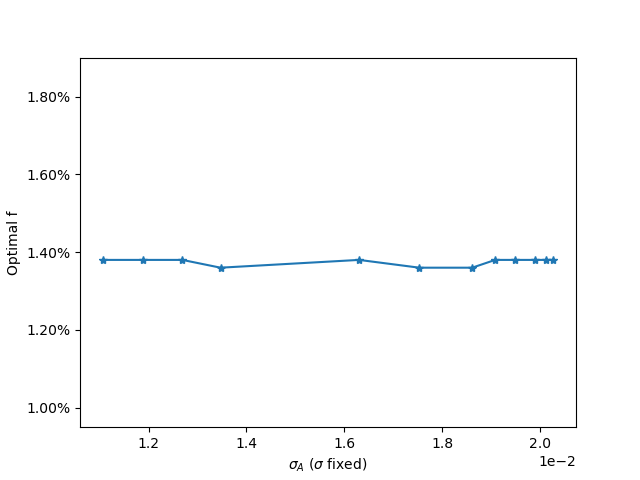}
		\end{minipage}
		\hspace{1em}
		\begin{minipage}{0.35\textwidth}
			\includegraphics[width=1.15\textwidth]{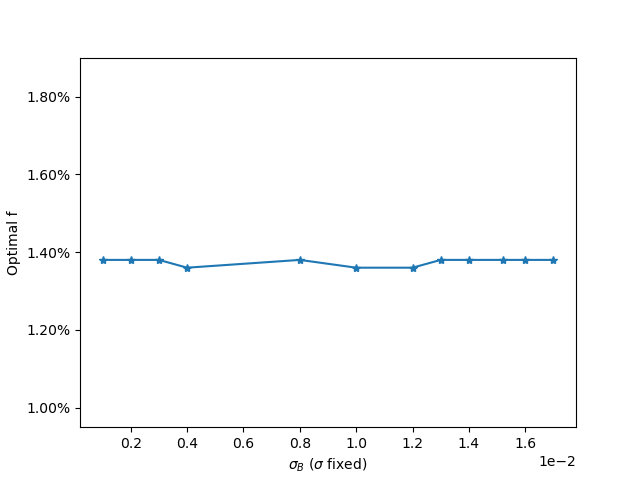}
		\end{minipage}
		\caption{Optimal unit trading fee $f$ maximizing the LP's utility with respect to $\mu_A$ (top left panel), $\mu_B$ (top right panel), $\sigma_A$ (middle left panel), $\sigma_B$ (middle right panel), $\sigma_A$ with $\sigma:=\sqrt{\sigma_A^2+\sigma_B^2-2\rho \sigma_A\sigma_B}$ fixed (bottom left panel), and $\sigma_B$ with $\sigma$ fixed (bottom right panel). The pricing function is CGMMM with $\eta=1/2$. The portfolio constraint is given by \eqref{eq:PortfolioConstraints}, so borrowing or short-selling is not allowed. Other model parameter values are given in Table \ref{tbl:parameters}.} 
		\label{fi:FeeVSReturn}
	\end{figure}

	\subsubsection{Optimal Pricing Function}
	Next, we plot in Figure \ref{fi:OptimalEta} the optimal $\eta$ with respect to $\mu_A$ (top left panel), $\mu_B$ (top bottom panel), $\sigma_A$ (bottom left panel), and $\sigma_B$ (bottom right panel). We plot $\eta$ only for the return parameter values such that the LP invests on DEX, because otherwise the value of $\eta$ is irrelevant.
	
	Recall that as long as \eqref{eq:FeeDominateIL} holds, the LP's fee revenue from the liquidity traders net the loss to the arbitrageurs is positive and is maximized when $\eta=1/2$. Thus, if we do not consider the opportunity cost, the LP would choose $\eta=1/2$. However, with the opportunity cost, the LP has to achieve a trade-off between setting $\eta=1/2$ so as to have a large fee revenue net the loss to the arbitrageurs and setting $\eta/(1-\eta)$ as close to the efficient allocation ratio as possible to minimize the opportunity cost. Because the efficient allocation ratio is increasing in $\mu_A$, the LP's optimal choice of $\eta$ is also increasing in $\mu_A$. This explains the top left panel of Figure \ref{fi:OptimalEta}. The other three panels of the figure can explained similarly.

	\begin{figure}
		\centering
		\begin{minipage}{0.35\textwidth}
			\includegraphics[width=1.15\textwidth]{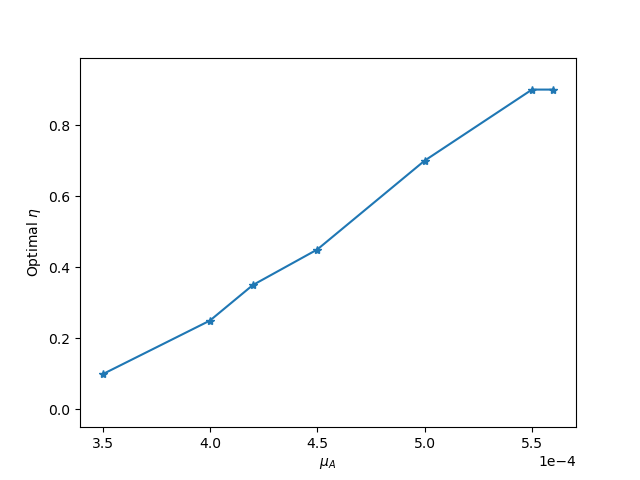}
		\end{minipage}
		\hspace{1em}
		\begin{minipage}{0.35\textwidth}
			\includegraphics[width=1.15\textwidth]{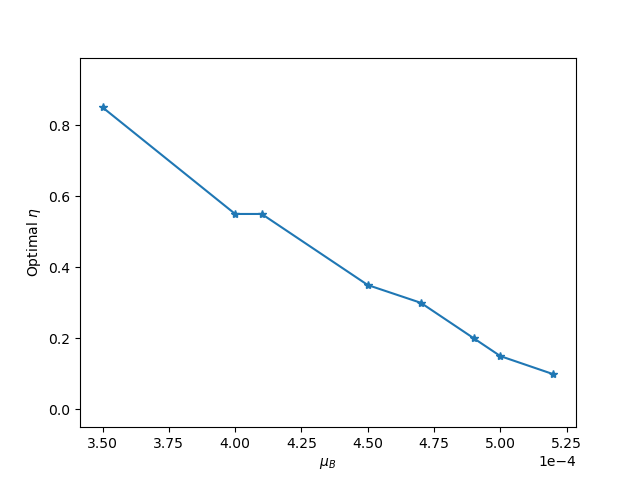}
		\end{minipage}
		\\
		\begin{minipage}{0.35\textwidth}
			\includegraphics[width=1.15\textwidth]{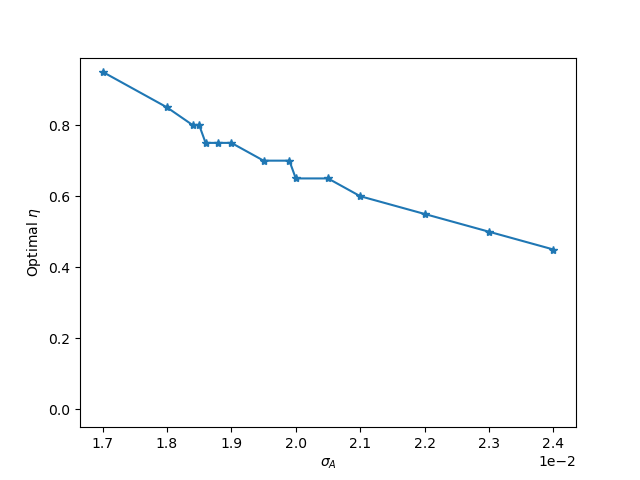}
		\end{minipage}
		\hspace{1em}
		\begin{minipage}{0.35\textwidth}
			\includegraphics[width=1.15\textwidth]{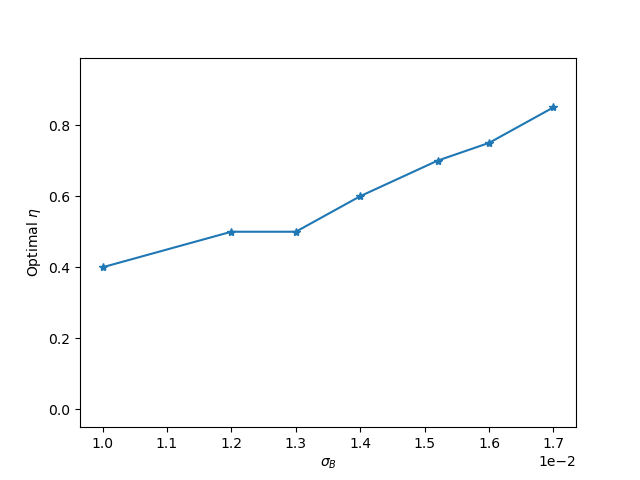}
		\end{minipage}
		\caption{Optimal pricing function parameter $\eta$ maximizing the LP's utility with respect to $\mu_A$ (top left panel), $\mu_B$ (top right panel), $\sigma_A$ (bottom left panel), and $\sigma_B$ (bottom right panel). The pricing function is CGMMM. The portfolio constraint is given by \eqref{eq:PortfolioConstraints}, so borrowing or short-selling is not allowed. Other model parameter values are given in Table \ref{tbl:parameters}.}\label{fi:OptimalEta}
		\begin{minipage}{0.35\textwidth}
			\includegraphics[width=1.15\textwidth]{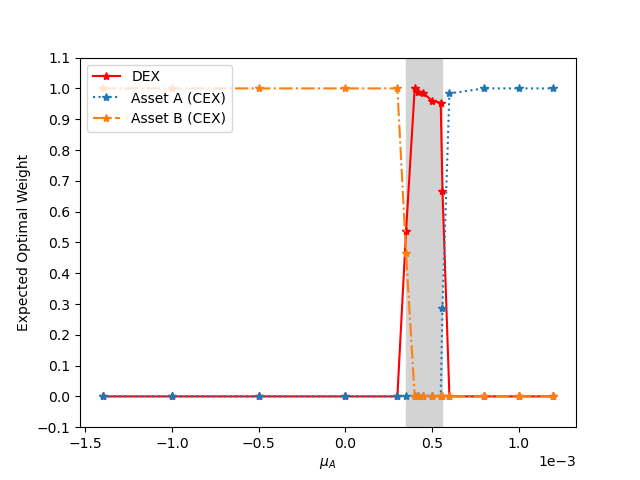}
		\end{minipage}
		\hspace{1em}
		\begin{minipage}{0.35\textwidth}
			\includegraphics[width=1.15\textwidth]{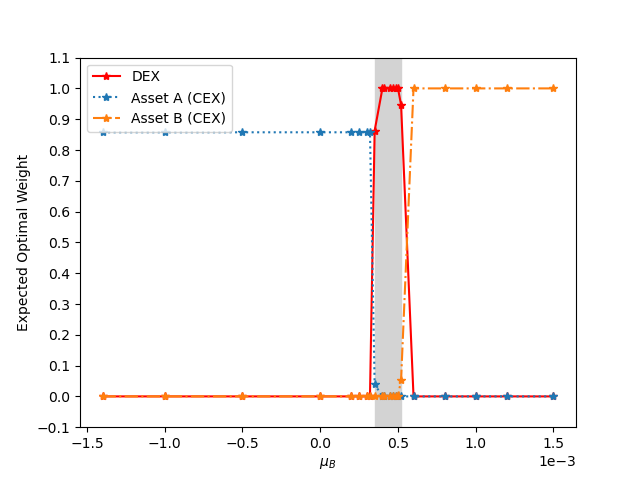}
		\end{minipage}
		\\
		\begin{minipage}{0.35\textwidth}
			\includegraphics[width=1.15\textwidth]{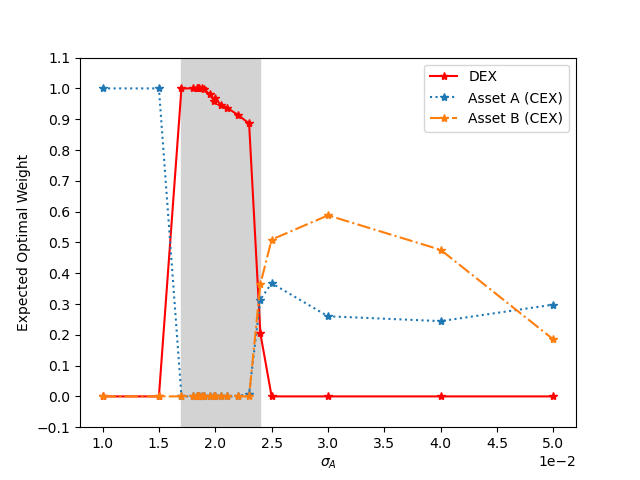}
		\end{minipage}
		\hspace{1em}
		\begin{minipage}{0.35\textwidth}
			\includegraphics[width=1.15\textwidth]{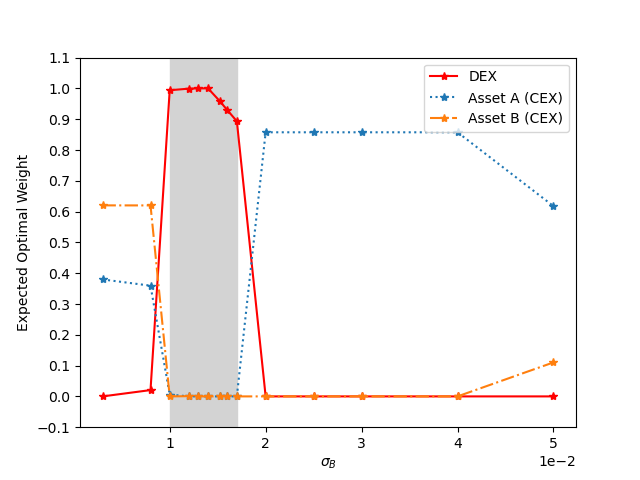}
		\end{minipage}
		\caption{LP's expected optimal percentage investment on DEX (solid line), in asset $A$ on CEX (dotted line), and in asset $B$ on CEX (dash-dotted line) with respect to $\mu_A$ (top left panel), $\mu_B$ (top right panel), $\sigma_A$ (bottom left panel), and $\sigma_B$ (bottom right panel) under the optimal pricing function. The pricing function is CGMMM with $\eta$ chosen optimally to maximize the LP's utility for each return parameter value. The portfolio constraint is given by \eqref{eq:PortfolioConstraints}, so borrowing or short-selling is not allowed. Other model parameter values are given in Table \ref{tbl:parameters}.}\label{fi:InvestmentVSReturnParametersOptimalEta}
	\end{figure}

	In Figure \ref{fi:InvestmentVSReturnParametersOptimalEta}, we plot the LP's optimal investment with respect to the return parameters $\mu_A$ (top left panel), $\mu_B$ (top bottom panel), $\sigma_A$ (bottom left panel), and $\sigma_B$ (bottom right panel), with $\eta$ chosen optimally for each parameter value. We can observe that compared to Figure \ref{fi:InvestmentVSReturnParameters} where $\eta$ is fixed at $1/2$, allowing the LP to choose $\eta$ increases her investment on DEX.

	\subsubsection{The Effect of $\sigma_I$ and $\alpha$}\label{subsubse:sigmaIAlpha}
	We also conducted robustness check by varying $\alpha$ and $\sigma_I$ while holding the other market parameters at their default values in Table \ref{tbl:parameters}. Because of space limit, we present the main conclusions below without including detailed numerical results.
			
	First, the previous conclusions regarding the optimal trading fee and pricing function still hold: (i) The optimal unit trading fee remains relatively insensitive to $\mu_A$, $\mu_B$, and individual asset volatilities while fixing the exchange-rate volatility $\sigma$; (ii) The opportunity cost is the dominant force of determining the optimal pricing function.

	Second, the optimal unit trading fee is decreasing in $\alpha$ and increasing in $\sigma_I$. This can be explained as follows:  Recall \eqref{eq:ILNetFeeApprox} and denote by $H(f):=f\expect\big[ |I-1|\big(\mathbf 1_{I<(1+f)^{-1}}+\mathbf 1_{I>(1+f)}\big) \big]$ and $L(f):=\frac{1}{2}\expect\big[R^B(\tilde R-1)^2\big(\mathbf 1_{\tilde R>(1+f)}+\mathbf 1_{\tilde R<1/(1+f)}\big)\big]$, which stand for the LP's expected fee revenue per unit arrival of liquidity traders and the LP's expected loss to the arbitrageurs, respectively. Then, the LP's net profit from the AMM as in \eqref{eq:ILNetFeeApprox} can be written as $\eta(1-\eta)\big[\alpha H(f) - L(f)\big]$, and the optimal unit trading fee $f$ is the one that maximizes this value because the opportunity loss does not depend on $f$. In our numerical setting, $H(f)$ is first increasing and then decreasing in $f$, correctly reflecting the trade-off that a higher $f$ leads to a higher fee revenue of the LP per liquidity trade and at the same time to a smaller number of trades. On the other hand, $L(f)$ is decreasing in $f$, because a larger $f$ discourages arbitrageurs and thus leads to a lower net loss of the LP to the arbitrageurs. In our numerical setting, $\alpha H(f) - L(f)$ is quasi-concave in $f$ and thus the optimal $f$ is solved by the first-order condition $\alpha H'(f)=L'(f)$, which stipulates that the marginal change in the LP's fee revenue from the liquidity traders is equal to the marginal change in his loss to the arbitrageurs. Note that both $\alpha H'(f)$ and $L'(f)$ are negative at the optimal $f$. As a result, when $\alpha$ becomes larger, $\alpha H'(f)$ becomes more negative, so the optimal $f$ must become smaller so as to equate $\alpha H'(f)$ and $L'(f)$. When $\sigma_I$ becomes larger, $H'(f)$ becomes larger in a reasonable range of $f$ in our numerical setting, reflecting the idea that a more heterogeneous belief among the liquidity traders implies more trades and thus higher LP's incremental fee revenue from increasing the unit trading fee. In consequence, the optimal $f$ must be larger in order to equate $\alpha H'(f)$ and $L'(f)$.

	Third, when $\alpha$ or $\sigma_I$ becomes larger, the optimal $\eta$ becomes closer to 0.5. Indeed, a larger $\alpha$ or a larger $\sigma_I$ increases $\alpha H(f)$, the LP's fee revenue from liquidity traders. Therefore, the LP's net profit from the AMM becomes more dominant to his opportunity cost. In consequence, the optimal $\eta$ would move closer to the value that maximizes the LP's net profit from the AMM, which is 0.5.

	\subsubsection{Short Sales on CEX}
	
	In the previous discussions, we did not allow short sales on CEX. Some crypto-assets have tradeable futures contracts on them, so one can effectively short-sell these assets using the futures contracts. The amount of assets that one can short-sell, however, is limited due to the margin requirement. In the following, we study whether short sales on CEX can reduce the opportunity cost of holding assets in the liquidity pool on DEX. To this end, we set the portfolio constraint as
	\begin{align}
		\Omega = \{\vecomega=(\omega^M,\omega^A,\omega^B)\mid \omega_A\ge -1,\omega^B\ge -1,\omega^M\ge 0, \omega_A+\omega_B+\omega_M\le 2\}.\label{eq:PortfolioWeakConstraints}
	\end{align}
	In other words, we allow the LP to short-sell each asset on CEX up to 100\% of her wealth. We also restrict that the total amount that the LP can invest on DEX and CEX is no more than 200\% of her wealth; i.e., the LP can borrow up to 100\% of her wealth at the risk-free rate.
	
	Fixing $\eta=1/2$, we plot in Figure \ref{fi:InvestmentVSReturnParametersShortSale} the LP's optimal investment on DEX and CEX with respect to the asset return parameters, assuming the portfolio constraint \eqref{eq:PortfolioWeakConstraints}, which allows for short sales on CEX. Compared to Figure \ref{fi:InvestmentVSReturnParameters}, we can observe that allowing for short sales increases the LP's investment on DEX in that the range of parameter values under which the LP is willing to invest on DEX becomes wider. This is because short sales on CEX can reduce the opportunity cost: The LP can take short positions in one of the assets on CEX so that her total holdings of the two assets become more efficient.

	\begin{figure}
		\centering
		\begin{minipage}{0.35\textwidth}
			\includegraphics[width=1.15\textwidth]{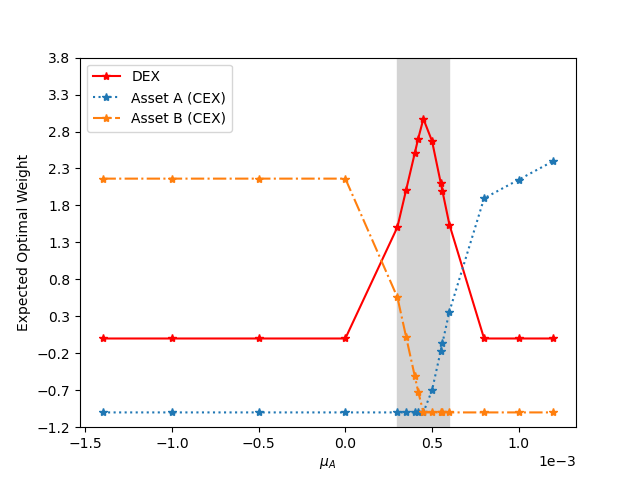}
		\end{minipage}
		\hspace{1em}
		\begin{minipage}{0.35\textwidth}
			\includegraphics[width=1.15\textwidth]{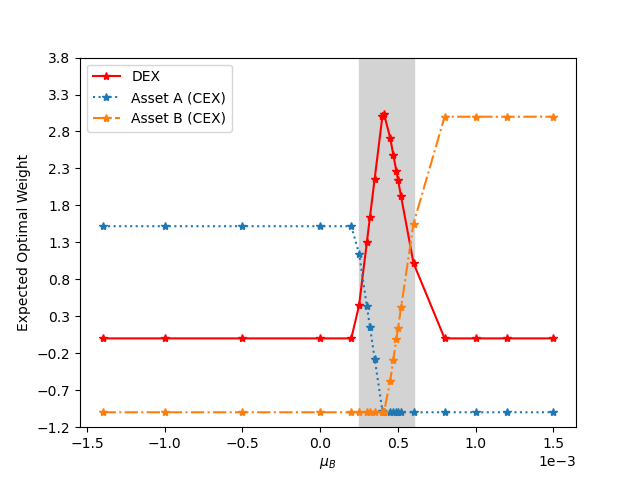}
		\end{minipage}
		\\
		\begin{minipage}{0.35\textwidth}
			\includegraphics[width=1.15\textwidth]{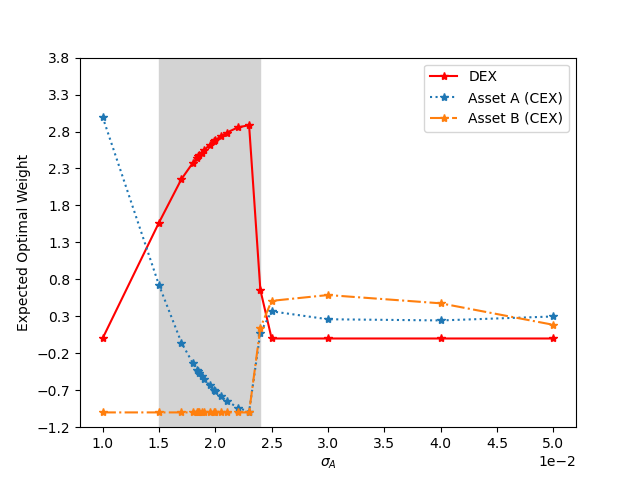}
		\end{minipage}
		\hspace{1em}
		\begin{minipage}{0.35\textwidth}
			\includegraphics[width=1.15\textwidth]{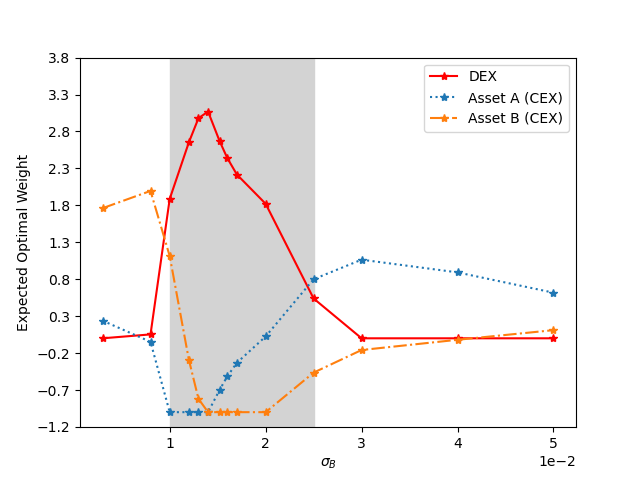}
		\end{minipage}
		\caption{LP's expected optimal percentage investment on DEX (solid line), in asset $A$ on CEX (dotted line), and in asset $B$ on CEX (dash-dotted line) with respect to $\mu_A$ (top left panel), $\mu_B$ (top right panel), $\sigma_A$ (bottom left panel), and $\sigma_B$ (bottom right panel). The pricing function is CGMMM with $\eta=1/2$. The portfolio constraint is given by \eqref{eq:PortfolioWeakConstraints}, so short sales on CEX are allowed. Other model parameter values are given in Table \ref{tbl:parameters}.}\label{fi:InvestmentVSReturnParametersShortSale}
	\end{figure}

	In Figure \ref{fi:OptimalEtaShortSale}, we plot the optimal $\eta$ with respect to the return parameters when short sales are allowed. Again, we only plot $\eta$ for those parameter values under which the LP invests on DEX, because otherwise the value of $\eta$ is irrelevant. We can observe the same pattern of the optimal $\eta$ with respect to the return parameters as in the case of no short sale (Figure \ref{fi:OptimalEta}). This shows that even if short sales are allowed, the effect of the opportunity cost still persists and the LP would choose $\eta$ to reduce the opportunity cost.

	\begin{figure}
		\centering
		\begin{minipage}{0.35\textwidth}
			\includegraphics[width=1.15\textwidth]{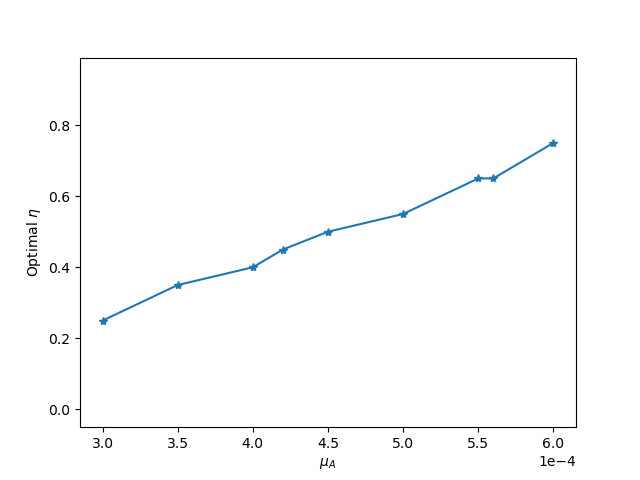}
		\end{minipage}
		\hspace{1em}
		\begin{minipage}{0.35\textwidth}
			\includegraphics[width=1.15\textwidth]{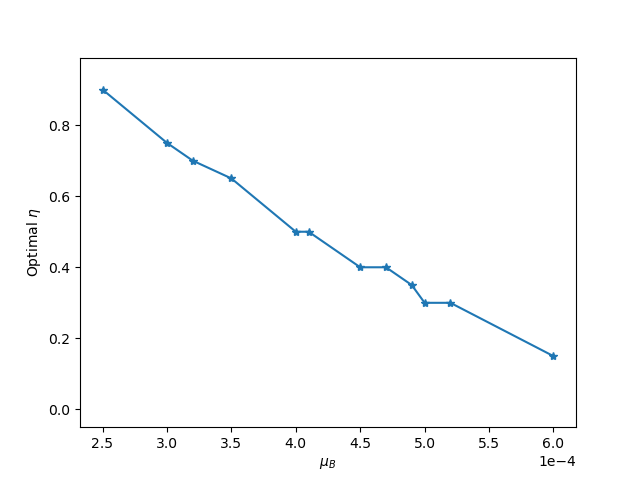}
		\end{minipage}
		\\
		\begin{minipage}{0.35\textwidth}
			\includegraphics[width=1.15\textwidth]{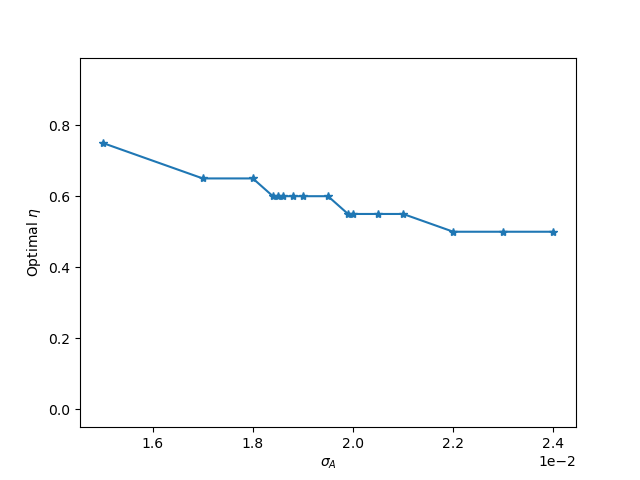}
		\end{minipage}
		\hspace{1em}
		\begin{minipage}{0.35\textwidth}
			\includegraphics[width=1.15\textwidth]{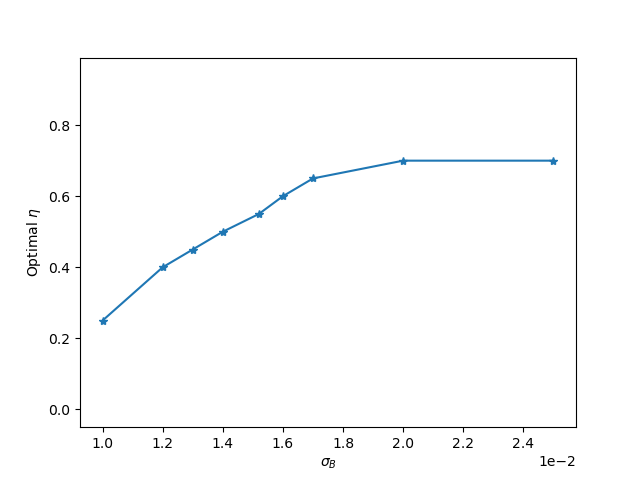}
		\end{minipage}
		\caption{Optimal pricing function parameter $\eta$ maximizing the LP's utility with respect to $\mu_A$ (top left panel), $\mu_B$ (top right panel), $\sigma_A$ (bottom left panel), and $\sigma_B$ (bottom right panel). The pricing function is CGMMM. The portfolio constraint is given by \eqref{eq:PortfolioWeakConstraints}, so short sales on CEX are allowed. Other model parameter values are given in Table \ref{tbl:parameters}.}\label{fi:OptimalEtaShortSale}
	\end{figure}
	
	Figure \ref{fi:InvestmentVSReturnParametersOptimalEtaShortSale} plots the LP's optimal investment on DEX and CEX with respect to the return parameters with $\eta$ chosen optimally and with short sales allowed. Compared to Figure \ref{fi:InvestmentVSReturnParametersShortSale}, we find that allowing the LP to choose $\eta$ increases her investment on DEX. We also observe that when the LP invests on DEX, she short-sells both assets $A$ and $B$ on CEX. The short sale is mainly for the purpose of funding the investment on DEX rather than reducing the opportunity cost. The opportunity cost is reduced by setting the value of $\eta$ optimally.
	
	\begin{figure}
		\centering
		\begin{minipage}{0.35\textwidth}
			\includegraphics[width=1.15\textwidth]{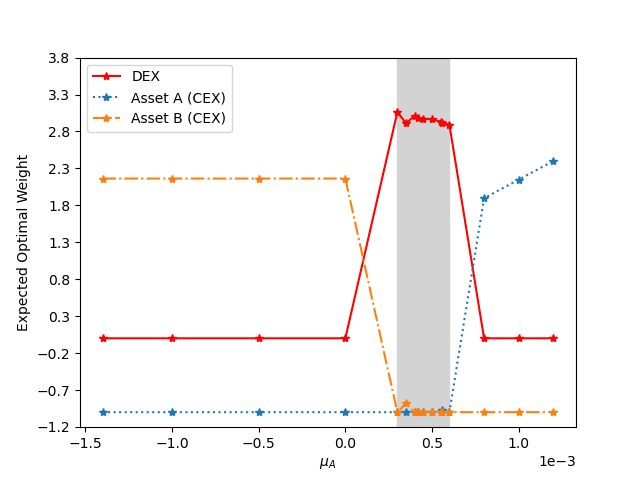}
		\end{minipage}
		\hspace{1em}
		\begin{minipage}{0.35\textwidth}
			\includegraphics[width=1.15\textwidth]{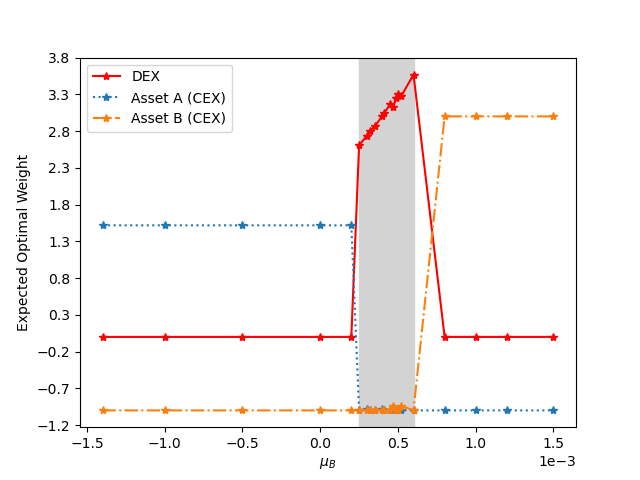}
		\end{minipage}
		\\
		\begin{minipage}{0.35\textwidth}
			\includegraphics[width=1.15\textwidth]{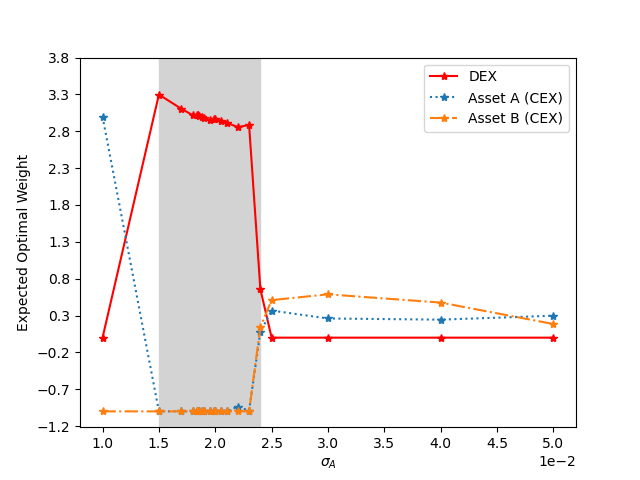}
		\end{minipage}
		\hspace{1em}
		\begin{minipage}{0.35\textwidth}
			\includegraphics[width=1.15\textwidth]{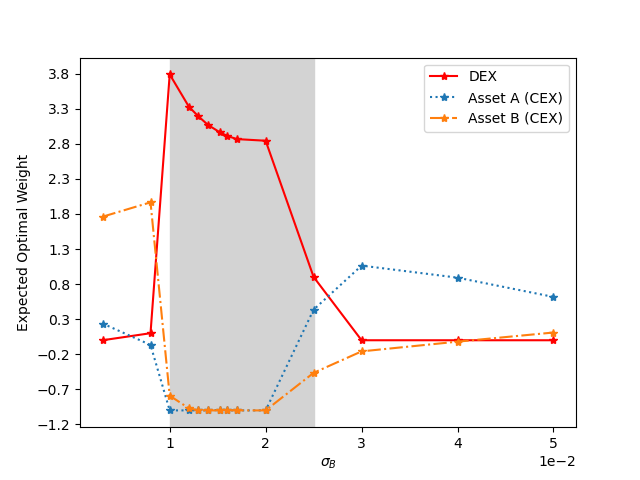}
		\end{minipage}
		\caption{LP's optimal percentage investment on DEX (solid line), in asset $A$ on CEX (dotted line), and in asset $B$ on CEX (dash-dotted line) with respect to $\mu_A$ (top left panel), $\mu_B$ (top right panel), $\sigma_A$ (bottom left panel), and $\sigma_B$ (bottom right panel) under the optimal pricing function. The pricing function is CGMMM with $\eta$ chosen optimally to maximize the LP's utility for each return parameter value. The portfolio constraint is given by \eqref{eq:PortfolioWeakConstraints}, so short sales on CEX are allowed. Other model parameter values are given in Table \ref{tbl:parameters}.}\label{fi:InvestmentVSReturnParametersOptimalEtaShortSale}
	\end{figure}

	\subsection{Empirical Test}\label{sect:implications}
	
	In this subsection, we conduct an empirical test of the following model implication derived in Section \ref{subsect:fee}: the return parameters affect the optimal choice of unit trading fee $f$ only through the fundamental exchange rate volatility $\sigma$. 
	
	We use data in 110 AMM pools on Uniswap v3 and asset price data on Binance\footnote{See \url{https://data.binance.vision/}.} from July 1, 2022 to June 30, 2023. Note that Uniswap v2 uses a fixed fee and thus does not provide comparable cross-sectional variation in fee choices necessary for our empirical analysis. By contrast, Uniswap v3 provides observable variation in fee tiers across pools. Our model uses the protocol in Uniswap v2, and Uniswap v3 differs from Uniswap v2 mainly in a feature called concentrated liquidity.\footnote{See \url{https://thegraph.com/hosted-service/subgraph/uniswap/uniswap-v3} for info about Uniswap v3 pools.} Therefore, our theoretical analysis and model implications do not automatically extend to the case of Uniswap v3. However, we believe that our main model implications remain qualitatively the same in Uniswap v3, because liquidity provision in Uniswap v3, in which the LP can choose to provide liquidity in a certain range of the exchange rate, is locally similar to liquidity provision in Uniswap v2. Therefore, the empirical analysis based on Uniswap v3 data below still sheds some light on our model implications.
	
	For each AMM pool $j$, we collect its unit trading fee $f_j$ from Uniswap v3 and estimate the volatilities of the two assets in the pool $\sigma_{A,j}$ and $\sigma_{B,j}$ and their correlation $\rho_{j}$ using the Binance asset price data. Then we estimate the volatility of the fundamental exchange rate for the pool $j$ as
		$
		\sigma_{j} = \sqrt{\sigma_{A,j}^2-2\rho_{j}\sigma_{A,j}\sigma_{B,j}+\sigma_{B,j}^2 }.
		$
	We then run a linear regression between the unit trading fee $f$ and the corresponding fundamental exchange rate volatility $\sigma$, asset correlation $\rho$, and asset $B$'s volatility $\sigma_B$ across the 110 AMM pools, and the results are reported in Table \ref{Tbl:em_result}. At the 5\% significance level, the unit trading fee is positively correlated to the fundamental exchange rate volatility $\sigma$ but not related to the asset correlation $\rho$ or asset $B$'s volatility $\sigma_B$. This observation is consistent with our model implication in Section \ref{subsect:fee}.
	
	\begin{table}
		\centering
		\footnotesize
		\begin{tabular}{@{}ccccc@{}}
			\toprule
			& Intercept & $\rho$  & $\sigma_B$ &  $\sigma$ \\
			\midrule
			coefficient & $0.3454$ & $-0.0040$  & $-3.2821$ & $14.5096$ \\
			$p$-value & $0.1602$ &	$0.9873$ &	$0.5054$	& $0.0420$\\
			\bottomrule
		\end{tabular}
		\caption{Linear regression of the unit trading fee $f$ with respect to the correlation of assets $A$ and $B$'s prices ($\rho$), asset $B$'s volatility $\sigma_B$, and the volatility of the fundamental exchange rate of asset $A$ over asset $B$ ($\sigma$) across 110 AMM pools on Uniswap v3.}\label{Tbl:em_result}
	\end{table}

	\section{Conclusion}\label{sect:conclusion}
	In this paper, we have proposed a model for the LP's liquidity provision on DEX. Our model features LP's investment allocations on both DEX and CEX and her consumption, as well as the shocks to LP's return on DEX from both the potential arrival of liquidity traders and the fundamental asset price changes. We have formulated the model as a stochastic control problem, established the dynamic programming equation for the problem, and proved the existence and uniqueness of the solution.
	
	Through numerical experiments, we have found that there exists an optimal level of unit trading fee and optimal pricing function in the class of CGMMM functions that maximize the LP's utility. We have shown that the optimal unit trading fee depends on the return distributions of the assets in the liquidity pool through the volatility of their fundamental exchange rate: The higher the volatility is, the higher the unit trading fee the LP would choose. This finding is supported by an empirical study using liquidity pools on Uniswap v3. We have also found that the liquidity provider would choose the pricing function to minimize the opportunity cost that arises from the inefficient asset allocation in the liquidity pool. 
	
	We assume constant market parameters in our model, so the comparative statics of the optimal fee and pricing function with respect to these parameters suggest how AMM parameters should respond to unexpected, persistent changes in the market conditions. In practice, however, market conditions change over time, e.g., the asset volatility is not constant. Therefore, although our finding is consistent with the recent evolution of AMM protocols to allow for variable and dynamic fees,\footnote{Uniswap v3 introduced multiple fee tiers for the same token pair, and Uniswap v4 further allows dynamic fees and customized pool logic through hooks.} a comprehensive analysis of the optimal design of dynamic fee and pricing function calls for a model that accounts for dynamic market conditions and frictions such as implementation costs, gas costs, governance or coordination frictions, and manipulation concerns. This can be an interesting topic for future research.

	\appendix
	
	\section{Proofs}\label{proof}
	
	\begin{pfof}{Proposition \ref{le:TradersProblem}}
		We can rewrite the objective function of \eqref{eq:TraderProblem} as
		\begin{align*}
			by^B\left[(a/b)\big(1+f\mathbf 1_{D^{A}/y^A<0}\big)(D^{A}/y^A)(y^A/y^B) + (1+f\mathbf 1_{D^{B}/y^B<0}) (D^{B}/y^B)\right].
		\end{align*}
		The second and third constraints of \eqref{eq:TraderProblem} can be written as $D^i/y^i< 1$, $i\in \{A,B\}$. Thanks to Assumption \ref{as:PricingFunction}-(iii), the first constraint of \eqref{eq:TraderProblem} is equivalent to the following constraint:
		\begin{align*}
			F(y^A/y^B,1) = F\big((y^A/y^B)(1-D^A/y^A),1-D^B/y^B\big).
		\end{align*}
		Therefore, \eqref{eq:TraderProblem} is equivalent to the following optimization problem:
		\begin{align}\label{eq:TraderProblemReduced}
			\begin{array}{rl}
				\underset{d^A,d^B}{\max} & \alpha \beta\big(1+f\mathbf 1_{d^A<0}\big)d^A + (1+f\mathbf 1_{d^B<0}) d^B \\				\text{subject to} & F(\beta,1) = F\big(\beta(1-d^A),1-d^B\big),\\
				& d^A < 1,~~d^B < 1,
			\end{array}
		\end{align}
		where $\alpha$ stands for $a/b$, the investor's belief of the exchange rate, and $\beta$ stands for $y^A/y^B$, the deposit ratio in the liquidity pool before the trade. When the optimal solution of \eqref{eq:TraderProblemReduced} exists, we denote it as $\big(\varphi^A(\alpha,\beta),\varphi^B(\alpha,\beta)\big)$, where we highlight the dependence of the optimal solution on $\alpha$ and $\beta$. Then, the optimal solution of \eqref{eq:TraderProblem}, denote as $(\hat D^{A},\hat D^{B})$, is $\hat D^i = \varphi^i(a/b,y^A/y^B)y^i$, $i\in \{A,B\}$.
		
		Next, we solve \eqref{eq:TraderProblemReduced}. By Assumption \ref{as:PricingFunction}-(i) and the first constraint of \eqref{eq:TraderProblemReduced}, for any feasible $(d_A,d_B)$, it is either the case in which $d_A=d_B=0$, or the case in which $d_A>0$ and $d_B<0$, or the case in which $d_A<0$ and $d_B>0$. Thus, we can consider the maximization of the objective function of \eqref{eq:TraderProblemReduced} in these three cases and identify the case that yields the largest value.
		For the first case, the objective function value is 0.
		
		For the second case, the optimization problem becomes
		\begin{align}\label{eq:TraderProblemReducedCase2}
			\begin{array}{rl}
				\underset{d^A,d^B}{\max} & \alpha \beta d^A +  (1+f)d^B \\ 
				\text{subject to} & F(\beta,1) = F\big(\beta(1-d^A),1-d^B\big),\\ 
				& 0<d^A<1,d^B < 0.
			\end{array}
		\end{align}
		Recall the discussions following Assumption \ref{as:PricingFunction}. There exists a maximal interval $[0,K)$ for some $K\in(0,1]$ on which $h(z)$ is uniquely defined by $F(\beta,1)=F(\beta(1-z),1-h(z))$, and $h$ is strictly decreasing and continuously differentiable on $[0,K)$ with $h(0)=0$. In addition, if $K<1$, then $\lim_{z\uparrow K}h(z)=-\infty$. Thus, problem \eqref{eq:TraderProblemReducedCase2} is equivalent to
		\begin{align}\label{eq:TraderProblemReducedCase2Equi1}
			\begin{array}{rl}
				\underset{d^A}{\max} & \alpha \beta d^A + (1+f) h(d^A) \\
				\text{subject to} & 0<d^A < K.
			\end{array}
		\end{align}
		Differentiating the equation defining $h$, we derive
		\begin{align*}
			h'(z) = -\frac{\beta F_x(\beta(1-z),1-h(z))}{ F_y(\beta(1-z),1-h(z))} = -\beta G\left(\frac{\beta(1-z)}{1-h(z)}\right),\quad z\in [0,K),
		\end{align*}
		where the second equality is due to Assumption \ref{as:PricingFunction}-(ii). Because $h(z)$ is strictly decreasing in $z$ and $G(u)$ is strictly decreasing in $u$, we conclude that $h'$ is strictly decreasing on $[0,K)$. Moreover, $h'(0) = -\beta G(\beta)$ and $\lim_{z\uparrow K}h'(z) = -\beta \lim_{u\downarrow 0} G(u) = -\infty$, where the first equality in the above calculation of the limit is the case because it is either the case $K=1$ or the case in which $K<1$ and $\lim_{z\uparrow K}h(z) = -\infty$, and the second equality is due to Assumption \ref{as:PricingFunction}-(ii). Now, denoting the objective function of \eqref{eq:TraderProblemReducedCase2Equi1} as $V(d^A)$, we have
			$
			V'(d^A) = \alpha \beta  + (1+f) h'(d^A).
			$
		Thus, if $G(\beta)\ge \alpha/(1+f)$, $V'(d^A)<0$ for all $d^A\in (0,K)$, showing that $V(d^A)<V(0)=0$ for all $d^A\in (0,K)$. If $G(\beta)< \alpha/(1+f)$, there exists a unique maximizer $\hat d^A\in (0,K)$, which solves the equation $V'(\hat d^A)=0$, namely,
		\begin{align}\label{eq:TraderProblemReducedCase2OptimaldA}
			\frac{1- \hat d^A}{1-h(\hat d^A)} = \frac{1}{\beta} G^{-1}\big(\alpha/(1+f)\big),
		\end{align}
		and the optimal value $V(\hat d^A)>V(0)=0$. The optimal $d^B$ for \eqref{eq:TraderProblemReducedCase2}, denoted as $\hat d^B$, is then $h(\hat d^A)$. Combining \eqref{eq:TraderProblemReducedCase2OptimaldA} and the definition of $h$, we immediately conclude that $(\hat d^A,\hat d^B)$ is actually the unique solution of \eqref{eq:EquationsOptimalTradingForAssetA}. Consequently, we can solve $\hat d^B$ from
		\begin{align}
			F\left(G^{-1}\big(\alpha/(1+f)\big) (1-\hat d^B),1-\hat d^B\right)=F(\beta ,1), \label{eq:TraderProblemReducedCase2OptimaldB}
		\end{align}
		which immediately implies that $\hat d^B$ is continuous in $(\alpha,\beta)$ with $G(\beta)< \alpha/(1+f)$, so $\hat d^A$ is also continuous. Moreover, because $F(x,y)$ is strictly increasing in $x$ and $y$ and $G$ is strictly decreasing, we immediately conclude that $\hat d^B$ is strictly decreasing in both $\alpha$ and $\beta$ with $G(\beta)< \alpha/(1+f)$ and
		\begin{align*}
			\lim_{\alpha\downarrow G(\beta)(1+f)}\hat d^B = 0,\quad  \lim_{\beta\downarrow G^{-1}(\alpha/(1+f))}\hat d^B = 0.
		\end{align*}
		On the other hand, dividing both sides of \eqref{eq:TraderProblemReducedCase2OptimaldB} by $\beta$ and recalling Assumption \ref{as:PricingFunction}-(iii) and \eqref{eq:TraderProblemReducedCase2OptimaldA}, we conclude that $\hat d^A$ solves
			$
			F\left(1-\hat d^A,(1-\hat d^A)/G^{-1}\big(\alpha/(1+f)\big)\right)=F(1 ,1/\beta),
			$
		which implies that $\hat d^A$ is strictly increasing in both $\alpha$ and $\beta$ with $G(\beta)< \alpha/(1+f)$ and
			$
			\lim_{\alpha\downarrow G(\beta)(1+f)}\hat d^A = 0
			,$ $\lim_{\beta\downarrow G^{-1}(\alpha/(1+f))}\hat d^A = 0.
			$
		
		Similarly, for the case in which $d_A<0$ and $d_B>0$, the optimization problem \eqref{eq:TraderProblemReduced} becomes
		\begin{align}\label{eq:TraderProblemReducedCase3}
			\begin{array}{rl}
				\underset{d^A,d^B}{\max} & \alpha \beta (1+f) d^A +  d^B \\ 
				\text{subject to} & F(\beta,1) = F\big(\beta(1-d^A),1-d^B\big),\\ 
				& d^A<0,0<d^B < 1.
			\end{array}
		\end{align}
		When $G(\beta)\le \alpha(1+f)$, the objective function in \eqref{eq:TraderProblemReducedCase3} is strictly less than 0 for any feasible $(d^A,d^B)$. When $G(\beta)>\alpha(1+f)$, \eqref{eq:TraderProblemReducedCase3} admits a unique optimal solution $(\hat d^A,\hat d^B)$, which solves
		\begin{align*}
			\frac{1- \hat d^A}{1-\hat d^B} = \frac{1}{\beta} G^{-1}\big(\alpha (1+f)\big),\quad  F\big(\beta(1-\hat d^A),1-\hat d^B\big)=F(\beta,1),
		\end{align*}
		and the optimal value is strictly larger than 0. Moreover, (i) both $\hat d^A$ and $\hat d^B$ are continuous in $(\alpha,\beta)$ with $G(\beta)>\alpha(1+f)$, (ii) $\hat d^A$ is strictly increasing and $\hat d^B$ is strictly decreasing in both $\alpha$ and $\beta$ with $G(\beta)>\alpha(1+f)$, and (iii)
			$
			\lim_{\alpha\uparrow G(\beta)/(1+f)}\hat d^i=0,~~ \lim_{\beta\uparrow G^{-1}(\alpha(1+f))}\hat d^i=0,~~ i\in \{A,B\}.
			$
		
		Combining the above discussions of the three cases, we complete the proof.
	\end{pfof}
	
	\begin{pfof}{Theorem \ref{th:OptimalSolution}}
		We first show that $\mathbb{T}_k$ defines a mapping from ${\cal X}$ to itself. Fix any $k\in \{0,1,\dots, N-1\}$ and any bounded, measurable function $J\in {\cal X}$. Denote
		\begin{align*}
			H_k(\vecomega;J):&=\big(R^p_{k+1}\big)^{1-\gamma}J(S_{k+1},Z_{k+1})+\frac{\delta^{N-k-1}}{1-\delta^N} U\big(R^p_{k+1}\big).
		\end{align*}
		Recall \eqref{eq:StateZDynamics}, \eqref{eq:AMMGrossReturnFunction}, \eqref{eq:StateSDynamics}, and that $L^M$ and $L^S$ are continuous, we conclude that $\mathbb{E}\left[H_k(\vecomega;J)|S_k=s,Z_k=z\right]$ is measurable in $(\vecomega,s,z)$. Moreover, note that $R^p_{k+1}\le R^M_{k+1}+R^A_{k+1}+R^B_{k+1}+R_f$ and thus
		\begin{align*}
			U(R^p_{k+1})&\le U(R^M_{k+1}+R^A_{k+1}+R^B_{k+1}+R_f) = 4^{1-\gamma}U\left(\frac{1}{4}(R^M_{k+1}+R^A_{k+1}+R^B_{k+1}+R_f)\right) - U(4)\\
			&\le 4^{1-\gamma}\left|U\left(\frac{1}{4}(R^M_{k+1}+R^A_{k+1}+R^B_{k+1}+R_f)\right)\right| +|U(4)|,
		\end{align*}
		where the first inequality is due to the monotonicity of $U$.
		By Assumption \ref{as:GrowthCondition}-(i),
		\begin{align*}
			\left|U\left(\frac{1}{4}(R^M_{k+1}+R^A_{k+1}+R^B_{k+1}+R_f)\right)\right|
		\end{align*}
		is integrable for each fixed $(S_k,Z_k)=(s,z)$, so by the dominated convergence theorem, the boundedness of $J$, and the continuity of $H_k(\vecomega;J)$ in $\vecomega$, we conclude that $\mathbb{E}\left[H_k(\vecomega;J)|S_k=s,Z_k=z\right]$ is continuous in $\vecomega$ for each fixed $(s,z)$. Because $\Omega$ is separable, we conclude that
		\begin{align*}
			(\mathbb{T}_k J)(s,z)=\sup_{\vecomega\in \Omega}    \mathbb{E}\left[H_k(\vecomega;J)|S_k=s,Z_k=z\right]
		\end{align*}
		is measurable in $(s,z)$. Moreover, by Assumption \ref{as:GrowthCondition}-(i), $(\mathbb{T}_k J)(s,z)$ is bounded in $(s,z)$.

		Next, we consider the case with $\gamma=1$. It is straightforward to see that $\mathbb{T}_k$ is a contraction mapping on ${\cal X}$ with the Lipschitz constant $\delta$ for every fixed $k\in \{0,1,\dots, N-1\}$. Moreover,
		\begin{align*}
			(\hat{\mathbb{T}}_0 J)(s,z)= (\mathbb{T}_0 J)(s,z) + \sup_{c_0\in (0,1)}\left\{U(c_0) +  \frac{\delta^N}{1-\delta^N} U(1-c_0)\right\},
		\end{align*}
		which yields that $\hat{\mathbb{T}}_0 \mathbb{T}_1\cdots \mathbb{T}_{N-1}$ is a contraction mapping on ${\cal X}$ with the Lipschitz constant $\delta^N$. Thus, $\hat{\mathbb{T}}_0 \mathbb{T}_1\cdots \mathbb{T}_{N-1}$ admits a unique fixed point $v_0^*$. Moreover, the standard verification theorem shows that for every $k\in\{0,1,\dots, N-1\}$, $v^*_k$ is the LP's optimal expected utility for her problem \eqref{eq:LPProblem}, starting from unit wealth at time $k$, where $v_k^*:=\mathbb{T}_k\cdots \mathbb{T}_{N-1}v_0^*$, $k=1,\dots, N-1$. Furthermore, the optimal percentage of wealth consumed at time 0 is
		\begin{align*}
			c_0^*&=\underset{c_0\in (0,1)}{\mathrm{argmax}}\left\{U(c_0) + (1-c_0)^{1-\gamma} (\mathbb{T}_0 v_1^*)(s,z) + \frac{\delta^N}{1-\delta^N} U(1-c_0)\right\} = \delta^{N}
		\end{align*}
		and the optimal percentage of wealth invested in AMM and CEX is
		\begin{align*}
			\vecomega_k^* & = \underset{\vecomega_k\in \Omega}{\mathrm{argmax}}\;\mathbb{E}\Big[ \big(R^p_{k+1}\big)^{1-\gamma}J(S_{k+1},Z_{k+1})+\frac{\delta^{N-k-1}}{1-\delta^N} U\big(R^p_{k+1}\big) \mid S_k=s,Z_k=z\Big]\\
			&  = \underset{\vecomega_k\in \Omega}{\mathrm{argmax}}\;\mathbb{E}\Big[ U\big(R^p_{k+1}\big) \mid S_k=s,Z_k=z\Big].
		\end{align*}
		Because $U$ is strictly concave and $R^p_{k+1}$ is linear in $\vecomega_k$ and $\Omega$ is compact, $\omega_k^*$ uniquely exists.
		
		Next, we consider the case with $\gamma \in (0,1)$. Following the same proof as for the case of $\gamma=1$ and recalling Assumption \ref{as:GrowthCondition}, we can show that for every $k=0,1,\dots, N-1$, $\mathbb{S}_k$ is a contraction mapping on ${\cal X}_+$ with the Lipschitz constant $\delta \bar{R}^{1-\gamma}$. For each $a\ge 0$, denote
		\begin{align}
			H(c,a):=c^{1-\gamma} + a (1-c)^{1-\gamma}.\label{eq:ConsumptionAggregator}
		\end{align}
		Straightforward calculation yields that
		$
		h(a):= \sup_{c\in (0,1)} H(c,a) = (1+a^{1/\gamma})^\gamma
		$
		and the unique maximizer of $H(c,a)$ in $c\in (0,1)$ is $(1+a^{1/\gamma})^{-1}$ when $a>0$. Moreover, $h$ is strictly increasing and $h'(a)=\left(\frac{a^{1/\gamma}}{1+a^{1/\gamma}}\right)^{1-\gamma}\le 1$, $a\ge 0$. As a result,
		$
		\hat{\mathbb{S}}_0 \mathbb{S}_1\cdots \mathbb{S}_{N-1} = h \mathbb{S}_0 \mathbb{S}_1\cdots \mathbb{S}_{N-1}
		$
		is a contraction mapping on ${\cal X}_+$ with Lipschitz constant $(\delta \bar{R}^{1-\gamma})^N$. Thus, $\hat{\mathbb{S}}_0 \mathbb{S}_1\dots \mathbb{S}_{N-1}$ has a unique fixed point $\tilde v_0^*$ in ${\cal X}_+$ and starting from any $\tilde v_0$, the iterative sequence $(\hat{\mathbb{S}}_0 \mathbb{S}_1\dots \mathbb{S}_{N-1})^n \tilde v_0$, $n=1,\dots,$ converges to $\tilde v^*$ exponentially. Furthermore, it is straightforward to see that $\hat{\mathbb{S}}_0 \mathbb{S}_1\cdots \mathbb{S}_{N-1}$ is monotone, so
		\begin{align*}
			\tilde v^*_0=\hat{\mathbb{S}}_0 \mathbb{S}_1\cdots \mathbb{S}_{N-1}\tilde v^*_0\ge \hat{\mathbb{S}}_0 \mathbb{S}_1\cdots \mathbb{S}_{N-1}0=1.
		\end{align*}
		In consequence, $\tilde v^*_{N-1}:=\mathbb{S}_{N-1}\tilde v_0^*\ge \mathbb{S}_{N-1}1>0$, and mathematical induction yields that for every $k=1,\dots, N-2$,
		$
		\tilde v^*_k:= \mathbb{S}_k \mathbb{S}_{k+1}\cdots \mathbb{S}_{N-1} \tilde v_0^*= \mathbb{S}_k\tilde v^*_{k+1}>0.
		$
		The standard verification theorem shows that for every $k\in\{0,1,\dots, N-1\}$, $v^*_k$ as in \eqref{eq:OptimalValueGammaNeq1} is the LP's optimal expected utility for her problem \eqref{eq:LPProblem}, starting from unit wealth at time $k$. Furthermore, the optimal percentage of wealth consumed at time 0 is the maximizer of $H(c,\mathbb{S}_0\tilde v^*_1)$ in $c\in (0,1)$. Because $\tilde v^*_1>0$ and thus $\mathbb{S}_0\tilde v^*_1>0$, the maximizer uniquely exists and is
		$
		c_0^*=(1+\left(\mathbb{S}_0 \tilde v_1^*(s,z)\right)^{1/\gamma})^{-1}.
		$
		The optimal percentage of wealth invested in AMM and CEX is
		\begin{align*}
			\vecomega_k^* & = \underset{\vecomega_k\in \Omega}{\mathrm{argmax}}\;\mathbb{E}\Big[ \big(R^p_{k+1}\big)^{1-\gamma}\tilde v^*_{k+1}(S_{k+1},Z_{k+1}) \mid S_k=s,Z_k=z\Big]\\
			& = \underset{\vecomega_k\in \Omega}{\mathrm{argmax}}\;\mathbb{E}\Big[ U(R^p_{k+1}\big)\tilde v^*_{k+1}(S_{k+1},Z_{k+1}) \mid S_k=s,Z_k=z\Big].
		\end{align*}
		
		Finally, we consider the case with $\gamma>1$. Following the proof in the case of $\gamma=1$, we can show that for every $k=0,1,\dots, N-1$, $\mathbb{S}_k$ maps ${\cal X}_+$ to ${\cal X}_+$. Moreover,
		it is straightforward to see that $\mathbb{S}_k$ is concave and monotone. It is also easy to observe that
		\begin{align*}
			h^-(a):= \inf_{c\in (0,1)}H(c,a) = (1+a^{1/\gamma})^\gamma
		\end{align*}
		is concave and monotone in $a\ge 0$. As a result,
		$
		\hat{\mathbb{S}}_0 \mathbb{S}_1\cdots \mathbb{S}_{N-1} = h^- \mathbb{S}_0 \mathbb{S}_1\cdots \mathbb{S}_{N-1}
		$
		is a concave, monotone mapping on ${\cal X}_+$. It is straightforward to see that
		$
		\hat{\mathbb{S}}_0 \mathbb{S}_1\cdots \mathbb{S}_{N-1}0= h^-(0)= 1>0.
		$
		On the other hand, for any constant $L>0$, we have
		\begin{align}
			(\mathbb{S}_k L)(s,z) &= \delta L \inf_{\vecomega_k\in \Omega} \mathbb{E}\Big[ \big(R^p_{k+1}\big)^{1-\gamma} \mid S_k=s,Z_k=z\Big]\notag\\
			& = \delta L \left(\sup_{\vecomega_k\in \Omega} \left(\mathbb{E}\Big[ \big(R^p_{k+1}\big)^{1-\gamma} \mid S_k=s,Z_k=z\Big]\right)^{1/(1-\gamma)}\right)^{1-\gamma}\notag\\
			& = \delta L \left(\sup_{\vecomega_k\in \Omega} U^{-1}\left(\mathbb{E}\Big[ U\big(R^p_{k+1}\big) \mid S_k=s,Z_k=z\Big]\right)\right)^{1-\gamma}\notag\\
			&\le  \delta L \bar R^{1-\gamma}.\label{eq:GammaLarger1OperatorUpperbound}
		\end{align}
		Consequently,
		\begin{align*}
			(\hat{\mathbb{S}}_0 \mathbb{S}_1\cdots \mathbb{S}_{N-1}L)(s,z) = h^-\left((\mathbb{S}_0 \mathbb{S}_1\cdots \mathbb{S}_{N-1})L(s,z)\right)\le h^-\big((\delta \bar R^{1-\gamma})^N L\big)<L
		\end{align*}
		for sufficiently large $L$, where the first inequality is due to \eqref{eq:GammaLarger1OperatorUpperbound} and the monotonicity of $\mathbb{S}_k$, and the second inequality is because $\delta \bar R^{1-\gamma}<1$ as imposed in Assumption \ref{as:GrowthCondition}-(ii). Then, by Theorem 3.1 in \citet{Du1990:FixedPoints}, $\hat{\mathbb{S}}_0 \mathbb{S}_1\dots \mathbb{S}_{N-1}$ has a unique fixed point $\tilde v_0^*$ on ${\cal X}_+$. Moreover, for any $\tilde v_0\in {\cal X}_+$, the iterative sequence $(\hat{\mathbb{S}}_0 \mathbb{S}_1\dots \mathbb{S}_{N-1})^n \tilde v_0$, $n=1,2,\dots$ converges to $\tilde v_0^*$ exponentially.  Following the same proof as in the case of $\gamma\in (0,1)$, we can show that $\tilde v^*_0\ge 1$ and that $v^*_k$ as in \eqref{eq:OptimalValueGammaNeq1} is the LP's optimal expected utility for her problem \eqref{eq:LPProblem}, starting from unit wealth at time $k$, $k=0,\dots, N-1$. Furthermore, the optimal percentage of wealth consumed at time 0 and the optimal percentage of wealth invested in AMM and the CEX uniquely exist and are as given in \eqref{eq:OptimalActionGammaNeq1}.
	\end{pfof}

	\bibliography{LongTitles,BibFile}
	
\end{document}